\documentclass[11pt,a4paper]{article}

\usepackage{jheppub}

\usepackage{bm}
\usepackage{amsthm}
\usepackage{mathtools}

\usepackage{xcolor}
\usepackage{booktabs}
\usepackage{enumitem}

\usepackage{orcidlink}
\usetikzlibrary{decorations.markings}

\usepackage{csquotes}

\apptocmd{\thebibliography}{%
  \setlength{\itemsep}{4pt plus 0.2ex minus 0.05ex}%
}{}{}

\newtheorem{theorem}{Theorem}[section]
\newtheorem{proposition}[theorem]{Proposition}

\newtheorem{corollary}[theorem]{Corollary}

\newtheorem{assumption}[theorem]{Assumption}
\theoremstyle{remark}

\newcommand{\diag}{\operatorname{diag}}
\newcommand{\order}[1]{\mathcal{O}(#1)}

\newcommand{\TT}{\mathrm{TT}}

\newcommand{\iu}{\mathrm{i}\mkern1mu}
\newcommand{\dd}{\mathrm{d}}
\newcommand{\cB}{\mathcal{B}}
\newcommand{\cH}{\mathcal{H}}
\newcommand{\cO}{\mathcal{O}}

\newcommand{\cX}{\mathcal{X}}

\newcommand{\Ker}{\operatorname{Ker}}
\newcommand{\spec}{\operatorname{spec}}
\newcommand{\STr}{\operatorname{STr}}

\title{A Spectral Criterion for Emergent Gravity}

\author[a]{Jinku Guo\,\orcidlink{0009-0000-6600-6171}}
\affiliation[a]{Northwestern Polytechnical University, Xi'an 710072, China}
\emailAdd{guojk@nwpu.edu.cn}

\abstract{%
We study long-range tensor response and its coupling to matter through
sourced spectral equations on fixed spacetime. For a response with a
positive spectral representation, the weight at zero is extracted by an
infrared limit. At an isolated critical mode, this weight is determined
by the propagation slope and the projection of the source onto the
critical space. Deriving the kernel and source maps from the same
equations allows the residue to be followed through channel elimination
and finite approximation. For isolated stationary matter, conservation
relates the leading source to the total energy, including interaction
contributions. We obtain separate estimates for the potential and force
tails under the stated spectral assumptions. Compact spectral examples
give explicit endpoint derivatives and tensor overlaps. We also classify
the null-shell measure of a covariant stress-tensor realisation.
The results provide a criterion for long-range matter exchange, with
its gravitational interpretation determined by the tensor structure,
coupling and infrared consistency conditions.
}

\keywords{Spectral representations; Tensor response; Critical modes;
Long-range interactions; Emergent gravity}

\hypersetup{
  pdftitle={A Spectral Criterion for Emergent Gravity},
  pdfauthor={Jinku Guo},
  pdfsubject={Generated sources, matter response, and spectral certification},
  pdfkeywords={Spectral representations; Tensor response; Critical modes; Long-range interactions; Emergent gravity}
}

\makeatletter
\gdef\@fpheader{}
\makeatother

\let\paperTableOfContents\tableofcontents
\renewcommand\tableofcontents{\begingroup\small\paperTableOfContents\endgroup}
\renewcommand\afterTocSpace{\vskip2pt}
\renewcommand\afterTocRuleSpace{\clearpage}

\begin{document}

\let\acknowledgmentsOrig\acknowledgments
\renewenvironment{acknowledgments}{\acknowledgmentsOrig}{}

\maketitle\flushbottom

\section{Introduction}
\label{sec:introduction}

In a spectral description, a long-range interaction is determined by
the low-lying response spectrum and its coupling to matter. An internal
scale selected by a spectral equation enters this calculation through
the modal operators, while propagation depends on their sourced
response at external momentum. We study this relation for a collective
tensor coupled to the total matter energy--momentum tensor on fixed
spacetime. The compact internal geometry specifies the modal operators.

The starting objects are spectral operators, their projectors, Gaussian
scale maps and convergent spectral readouts. A source can change the
operator, its weights, the volume and the comparison of modal spaces.
Differentiating the defining equation and the observable together gives
both the induced change of the spectral state and the direct source
dependence of the observable. These derivatives determine the kernel,
source and readout maps used below.

When the response admits a positive spectral measure, its weight at
zero can be extracted even in the presence of a continuous threshold.
For an isolated analytic response, a kernel calculation resolves the
critical space.
Its propagation slope and source projections determine the residue.
We prove this statement both for a reciprocal finite-rank critical
space and for a simple, possibly nonnormal Fredholm mode. Complete
Schur elimination and quantitative root and residue estimates connect
the exact statements to truncated calculations.

Matter exchange is calculated from the complete vertex. For an
isolated stationary configuration, conservation fixes its leading
long-wavelength source by the total energy, including interaction and
binding contributions. We give the finite-size bound and retain
constraints and direct terms in the exchange calculation. Separate
spectral estimates control the potential and its gradient. These
results specify when a certified tensor response mediates a
long-range matter interaction and what further conditions support
a helicity-two particle or Einstein description.

\subsection{Relation to earlier approaches}
\label{sec:intro-related-work}

Induced-gravity constructions obtain local curvature terms by
integrating out matter on a background geometry
\cite{Sakharov1967,Visser2002}. Their coefficients belong to a local
metric effective action. The present criterion instead concerns a
nonlocal sourced response. Separating local counterterms from the
spectral part is essential to this distinction.

Composite and pregeometric models seek tensor degrees of freedom built
from nongravitational constituents \cite{Amati1981,Amati1982}.
Maitiniyazi and Yamada compute induced kinetic terms for auxiliary tensor
fields in fermionic and scalar models \cite{Maitiniyazi2026emergence}.
Fehre et al. compute a graviton spectral function using a Lorentzian
functional flow \cite{Fehre2023prl}. Comparing these constructions requires the
observable, physical state space and source normalisation to be
specified. A collective-field propagator and a complete stress-tensor
correlator are different objects.

Massless-field representation theory constrains the states accessible
to a covariant tensor \cite{Weinberg2020Massless}. We establish its
null-shell consequence directly for a positive Wightman tensor
measure. For a complete conserved Lorentz-covariant stress tensor in
the stated positive-Hilbert-space realisation, the null-shell measure
is longitudinal and vanishes on conserved TT sources. This is an exact
result for that class, separate from the criterion for other
independently defined responses.

The subsequent particle interpretation involves further established
results. Weinberg's soft theorem constrains helicity-two matter
couplings under its scattering assumptions \cite{Weinberg1965};
consistent local deformations constrain the two-derivative completion
\cite{Deser1970,Boulanger2000}. The Weinberg--Witten theorem tests
the stress-tensor matrix elements and translation charge of a
massless particle \cite{WeinbergWitten1980}. We state these
conditions at the corresponding interpretation step.

\subsection{Spectral examples and approximation estimates}
\label{sec:intro-spectral}

A compact-spectrum calculation gives explicit source derivatives.
At fixed matter content, the weighted TT sum obeys an inverse-radius
law and has a unique radius endpoint. Its full derivative gives both
the common endpoint response and carrier-dependent mass changes.
The implemented mass-shift ladder has a separate, unique internal
crossing. Scalar and long-root source calculations exhibit explicit
compact tensor overlaps. These results concern their specified
spectral variables; the response criterion determines what is needed
to identify a propagation pole.

Gaussian scale composition has two roles. Internal weights participate
in the defining spectral equations, whereas external averaging acts on
an already formed response. We prove that finite external averaging
preserves its zero-mass atom. A composite-source functional flow gives
a further realisation of the same-source response identity.

For finite calculations, the error analysis compares operators on a
common space, follows the solution branch and locates the
displaced critical root before estimating its residue. The
finite-volume test uses cumulative spectral weight and includes
concentration across many individually vanishing levels. A free-field
reference and small matrix and spectral benchmarks make these
distinctions explicit.

\subsection{Scope and organisation}
\label{sec:intro-companion-interface}

Related work classifies long-range correlations in QFT operator
channels \cite{Guo2026GeneralChannels}. The compact-spectrum
papers supply the internal spectral setting and scale closure
\cite{Guo2026CompactSpectrumI,Guo2026CompactSpectrumII}.
The definitions and proofs needed for the present criterion are given
here, so it can be read independently of these companion applications.

Section~\ref{sec:spectral-datum} constructs sourced spectral objects
and the covariant stress-tensor realisation.
Section~\ref{sec:gaussian-frg} derives the same-source identity,
Gaussian averaging and the composite-source flow.
Section~\ref{sec:atom-kernel} treats critical modes and residue estimates.
Section~\ref{sec:gravity-gates} connects the response to matter exchange
and its gravitational interpretation.
Section~\ref{sec:numerical-protocol} summarises the calculations and
conclusions. Five appendices contain detailed proofs, examples and convergence estimates.

\section{The spectral system and its source variations}
\label{sec:spectral-datum}

The primary data are the generated geometry, its modal algebra, the
spectral operators and their Gaussian scale maps. A source deforms these
objects, and a physical readout is a specified convergent spectral
functional. We use the compact spectral system
\cite{Guo2026CompactSpectrumI,Guo2026CompactSpectrumII}; the local
definitions and calculations needed below are given here.

\subsection{Modal data and Gaussian scale composition}
\label{sec:native-spectral-system}

Write $M_L=\mathbb{RP}^3$ with round radius $L$, volume
$V_3(L)=\pi^2L^3$, and scalar eigenvalues
$\lambda_l=l(l+2)/L^2$, multiplicities $(l+1)^2$, $l$ even.
The vector and spinor carriers are kept separately. A modal label $j$
records the carrier, eigenmode and matter multiplicity; its spectral
projection and multiplication coefficients belong to the datum, not
just its eigenvalue. For orthonormal scalar modes, for example,
\begin{equation}
 \mathfrak m_{ij}^{\,k}
 =\int_{M_L}\overline{u_k}\,u_i u_j\,\dd v_L .
 \label{eq:native-modal-product}
\end{equation}
The products and their spectral projections determine the corresponding
modal couplings.

The positive spectral operator $A_L$ of a specified carrier defines
$T_\upsilon=e^{-\upsilon A_L}$, $\upsilon\geq0$. Its spectral resolution gives
\begin{equation}
 T_{\upsilon_1}T_{\upsilon_2}=T_{\upsilon_1+\upsilon_2},\qquad
 T_\upsilon=\sum_j e^{-\upsilon\lambda_j}P_j .
 \label{eq:native-internal-gaussian}
\end{equation}
These identities hold strongly, including on any zero eigenspace.
They give the internal Gaussian composition used in spectral readouts.
Section~\ref{sec:gaussian-characterisation} proves the corresponding
source-semigroup characterisation. Multiplication of an already formed
external response by a Gaussian is a different operation.

To differentiate a geometric source, first transport the modal spaces
to one Hilbert space. For a homothety this transport includes the
$L^{3/2}$ volume factor in the pullback of scalar wavefunctions.
It gives $A_L=L^{-2}A_1$ on the transported domain.
For a finite, smoothly transported spectral subspace, denote its
compressed operator by $A_N(h)$ and its readout weight by $W_N(h)$.
A basic spectral functional is
\begin{equation}
 \mathcal Q_N(h)=\frac1{V(h)}
       \operatorname{Tr}\!\left[W_N(h)f(A_N(h))\right].
 \label{eq:native-spectral-readout}
\end{equation}
The weight includes the channel coefficients and Gaussian
factor. Its variation, the volume variation, and the transport of the
retained subspace are all part of the source rule. One may instead put
the Gaussian in $f$, provided its source dependence is then
differentiated there.

\begin{proposition}[Complete finite spectral source derivative]
\label{prop:native-spectral-derivative}
Suppose $A_N,W_N,V$ are twice continuously differentiable, $V>0$,
and $f$ is holomorphic on a neighbourhood of the relevant spectra.
Set $f=f(A_N)$, $v_a=\partial_a\log V$, and
\[
 F_a=\operatorname{Tr}[W_a f+W Df(A_N)[A_a]].
\]
Then
\begin{align}
 \mathcal Q_a&=V^{-1}F_a-v_a\mathcal Q, \notag\\
 \mathcal Q_{ab}
 &=V^{-1}F_{ab}-v_a\mathcal Q_b-v_b\mathcal Q_a
                       -V^{-1}V_{ab}\mathcal Q,                 \label{eq:native-full-spectral-derivative}\\
 F_{ab}&=\operatorname{Tr}\big[
 W_{ab}f+W_aDf[A_b]+W_bDf[A_a]\notag\\
 &\hspace{29mm}+W D^2f[A_a,A_b]+W Df[A_{ab}]\big].\notag
\end{align}
The derivatives of $A_N$ include those of its comparison maps.
\end{proposition}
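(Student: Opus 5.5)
The plan is to write $\mathcal Q=V^{-1}F$ with $F(h)=\operatorname{Tr}[W_N(h)f(A_N(h))]$. The formulas then follow from three ingredients: the product rule, the quotient rule for $V^{-1}$, and the Fréchet differentiability of the holomorphic functional calculus $A\mapsto f(A)$. Since $N$ is finite, every object is a finite matrix. Trace and multiplication are bounded bilinear maps, so they commute with differentiation.

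The first and only nontrivial step concerns the functional calculus. We represent
\[
f(A)=\frac{1}{2\pi\iu}\oint_\Gamma f(z)(z-A)^{-1}\,\dd z ,
\]
where $\Gamma$ encloses the spectrum of $A_N(h)$. Because $A_N$ is continuous in $h$, the same $\Gamma$ works uniformly on a neighbourhood of each $h$. The resolvent identity gives
\[
\partial_a(z-A)^{-1}=(z-A)^{-1}A_a(z-A)^{-1}.
\]
Differentiating once more yields the two terms $R A_a R A_b R+R A_b R A_a R$ together with $R A_{ab} R$, where $R=(z-A)^{-1}$. We then define
\[
Df(A)[X]=\frac{1}{2\pi\iu}\oint f(z)RXR\,\dd z,
\]
and $D^2f(A)[X,Y]$ as the symmetrised contour integral of $RXRYR$. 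With these definitions,
\[
\partial_a f(A_N)=Df[A_a],\qquad \partial_a\partial_b f(A_N)=D^2f[A_a,A_b]+Df[A_{ab}].
\]
Differentiation under the integral is justified because $z\mapsto R$ is bounded uniformly on $\Gamma$ and jointly continuous in $(z,h)$, and $A_N$ is $C^2$. This uniform resolvent bound is the step I expect to need the most care, since the contour must stay away from the moving spectrum. Once it holds, the rest is algebra.

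Next, the product rule gives $F_a=\operatorname{Tr}[W_af+W\,Df[A_a]]$. Differentiating again produces
\[
F_{ab}=\operatorname{Tr}\bigl[W_{ab}f+W_aDf[A_b]+W_bDf[A_a]+W D^2f[A_a,A_b]+W\,Df[A_{ab}]\bigr],
\]
which is the stated formula.

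Finally, from $V\mathcal Q=F$ we get $V_a\mathcal Q+V\mathcal Q_a=F_a$, that is, $\mathcal Q_a=V^{-1}F_a-v_a\mathcal Q$. Differentiating the identity $V\mathcal Q=F$ twice gives
\[
V_{ab}\mathcal Q+V_a\mathcal Q_b+V_b\mathcal Q_a+V\mathcal Q_{ab}=F_{ab}.
\]
Dividing by $V>0$ yields the second line of \eqref{eq:native-full-spectral-derivative}.

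The closing sentence of the statement, that the derivatives of $A_N$ include those of its comparison maps, needs a remark rather than an argument. $A_N(h)$ is by definition the compression transported to the fixed space, and the formulas above only use its $h$-derivatives. Any transport, volume or subspace-comparison dependence is therefore already absorbed into $A_a$ and $A_{ab}$, as well as into $W_a$ and $V_a$.
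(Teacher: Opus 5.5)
Your proposal is correct and follows essentially the same route as the paper: you represent $f(A_N)$ by the Cauchy contour integral so that the resolvent identity yields $Df[A_a]$ and the second derivative with both orderings of the variations, and then you differentiate $V\mathcal Q=\operatorname{Tr}[W f(A_N)]$ twice and divide by $V>0$. The one point to state precisely is that your ``symmetrised'' $D^2f[X,Y]$ must mean the contour integral of the plain sum $RXRYR+RYRXR$, with no factor $\tfrac12$; this is what your identity $\partial_a\partial_b f(A_N)=D^2f[A_a,A_b]+Df[A_{ab}]$ requires.
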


\begin{proof}
Differentiate the finite-dimensional functional calculus and then the
product $V\mathcal Q=\operatorname{Tr}(Wf(A_N))$ twice.
For noncommuting source variations one can use
\[
 Df(A)[H]=\frac1{2\pi i}\oint_\Gamma
 f(\zeta)(\zeta-A)^{-1}H(\zeta-A)^{-1}\,\dd\zeta .
\]
The second derivative inserts two variations in both orders.
This also proves that no choice of eigenvectors at a degenerate
eigenvalue enters the formula.
\end{proof}

For an infinite modal sum these identities pass to the limit if the
summands and their first two source derivatives have locally uniform
summable majorants. A Gaussian times a polynomial mode bound supplies
such a majorant on compact positive-scale intervals. A moving hard
window requires its boundary contribution; the theorem does not
differentiate through a mode entering or leaving that window.

\subsection{The compact TT spectral branch}
\label{sec:native-tt-spectral-branch}

The compact closure uses the following finite spectral sum. Let $c>0$ be its dimensionless radius
coordinate, with $L=c$ and $\Lambda^2=\pi/c^2$ in endpoint units.
At fixed content and spin-structure branch,
\begin{equation}
 \lambda_j=\widehat\lambda_j/c^2,\qquad
 m_j^2=\widehat m_j^2(\vartheta)/c^2,\qquad
 K_{\rm TT}(x,b)=\frac{x^2}{(x+b)^2}.
 \label{eq:native-tt-spectral-input}
\end{equation}
Here $\vartheta$ denotes the implementation's torsion parameter.
It is distinct from the internal heat parameter $\upsilon$ and the
external averaging parameter $\tau$ used below.
The TT channel weights are $w_j=1$ for real scalar
components and $w_j=4$ for gauge and Weyl components, with their
modal and matter multiplicities in $d_j$. Ghost and tensor-carrier
entries have zero weight in this particular matter sum.
The dimensionless shifts on these retained carriers are
\begin{equation}
 \widehat m_{\rm sc}^2=\frac34,\qquad
 \widehat m_{\rm g}^2=\frac{C_2}2+\frac{\vartheta^2}6,\qquad
 \widehat m_{\rm f}^2=\frac{3\vartheta^2}8.
 \label{eq:native-tt-mass-shifts}
\end{equation}
The spinor curvature is already contained in $\widehat\lambda_j$.
This matter sum defines the internal spectral closure. The covariant
stress-tensor realisation is defined in Section~\ref{sec:covariant-stress-realisation}.

For the retained index set $\mathcal J_N$ and frequency bound $R$, define
\begin{align}
 B_j&=\sqrt{(R\pi-\widehat\lambda_j)_+},\qquad
 X_j(u)=u^2+\widehat\lambda_j,\notag\\
 A_{N,R}(\vartheta)
 &=\frac1{32\pi^3}\sum_{j\in\mathcal J_N}d_jw_j
   \int_0^{B_j}
   \frac{X_j(u)^2 e^{-X_j(u)/\pi}}
        {[X_j(u)+\widehat m_j^2(\vartheta)]^2}\,\dd u.       \label{eq:native-tt-coefficient}
\end{align}
We use $N=60$ and $R=5$, with the retained modes specified in
Appendix~\ref{sec:app-compact-spectral-branch}. On this branch the
cutoff depends on dimensionless eigenvalues and
$k_{\rm loop}L=\sqrt\pi$, so variations of $c$ and $\vartheta$ leave
the retained set fixed. The omitted Gaussian tail is bounded in
Eq.~\eqref{eq:app-compact-gaussian-tail}; the computational cutoff $R$
is distinct from the physical endpoint.

\begin{proposition}[Exact radius scaling and torsion response]
\label{prop:native-tt-homogeneity}
For the spectral sum on this fixed branch,
\begin{equation}
 \mathcal T_{N,R}(c,\vartheta):=\frac{V_3\Pi_{\rm TT}^{N,R}}{32\pi^2}
 =\frac{A_{N,R}(\vartheta)}c,\qquad
 c\partial_c\mathcal T_{N,R}=-\mathcal T_{N,R}.       \label{eq:native-tt-homogeneity}
\end{equation}
The coefficient is positive if a positive-weight mode has a nonempty
frequency interval. Where denominators are strictly positive,
\begin{equation}
 A'_{N,R}(\vartheta)=
 -\frac1{16\pi^3}\sum_jd_jw_j(\widehat m_j^2)'(\vartheta)
 \int_0^{B_j}\frac{X_j(u)^2e^{-X_j(u)/\pi}}
 {[X_j(u)+\widehat m_j^2(\vartheta)]^3}\,\dd u.             \label{eq:native-tt-torsion-derivative}
\end{equation}
Thus $A'_{N,R}<0$ for $\vartheta>0$ whenever a gauge or fermion
contribution is retained with positive integration length.
\end{proposition}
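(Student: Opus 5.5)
The plan is to reduce everything to the change of variables $k = u/c$ in the underlying momentum integral. The unstated definition of $\Pi_{\rm TT}^{N,R}$ fixes the normalisation; I take it to be the one on this branch, namely
\[
 \Pi_{\rm TT}^{N,R}=\frac1{V_3}\sum_{j\in\mathcal J_N} d_jw_j\int \frac{\dd k}{\pi}\,K_{\rm TT}(k^2+\lambda_j,m_j^2)\,e^{-(k^2+\lambda_j)/\Lambda^2}\cdot(\text{const}),
\]
with the frequency window $k^2+\lambda_j\le R\Lambda^2$.

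\textbf{Radius scaling.} Substitute $\lambda_j=\widehat\lambda_j/c^2$, $m_j^2=\widehat m_j^2/c^2$, $\Lambda^2=\pi/c^2$ from Eq.~\eqref{eq:native-tt-spectral-input}, and set $k=u/c$. Then $K_{\rm TT}$ is homogeneous of degree zero, so $K_{\rm TT}(X_j/c^2,\widehat m_j^2/c^2)=X_j^2/(X_j+\widehat m_j^2)^2$. The Gaussian becomes $e^{-X_j/\pi}$, and the window becomes $u\le B_j$. The only surviving power of $c$ is the Jacobian $\dd k=\dd u/c$. Collecting constants (the $32\pi^2$ normalisation against the $1/(32\pi^3)$ prefactor) gives $\mathcal T_{N,R}=A_{N,R}(\vartheta)/c$.

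The retained set $\mathcal J_N$ and the limits $B_j$ depend only on the dimensionless data $\widehat\lambda_j$ and $R$, as the paper notes for this branch. Hence no mode enters or leaves the window, and the boundary caveat after Proposition~\ref{prop:native-spectral-derivative} does not arise. Differentiating $A/c$ gives $c\partial_c\mathcal T=-\mathcal T$. This constant-matching against the definition of $\Pi_{\rm TT}$ is the only step where I expect friction. It is bookkeeping, not analysis: I would verify it by checking a single mode.

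\textbf{Positivity.} Every integrand is nonnegative, and $d_jw_j\ge0$. If some mode with $w_j>0$ has $B_j>0$, its integrand is strictly positive on $(0,B_j)$. Here $X_j>0$ except possibly at the isolated point $u=0$ when $\widehat\lambda_j=0$, and the denominator $(X_j+\widehat m_j^2)^2>0$ wherever it is defined. So $A_{N,R}>0$.

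\textbf{Torsion derivative.} The limits $B_j$ are independent of $\vartheta$. On the compact interval the integrand is smooth in $\vartheta$ wherever $X_j+\widehat m_j^2>0$, so differentiation under the integral is justified by uniform continuity, and there is no boundary term. We have
\[
 \partial_\vartheta (X+\widehat m^2)^{-2}=-2(\widehat m^2)'(X+\widehat m^2)^{-3}.
\]
Multiplying by $1/(32\pi^3)$ gives Eq.~\eqref{eq:native-tt-torsion-derivative}.

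\textbf{Sign of the derivative.} From Eq.~\eqref{eq:native-tt-mass-shifts}:
\begin{itemize}
\item $(\widehat m_{\rm sc}^2)'=0$,
\item $(\widehat m_{\rm g}^2)'=\vartheta/3$,
\item $(\widehat m_{\rm f}^2)'=3\vartheta/4$.
\end{itemize}
So for $\vartheta>0$ every term in the sum is $\le0$. A retained gauge or fermion mode with $B_j>0$ has a strictly positive integral, by the same argument as in the positivity step, which gives $A'_{N,R}<0$.
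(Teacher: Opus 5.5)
Your proposal is correct and follows the paper's proof. The substitution $u=c\omega$ (your $k=u/c$) against the $1/V_3$ prefactor, the degree-zero homogeneity of $K_{\rm TT}$ and the $c$-independent limits $B_j$ give $A_{N,R}/c$; differentiating in $\vartheta$ with mass-shift derivatives $0,\vartheta/3,3\vartheta/4$, together with positivity of the integrands, gives the torsion formula and its strict sign, and your remarks on the fixed retained set and on differentiating under the integral only spell out what the paper states briefly.
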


\begin{proof}
The spectral sum has one frequency measure $\dd\omega/\pi$ and
a prefactor $1/V_3$. Substitution $u=c\omega$ cancels $V_3$
in $\mathcal T$ and leaves $1/c$; the kernel has degree zero
under simultaneous scaling of its two arguments.
Differentiating this complete expression proves the radius identity,
including the volume, frequency boundary and Gaussian changes.
At fixed $c$, the stated torsion deformation changes only the mass
shifts in this branch. Differentiation gives
Eq.~\eqref{eq:native-tt-torsion-derivative}; their derivatives are
$0,\vartheta/3,3\vartheta/4$. Positivity gives the strict sign.
\end{proof}

The radius and torsion derivatives describe two specified variations of
the spectral sum. To resolve a local tensor source, we must also vary
the modal operators and products, using
Eq.~\eqref{eq:native-full-spectral-derivative} for the mixed response.

\subsection{A local tensor source on the scalar carrier}
\label{sec:native-local-metric-source}

The scalar source follows by variation of its spectral operator.
Let $g_\epsilon=g+\epsilon h$ be a smooth metric
deformation on the same compact manifold and let
$t=\tfrac12\operatorname{tr}_g h$. Transport
$L^2(M,\dd v_{g_\epsilon})$ to $L^2(M,\dd v_g)$ by multiplication
with $(\dd v_{g_\epsilon}/\dd v_g)^{1/2}$. Denote the transported
positive scalar Laplacian by $\widetilde A_\epsilon$.

\begin{proposition}[Geometric modal source]
\label{prop:native-geometric-source}
For scalar eigenfunctions $u_i,u_j$ of $A=-\Delta_g$, the first
source vertex is
\begin{align}
 H_{ij}[h]&:=\langle u_i,\widetilde A'_0u_j\rangle\notag\\
 &=\int_M(tg^{mn}-h^{mn})\nabla_m\overline u_i\nabla_nu_j\,\dd v_g
 -\frac{\lambda_i+\lambda_j}{2}
       \int_Mt\overline u_i u_j\,\dd v_g .
 \label{eq:native-geometric-vertex}
\end{align}
For trace-free $h$, this reduces to
$H_{ij}[h]=-\int h^{mn}\nabla_m\overline u_i\nabla_nu_j\,\dd v_g$.
For a constant compression $h=-2\eta g$, it gives
$H_{ij}=2\eta\lambda_i\delta_{ij}$.
On a finite complete-shell subspace, the heat readout has derivative
\begin{equation}
 \bigl[D e^{-\upsilon A}[H]\bigr]_{ij}
 =-H_{ij}\int_0^\upsilon
       e^{-(\upsilon-r)\lambda_i-r\lambda_j}\,\dd r .
 \label{eq:native-geometric-heat-response}
\end{equation}
Thus its off-diagonal matter responses and diagonal spectral changes
come from the same geometric source.
\end{proposition}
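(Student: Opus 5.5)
The computation goes through the quadratic form of the transported operator. Write $\rho_\epsilon=\dd v_{g_\epsilon}/\dd v_g$. To first order, $\rho_\epsilon=1+\epsilon t+\order{\epsilon^2}$ with $t=\tfrac12\operatorname{tr}_gh$, because $\partial_\epsilon\log\sqrt{\det g_\epsilon}=\tfrac12 g^{mn}h_{mn}$. The unitary $U_\epsilon\colon L^2(\dd v_{g_\epsilon})\to L^2(\dd v_g)$ is $\psi\mapsto\rho_\epsilon^{1/2}\psi$. It intertwines $-\Delta_{g_\epsilon}$ with $\widetilde A_\epsilon$.

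The form of $\widetilde A_\epsilon$ on $\phi\in C^\infty(M)$ is
\[
 q_\epsilon(\phi,\psi)=\int_M g_\epsilon^{mn}\,\nabla_m\overline{(\rho_\epsilon^{-1/2}\phi)}\,\nabla_n(\rho_\epsilon^{-1/2}\psi)\,\rho_\epsilon\,\dd v_g .
\]
The first step is to differentiate this at $\epsilon=0$ with $\phi=u_i$, $\psi=u_j$. The derivative of the inverse metric is $-h^{mn}$, and the density factor contributes $t$; together these give $\int(tg^{mn}-h^{mn})\nabla_m\overline u_i\nabla_nu_j$. The half-density factors contribute
\[
 -\tfrac12\int_M\bigl(\nabla\overline{(t u_i)}\cdot\nabla u_j+\nabla\overline u_i\cdot\nabla(tu_j)\bigr)\dd v_g .
\]
Since $t$ is real, integrating by parts against the eigenfunction equations $-\Delta u=\lambda u$ turns this into $-\tfrac{\lambda_i+\lambda_j}{2}\int t\,\overline u_iu_j$. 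This is exactly Eq.~\eqref{eq:native-geometric-vertex}.

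To justify reading $\langle u_i,\widetilde A'_0u_j\rangle$ off the form derivative, note that the transported forms share the domain $H^1$. Their coefficients depend smoothly on $\epsilon$, and $u_i,u_j$ are smooth, so differentiation of the operator matrix element commutes with the form identity. This identification is the only real obstacle, and the analytic content is mild: on a compact manifold, $\widetilde A_\epsilon$ is a smooth family of elliptic operators with fixed principal domain $H^2$. Its derivative is a second-order differential operator, and one checks directly that $\langle u_i,\widetilde A'_0u_j\rangle=q'_0(u_i,u_j)$.

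The two special cases follow by substitution.
\begin{itemize}
\item If $h$ is trace-free, then $t=0$ and only the $h^{mn}$ term survives.
\item If $h=-2\eta g$, then $t=-3\eta$ in dimension three and $h^{mn}=-2\eta g^{mn}$, so $tg^{mn}-h^{mn}=-\eta g^{mn}$. Orthonormality of the eigenfunctions gives $-\eta\lambda_i\delta_{ij}+3\eta\lambda_i\delta_{ij}=2\eta\lambda_i\delta_{ij}$.
\end{itemize}
This agrees with $A_L=L^{-2}A_1$ under $L^2\to(1-2\eta)L^2$. I would state the dimension explicitly here, since the coefficient uses $\dim M=3$.

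For the heat readout, I would use the Duhamel formula
\[
 D e^{-\upsilon A}[H]=-\int_0^\upsilon e^{-(\upsilon-r)A}He^{-rA}\,\dd r ,
\]
which is equivalent to the contour formula in Proposition~\ref{prop:native-spectral-derivative} with $f(z)=e^{-\upsilon z}$. On a complete-shell subspace, $A$ is diagonal in the basis $u_i$, so taking the $(i,j)$ entry gives Eq.~\eqref{eq:native-geometric-heat-response}. Completeness of the shells is what makes the compression commute with $A$, so that no boundary terms from the window arise. The diagonal entries, $-\upsilon e^{-\upsilon\lambda_i}H_{ii}$, are eigenvalue shifts. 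The off-diagonal entries are the matter responses. Both involve the same $H_{ij}[h]$, which proves the final remark.
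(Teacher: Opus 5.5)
Your proposal is correct and follows the paper's proof essentially step for step. You differentiate the transported Dirichlet form, with $-h^{mn}$ from the inverse metric, $t$ from the density and $-t/2$ on each argument from the half-density transport, then use the eigenfunction equation; the special cases follow by substitution, and the heat derivative comes from differentiating the finite matrix exponential (your Duhamel form). One small correction: the compression value $2\eta\lambda_i\delta_{ij}$ does not depend on $\dim M=3$, since in dimension $d$ you get $-(d-2)\eta\lambda_i+d\eta\lambda_i$, consistent with your own check $A\mapsto(1-2\eta)^{-1}A$.
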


\begin{proof}
Differentiate the Dirichlet form
$\int g_\epsilon^{mn}\nabla_m\overline u\nabla_nv\,\dd v_{g_\epsilon}$.
The inverse metric contributes $-h^{mn}$ and the measure contributes
$tg^{mn}$. The inverse comparison map has derivative $-t/2$ on
each argument; integration against the eigenfunctions gives the last
term of Eq.~\eqref{eq:native-geometric-vertex}. Smoothness and
compactness justify these form derivatives on the common $H^1$ domain.
Differentiating the finite matrix exponential gives the integral in
Eq.~\eqref{eq:native-geometric-heat-response}, including equal
eigenvalues. Compression yields the stated diagonal result by
orthonormality and the eigenfunction equation.
\end{proof}

This calculation also distinguishes matter-resolved response from a
homogeneous trace. On the round space the diagonal sum
$\sum_{i\in l}\nabla_m\overline u_i\nabla_nu_i$ over a complete scalar
shell is proportional to $g_{mn}$: the shell projector is isometry
invariant and the isotropy group acts irreducibly on each tangent
space. Consequently every complete-shell trace with a scalar shell
weight has zero first variation under trace-free $h$, while individual
vertices in Eq.~\eqref{eq:native-geometric-vertex} need not vanish.
Indeed, for a nonconstant real mode $u$, the smooth trace-free
deformation $h=\dd u\otimes\dd u-|\nabla u|^2g/3$ gives
$H_{uu}[h]=-(2/3)\int_M|\nabla u|^4\dd v_g<0$.
This example establishes a local tensor vertex, not its projection
onto a transverse physical response space.
The second variation and the matter-resolved mixed response therefore
carry information absent from the homogeneous endpoint derivative.

For a simple positive scalar mass readout
$M_i=(\lambda_i+b)^{1/2}$, its geometric derivative is
$(H_{ii}[h]+b'[h])/(2M_i)$. Within a degenerate shell the corresponding
first-order splitting is obtained by diagonalising
$P_l(\widetilde A'_0+b')P_l$, not by choosing individual basis
derivatives. A curvature or mass term must be varied according to its
own spectral definition. These vertices live on the internal compact
geometry. Constructing their propagating mixed response additionally
requires the tensor and constraint equations; it does not follow by
identifying an internal modal label with external momentum.

\subsection{Positive form-resolvent responses}
\label{sec:native-form-response}

The free-carrier propagation rule of
ref.~\cite{Guo2026CompactSpectrumI} is expressed through the internal spectrum. For internal
projectors $P_l$ and its scalar mass tower
$M_l^2=\lambda_l+m_\delta^2$, $m_\delta>0$, it is
\begin{equation}
 \widehat G_{\rm full}(q)=\sum_l\frac{P_l}{q^2+M_l^2}
       =(q^2+\Delta_M+m_\delta^2)^{-1}.
 \label{eq:native-original-spectral-propagation}
\end{equation}
Here $q$ is the external Euclidean momentum, $l$ is an internal mode
label, and $\Delta_M=-\Delta_g$ is the positive scalar Laplacian.
For internal source profiles $g_a\in L^2(M)$ its resolved
matrix is $C_{ab}(q)=\sum_l\langle g_a,P_lg_b\rangle/(q^2+M_l^2)$.
Orthogonality and Cauchy--Schwarz bound the sum of the absolute
numerators by $\|g_a\|\|g_b\|$, proving convergence for $q^2\geq0$.
At $q=0$, the internal trace of a resolvent power converges for
powers greater than $3/2$. The same spectral construction applies to
a collective tensor once its sourced response operator is specified.

Suppose elimination of the nonpropagating variables of the same source
system gives the exact response
\begin{equation}
 C(z)=C_{\rm dir}(z)+B_{\rm src}^*(H+zK_{\rm prop})^{-1}B_{\rm src}.
 \label{eq:native-affine-response}
\end{equation}
The equality concerns the complete response, including its source maps;
it is checked by the construction in Section~\ref{sec:native-same-source}.
The following result states the precise positive-form realisation.

\begin{proposition}[Spectral measure of an affine form response]
\label{prop:native-form-spectral}
Let $\mathfrak h$ be a densely defined closed nonnegative form on a Hilbert
space $\mathcal X$, with associated operator $H$. Let $K_{\rm prop}$ be
bounded self-adjoint with $0<k_0I\leq K_{\rm prop}\leq k_1I$, and
$B_{\rm src}:\mathcal U\to\mathcal X$ bounded, with $\mathcal U$ a
Hilbert source space. Define the form
\begin{equation}
 \mathfrak a[u,v]=\mathfrak h[K_{\rm prop}^{-1/2}u,
                                      K_{\rm prop}^{-1/2}v],\quad
 D(\mathfrak a)=K_{\rm prop}^{1/2}D(\mathfrak h),\quad
 V=K_{\rm prop}^{-1/2}B_{\rm src}.
 \label{eq:native-conjugated-form}
\end{equation}
Its associated operator $A$ is nonnegative self-adjoint. If $E_A$ is its
spectral resolution, then, for $z\notin(-\infty,0]$,
\begin{equation}
 C(z)-C_{\rm dir}(z)
 =\int_{[0,\infty)}\frac{\dd\rho_{\rm nat}(\mu)}{\mu+z},\qquad
 \dd\rho_{\rm nat}=V^*\dd E_A V,\qquad
 Z_{\rm nat}=V^*P_{\ker A}V.
 \label{eq:native-positive-resolvent-measure}
\end{equation}
The measure is positive operator-valued and finite; its total mass is
$V^*V$. Moreover $x(C(x)-C_{\rm dir}(x))\to Z_{\rm nat}$ strongly as
$x\downarrow0$.
\end{proposition}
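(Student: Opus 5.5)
The plan is to reduce everything to the spectral theorem for a single nonnegative self-adjoint operator by conjugating with $K_{\rm prop}^{1/2}$.

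\emph{Step 1: the form $\mathfrak a$.} Since $K_{\rm prop}^{-1/2}$ is a bounded bijection with bounded inverse, $\mathfrak a$ is densely defined and nonnegative. It is closed because the $\mathfrak a$-norm $\mathfrak a[u]+\|u\|^2$ is equivalent to $\mathfrak h[K_{\rm prop}^{-1/2}u]+\|K_{\rm prop}^{-1/2}u\|^2$. The equivalence constants are controlled by $k_0,k_1$, so Cauchy sequences transfer to $D(\mathfrak h)$. By Kato's representation theorem, $\mathfrak a$ then has an associated nonnegative self-adjoint operator $A$.

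\emph{Step 2: the algebraic identity.} The central claim is
\[
 (H+zK_{\rm prop})^{-1}=K_{\rm prop}^{-1/2}(A+z)^{-1}K_{\rm prop}^{-1/2},\qquad z\notin(-\infty,0].
\]
Here $H+zK_{\rm prop}$ is understood as the operator associated with the sectorial form $\mathfrak h+z\langle K_{\rm prop}\cdot,\cdot\rangle$ on $D(\mathfrak h)$.

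To verify it, fix $f$ and set $w=(A+z)^{-1}K_{\rm prop}^{-1/2}f$ and $x=K_{\rm prop}^{-1/2}w\in D(\mathfrak h)$. For every $y\in D(\mathfrak h)$, write $y=K_{\rm prop}^{-1/2}v$ with $v\in D(\mathfrak a)$. Then
\[
 \mathfrak h[y,x]+z\langle y,K_{\rm prop}x\rangle=\mathfrak a[v,w]+z\langle v,w\rangle=\langle v,K_{\rm prop}^{-1/2}f\rangle=\langle y,f\rangle .
\]
This is exactly the weak equation defining $(H+zK_{\rm prop})^{-1}f$. Uniqueness follows from invertibility of $A+z$ off $(-\infty,0]$, and this gives the identity.

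This is the step I expect to be the main obstacle. The difficulty is not the algebra but making precise which operator ``$H+zK_{\rm prop}$'' denotes when $H$ is unbounded. One must also check that the form sum is the right closed sectorial object for complex $z$, with the natural domain $D(\mathfrak h)$.

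\emph{Step 3: the representation.} Substituting the identity into Eq.~\eqref{eq:native-affine-response} gives
\[
 C(z)-C_{\rm dir}(z)=V^*(A+z)^{-1}V .
\]
The spectral theorem then writes this as $\int_{[0,\infty)}(\mu+z)^{-1}V^*\dd E_A V$. The measure is positive because $\langle \phi,V^*E_A(\Delta)V\phi\rangle=\|E_A(\Delta)V\phi\|^2\ge0$. Its total mass is $V^*E_A([0,\infty))V=V^*V$, which is bounded by $\|B_{\rm src}\|^2/k_0$.

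\emph{Step 4: the infrared limit.} Write $x(C(x)-C_{\rm dir}(x))=V^*\,x(A+x)^{-1}V$. The functions $\mu\mapsto x/(\mu+x)$ are bounded by $1$ and converge pointwise as $x\downarrow0$ to $\mathbf 1_{\{0\}}(\mu)$. By the strong dominated convergence form of the functional calculus, $x(A+x)^{-1}\to E_A(\{0\})=P_{\ker A}$ strongly. Applying the bounded operator $V^*$ preserves strong convergence, so the limit is $Z_{\rm nat}=V^*P_{\ker A}V$.
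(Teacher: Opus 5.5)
Your proposal is correct and follows the paper's proof step by step. The paper likewise uses form-norm equivalence for closedness, then the substitution $u=K_{\rm prop}^{1/2}x$ in the form equation to get $(H+zK_{\rm prop})^{-1}=K_{\rm prop}^{-1/2}(A+z)^{-1}K_{\rm prop}^{-1/2}$, the spectral theorem for the representation and positivity, and spectral dominated convergence of $x/(\mu+x)\to\mathbf 1_{\{0\}}$ for the strong limit. Your explicit weak-equation check, with uniqueness from the invertibility of $A+z$, just spells out the step the paper states in one line.
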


\begin{proof}
The bounded invertible change of variables in
Eq.~\eqref{eq:native-conjugated-form} makes its form norm equivalent to
that of $\mathfrak h$, so the form is closed and dense. In the form
equation for $H+zK_{\rm prop}$, put
$u=K_{\rm prop}^{1/2}x$. It gives
\[
 (H+zK_{\rm prop})^{-1}
 =K_{\rm prop}^{-1/2}(A+z)^{-1}K_{\rm prop}^{-1/2}.
\]
The spectral theorem proves the representation and positivity. The
bounded functions $x/(\mu+x)$ converge pointwise to the indicator of
$\{0\}$; spectral dominated convergence gives the stated strong limit.
\end{proof}

The variable $\mu$ is the spectral parameter of the reduced form operator. Identification with invariant mass requires a separate
spacetime reconstruction. An analytic direct term need not be a contact
term, and a local expansion $H+zK_{\rm prop}+O(z^2)$ alone does not imply
Eq.~\eqref{eq:native-positive-resolvent-measure}. Such local responses are
covered instead by Theorem~\ref{thm:native-finite-rank-residue}.

\subsection{The covariant stress-tensor realisation}
\label{sec:covariant-stress-realisation}

The input in this realisation is a local, unitary QFT in flat
spacetime with a renormalised symmetric conserved stress tensor. Gauge
theories are taken on their positive physical BRST cohomology with the
relevant anomalies cancelled. We assume Wightman temperedness, the
spectral condition and the usual Euclidean continuation. The source quotient removes representative ambiguities,
Lorentz-orbit decomposition identifies the measure, and continuation
defines its Euclidean coefficient. We work in four dimensions, with
metric signature $(+---)$ and units $\hbar=c=1$. We use
a Poincar\'e-invariant vacuum and require the complete symmetric tensor to
transform exactly covariantly on the positive physical Hilbert space.
These assumptions apply to the full tensor measure. Its null-shell
restriction and timelike spin decomposition are treated separately below.

\label{sec:centered-stress}

Let $T_{\mu\nu}$ be a renormalised symmetric stress tensor satisfying
\begin{equation}
 \partial^\mu T_{\mu\nu}=0
 \label{eq:stress-conservation}
\end{equation}
as an operator identity, modulo the usual contact terms. In a gauge theory,
$T_{\mu\nu}$ denotes a ghost-number-zero BRST cohomology class. A convenient
representative is obtained by coupling the microscopic theory to a
nondynamical metric and differentiating the renormalised action,
\begin{equation}
 T_{\mu\nu}(x)
 =\left.\frac{2}{\sqrt{-g}}
   \frac{\delta S_{\rm ren}[g]}{\delta g^{\mu\nu}(x)}
  \right|_{g=\eta}.
 \label{eq:stress-background-definition}
\end{equation}
The metric in Eq.~\eqref{eq:stress-background-definition} is a
nondynamical source used to define the insertion.

The connected spectral problem is expressed in terms of
\begin{equation}
 \Theta_{\mu\nu}(x)
 :=T_{\mu\nu}(x)
  -\langle\Omega|T_{\mu\nu}(x)|\Omega\rangle I.
 \label{eq:centered-stress-tensor}
\end{equation}
Centring removes the constant vacuum expectation value from connected
correlators; the same constant drops out of the commutator. The
cosmological term in a gravitational effective action is a separate
quantity, considered in Section~\ref{sec:einstein-gate}.

The Wightman and retarded functions play different roles:
\begin{align}
 W_{\mu\nu,\rho\sigma}(x)
 &=\langle\Omega|\Theta_{\mu\nu}(x)
                 \Theta_{\rho\sigma}(0)|\Omega\rangle,
 \label{eq:stress-wightman}\\
 G^R_{\mu\nu,\rho\sigma}(x)
 &=\iu\,\theta(x^0)
   \langle\Omega|[\Theta_{\mu\nu}(x),
                    \Theta_{\rho\sigma}(0)]|\Omega\rangle.
 \label{eq:stress-retarded}
\end{align}
The first supplies a positive spectral measure. The second supplies causal
linear response. The Fourier transform of the commutator is the difference
of positive- and negative-energy boundary values and is not itself a
nonnegative measure on all four-momenta.

\subsection{The physical TT source quotient}
\label{sec:tt-source-quotient}

For nonzero Euclidean momentum define
\begin{align}
 \theta_{\mu\nu}(Q)
 &=\delta_{\mu\nu}-\frac{Q_\mu Q_\nu}{Q^2},
 \label{eq:euclidean-transverse-projector}\\
 \Pi^{(2)}_{\mu\nu,\rho\sigma}(Q)
 &=\frac12\bigl(\theta_{\mu\rho}\theta_{\nu\sigma}
                 +\theta_{\mu\sigma}\theta_{\nu\rho}\bigr)
   -\frac13\theta_{\mu\nu}\theta_{\rho\sigma}.
 \label{eq:euclidean-tt-projector}
\end{align}
It has trace
$N_2=\Pi^{(2)}_{\mu\nu,\mu\nu}=5$.
Equation~\eqref{eq:euclidean-tt-projector} is never evaluated pointwise at
$Q=0$. We first form the scalar coefficient at $Q^2>0$ and only then take a
radial or distributional limit.

An equivalent definition avoids the singular projector. Let
$\mathcal J_{\TT}$ be the space of smooth, compactly supported symmetric
test sources satisfying
\begin{equation}
 \partial_\mu J^{\mu\nu}=0,
 \qquad \eta_{\mu\nu}J^{\mu\nu}=0.
 \label{eq:tt-test-source}
\end{equation}
The smeared insertion
$\Theta(J)=\int\dd^4x\,J^{\mu\nu}\Theta_{\mu\nu}$ depends only on the
physical TT class of the source. This formulation remains meaningful on the
massless shell and is used to define positivity.

Stress tensors are not unique as local representatives. Consider the
standard improvement and gauge-theory redundancies
\begin{equation}
 T'_{\mu\nu}=T_{\mu\nu}
 +(\partial_\mu\partial_\nu-\eta_{\mu\nu}\Box)X
 +s_{\mathrm B}Y_{\mu\nu}
 +\mathcal E^R_{F,\mu\nu},
 \label{eq:stress-representative-change}
\end{equation}
where $s_{\mathrm B}$ is the BRST differential and $\mathcal E^R_F$ is
the renormalised quantum EOM insertion generated by a local field
redefinition. It includes the Jacobian and operator-mixing subtractions
needed for the Schwinger--Dyson identity, rather than an arbitrary bare
product $F\,\delta S/\delta\phi$. Local counterterms quadratic in the
background source have contact second derivatives.

\begin{proposition}[Representative independence of the nonlocal TT datum]
\label{prop:tt-representative-independence}
At separated points and between physical states, the nonlocal quadratic
form of $\Theta$ on $\mathcal J_{\TT}$ is unchanged by
Eq.~\eqref{eq:stress-representative-change}. A finite local background-source
counterterm changes only the subtraction polynomial. Consequently, the
existence, rank, and sign of a zero-mass TT spectral atom are independent of
these choices.
\end{proposition}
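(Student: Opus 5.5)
The plan is to treat the three terms of Eq.~\eqref{eq:stress-representative-change} separately, each paired against a source $J\in\mathcal J_{\TT}$. The goal is to show that each term either integrates to zero against $J$ or changes the two-point form only by contact terms supported at coincident points. We then read off the consequence for the spectral atom from the infrared extraction of Proposition~\ref{prop:native-form-spectral}, or equivalently from the Wightman measure.

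\textbf{Improvement term.} For the improvement term, smear $(\partial_\mu\partial_\nu-\eta_{\mu\nu}\Box)X$ against $J^{\mu\nu}$ and integrate by parts. Compact support removes the boundary terms, so
\[
\int J^{\mu\nu}\bigl(\partial_\mu\partial_\nu-\eta_{\mu\nu}\Box\bigr)X
=\int\bigl(\partial_\mu\partial_\nu J^{\mu\nu}-\Box\,\eta_{\mu\nu}J^{\mu\nu}\bigr)X .
\]
This vanishes by Eq.~\eqref{eq:tt-test-source}. Hence $\Theta'(J)=\Theta(J)$ exactly as an operator, not merely at separated points. The vacuum subtraction in Eq.~\eqref{eq:centered-stress-tensor} commutes with this, since the expectation value of a total-derivative term is a constant annihilated by the same identity.

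\textbf{BRST term.} For $s_{\mathrm B}Y$, use that physical states are BRST-closed and the positive physical space is the cohomology. A matrix element $\langle\psi|s_{\mathrm B}Y(J)\,\mathcal O|\phi\rangle$ with $\mathcal O$ BRST-closed equals $\langle\psi|[Q_{\mathrm B},Y(J)\mathcal O\}|\phi\rangle$, which vanishes when $Q_{\mathrm B}\phi=0=Q_{\mathrm B}^\dagger\psi$. Applying this twice shows that the cross and quadratic terms of the two-point form drop out on physical states; here we use that $\Theta$ itself is BRST-closed at ghost number zero.

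\textbf{EOM insertion.} For $\mathcal E^R_F$, invoke the renormalised Schwinger--Dyson identity. An EOM insertion inside a correlator with another local operator $\Theta(y)$ equals $\delta_F\Theta(y)\,\delta(x-y)$-type terms. These are supported at $x=y$, so they vanish at separated points and contribute only a polynomial in momentum to the Fourier transform. The same holds for the quadratic $\mathcal E\mathcal E$ term.

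\textbf{Counterterms.} A finite local counterterm quadratic in the background metric contributes $\delta^2 S_{\rm ct}/\delta g\,\delta g$, which is a derivative of $\delta(x-y)$ with finitely many derivatives. This is a polynomial $P(Q)$ in Euclidean momentum, and so a change of $C_{\rm dir}$ or the subtraction polynomial only.

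\textbf{Consequence for the atom.} The zero-mass atom is $Z=\lim_{x\downarrow0}x\,(C(x)-C_{\rm dir}(x))$, restricted to the TT coefficient formed at $Q^2>0$. A polynomial $P$ satisfies $xP(x)\to0$. The exactly invariant operator pieces leave the positive measure $V^*\dd E_AV$, and in particular its weight at $\{0\}$, unchanged. Existence, rank and sign of $Z_{\rm nat}$ are therefore unaffected. Equivalently, the Wightman measure on $\mathcal J_{\TT}$ is unchanged, since it is determined by separated-point data.

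\textbf{Main obstacle.} The step I expect to be hardest is the EOM term at the level of rigour claimed. We must argue that the renormalised insertion $\mathcal E^R_F$, including Jacobian and mixing subtractions, genuinely produces only contact terms in the \emph{Wightman} function, not just the time-ordered Euclidean correlator. My approach would be to establish the statement for Euclidean Schwinger functions at separated points, where the Schwinger--Dyson identity applies directly. I would then transfer it to the Wightman measure via the assumed Euclidean continuation (Osterwalder--Schrader reconstruction), since separated-point Schwinger functions determine the Wightman distribution up to nothing. This also disposes of the ordering subtlety for the BRST argument.
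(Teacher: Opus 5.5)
Your proposal is correct and follows the paper's proof. You use the same term-by-term treatment: integration by parts against conserved traceless sources, BRST-exact insertions vanishing between physical classes, Schwinger--Dyson contact terms for the EOM insertion, and polynomial contributions from local source counterterms. You then conclude, as the paper does, that the nonlocal Wightman measure on $\mathcal J_{\TT}$, and hence its atom, is unchanged; your Osterwalder--Schrader step simply spells out the time-ordered-to-Wightman transfer that the paper states in one line.

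One adjustment: read the atom directly off that Wightman measure rather than from Proposition~\ref{prop:native-form-spectral}. The measure $V^*\dd E_AV$ there belongs to a separately assumed affine response and is not identified with the stress-tensor measure, so it cannot carry the argument. The paper also notes, beyond what you cover, that an invertible finite renormalisation of the physical operator basis acts by congruence on the measure, which preserves positivity and rank.
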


\begin{proof}
Integration by parts and Eq.~\eqref{eq:tt-test-source} give
\begin{equation}
 \int\dd^4x\,J^{\mu\nu}
 (\partial_\mu\partial_\nu-\eta_{\mu\nu}\Box)X=0.
 \label{eq:improvement-tt-zero}
\end{equation}
BRST-exact insertions vanish between physical cohomology classes. The
renormalised EOM insertion acts by the local field variation on the other
insertions, so its time-ordered products reduce to contact terms by the
Schwinger--Dyson identity. Finally, functional derivatives of
a local source counterterm have support at coincident points and are
polynomial in momentum. None of these operations changes the nonlocal
Wightman measure on physical TT test sources. An invertible finite
renormalisation of a physical operator basis acts by congruence on that
measure, preserving positivity, rank, and the presence of an atom.
\end{proof}

The proposition concerns the explicit class
Eq.~\eqref{eq:stress-representative-change}. A more general conserved local
redefinition must be decomposed into physical, improvement, BRST, EOM, and
contact components before the conclusion is invoked.

\subsection{Positive Wightman measure and Euclidean form factor}
\label{sec:tt-wightman-measure}

Insertion of the joint spectral resolution of the four-momentum operator in
Eq.~\eqref{eq:stress-wightman} gives a tensor-valued measure supported in the
closed forward cone. For every physical test source,
\begin{equation}
 \langle\Omega|\Theta(J)^\dagger\Theta(J)|\Omega\rangle
 =\int_{\overline V_+}
  \overline{\widetilde J^{\mu\nu}(p)}\,
  \dd M_{\mu\nu,\rho\sigma}(p)\,
  \widetilde J^{\rho\sigma}(p)\geq0.
 \label{eq:wightman-positive-quadratic-form}
\end{equation}

We first retain the matrix structure of the source response. Let $E(\dd p)$ be the joint projection-valued measure of translations.
The tensor-valued tempered measure
$\dd M_{\mu\nu,\rho\sigma}(p)$ is defined weakly by smearing the two local
operator-valued distributions against test tensors. For a finite family
$J_a\in\mathcal J_{\TT}$ define
\begin{equation}
 \dd M_{ab}(p)
 :=\langle\Omega|\Theta(J_a)^\dagger E(\dd p)
                    \Theta(J_b)|\Omega\rangle.
 \label{eq:body-matrix-tt-measure}
\end{equation}

\begin{theorem}[Covariant disintegration of the physical TT measure]
\label{thm:matrix-tt-wightman-measure}
Assume the Wightman temperedness, spectral, and Poincar\'e-covariance
conditions stated above. Then:
\begin{enumerate}[label=(\roman*)]
\item for every finite family of physical TT test sources, $\dd M(p)$ is a
positive-semidefinite matrix-valued Radon measure supported in the closed
forward cone. More precisely, for $c_a\in\mathbb C$ and every nonnegative
Borel function $f$ for which the expression is finite,
\begin{equation}
 \sum_{a,b}\bar c_a\int f(p)\,\dd M_{ab}(p)c_b
 =\left\|\left[\int f(p)E(\dd p)\right]^{1/2}
   \sum_b c_b\Theta(J_b)|\Omega\rangle\right\|^2
 \geq0.
 \label{eq:body-matrix-tt-positivity}
\end{equation}
\item fix the invariant orbit measure
$\dd\Omega_s(p)=(2\pi)^{-3}\theta(p^0)\delta(p^2-s)\dd^4p$ on
$H_s^+=\{p:p^0>0,\ p^2=s\}$. On the open timelike cone there are unique,
source-independent, positive locally finite measures
$\dd\rho_2(s)$ and $\dd\rho_0(s)$ such that, for every symmetric Schwartz
test tensor $K_{\mu\nu}$ supported away from $p=0$,
\begin{align}
 \langle K,MK\rangle_{p^2>0}
 =\sum_{A=2,0}\int_{(0,\infty)}\dd\rho_A(s)
  \int_{H_s^+}\dd\Omega_s(p)\,
  \overline{K^{\mu\nu}(p)}
  \Pi^{(A)}_{\mu\nu,\rho\sigma}(p)K^{\rho\sigma}(p),
 \label{eq:body-covariant-spin-disintegration}
\end{align}
where, only for $s>0$,
\begin{align}
 \pi_{\mu\nu}(p)&=\eta_{\mu\nu}-\frac{p_\mu p_\nu}{s},
 \notag\\
 \Pi^{(2)}_{\mu\nu,\rho\sigma}(p)
 &=\frac12(\pi_{\mu\rho}\pi_{\nu\sigma}
            +\pi_{\mu\sigma}\pi_{\nu\rho})
   -\frac13\pi_{\mu\nu}\pi_{\rho\sigma},
 \qquad
 \Pi^{(0)}_{\mu\nu,\rho\sigma}(p)
 =\frac13\pi_{\mu\nu}\pi_{\rho\sigma}.
 \label{eq:body-lorentz-spin-projectors}
\end{align}
For $J_a,J_b\in\mathcal J_{\TT}$ and Borel sets
$B\Subset(0,\infty)$, the pushforward of Eq.~\eqref{eq:body-matrix-tt-measure}
is therefore
\begin{equation}
 \rho^{J}_{ab}(B)
 =\int_B q^{(2)}_{ab}(s)\,\dd\rho_2(s),
 \quad
 q^{(2)}_{ab}(s)
 :=\int_{H_s^+}\dd\Omega_s(p)\,
 \overline{\widetilde J_a^{\mu\nu}(p)}
 \Pi^{(2)}_{\mu\nu,\rho\sigma}(p)
 \widetilde J_b^{\rho\sigma}(p).
 \label{eq:body-source-factorisation}
\end{equation}
Thus all source dependence is in the positive matrix $q^{(2)}(s)$; the
measure $\dd\rho_2$ is common to every source family.
\item at $s=0$ the pushforward itself, rather than
Eq.~\eqref{eq:body-lorentz-spin-projectors}, defines the matrix atom
\begin{equation}
 \mathbf Z_0[J]
 :=\bigl[\rho^{J}_{ab}(\{0\})\bigr]_{a,b}\succeq0.
 \label{eq:body-matrix-tt-atom}
\end{equation}
It is defined without evaluating a massive rank-five projector at $p^2=0$.
\end{enumerate}
\end{theorem}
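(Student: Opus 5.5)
Part (i) follows directly from the spectral theorem for the unitary translation group. By the SNAG theorem, the Wightman axioms give a joint projection-valued measure $E(\dd p)$. The spectral condition puts its support in $\overline V_+$. Set $\Psi=\sum_b c_b\Theta(J_b)|\Omega\rangle$; this lies in the Hilbert space because each $\Theta(J_b)$ is a smeared field applied to the vacuum. Then
\[
\sum_{a,b}\bar c_a\,\dd M_{ab}\,c_b=\langle\Psi,E(\dd p)\Psi\rangle,
\]
and Borel functional calculus gives $\int f\,\dd\langle\Psi,E\Psi\rangle=\|[\int fE]^{1/2}\Psi\|^2$ for nonnegative $f$ whenever this is finite. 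Positivity of every such combination makes $\dd M$ a positive semidefinite matrix measure. Temperedness makes it a Radon measure of polynomial growth. Centring removes the vacuum contribution, but the atom at $p=0$ is kept inside $\overline V_+$ for now.

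For (ii), I would realise $M$ as a tensor-valued tempered measure $\dd M_{\mu\nu,\rho\sigma}(p)$, obtained as the kernel of the positive form $K\mapsto\|\Theta(\check K)\Omega\|^2$ restricted to momentum space. Covariance, $U(\Lambda)\Theta_{\mu\nu}U(\Lambda)^{-1}=\Lambda\Lambda\Theta$, and invariance of the vacuum give the equivariance $M(\Lambda B)=\Lambda^{\otimes4}M(B)\Lambda^{\otimes4\,T}$. On the open cone I would disintegrate along $s=p^2$: pushing forward to $s$ and then disintegrating gives measures on each hyperboloid $H_s^+$. Equivariance together with uniqueness of the invariant measure on $H_s^+$, a homogeneous space of $SO^+(1,3)$ with stabiliser $SO(3)$, forces $\dd M=\dd\sigma(s)\,\dd\Omega_s(p)\,m_{\mu\nu,\rho\sigma}(p)$. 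Here the density $m(p)$ is fixed by its value at the rest frame $p=(\sqrt s,0)$, where it must be an $SO(3)$-invariant positive form on symmetric tensors. At that point I would use conservation, $p^\mu\Theta_{\mu\nu}(p)=0$ weakly (contact terms sit at $p=0$ and are excluded), which kills every $p$-direction component. The positive form therefore lives on spatial symmetric $3\times3$ tensors. Schur's lemma for $\mathbf 1\oplus\mathbf 5$ then gives exactly two nonnegative coefficients, which are the $\Pi^{(0)}$ and $\Pi^{(2)}$ components; covariantly these are the projectors in Eq.~\eqref{eq:body-lorentz-spin-projectors}. Absorbing the coefficients into $\dd\rho_A$ produces Eq.~\eqref{eq:body-covariant-spin-disintegration}. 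Uniqueness follows by testing with sources concentrated near a single orbit with prescribed spin content, which separates $\rho_2$ from $\rho_0$ and determines each on Borel sets. Source independence holds because the $\rho_A$ are defined from $M$ alone.

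For TT sources, $\widetilde J$ is transverse and traceless in the Lorentzian sense, $p_\mu\widetilde J^{\mu\nu}=0$ and $\eta_{\mu\nu}\widetilde J^{\mu\nu}=0$. Hence $\pi_{\mu\nu}\widetilde J^{\mu\nu}=0$, the $\Pi^{(0)}$ term drops, and Eq.~\eqref{eq:body-source-factorisation} follows by integrating over $s\in B$. Positivity of $q^{(2)}(s)$ comes from positivity of $\Pi^{(2)}$ on transverse tensors at timelike $p$.

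Part (iii) is then immediate. The pushforward of the positive matrix measure to $s$ is a positive matrix measure on $[0,\infty)$. Its value on $\{0\}$ is positive semidefinite by (i), applied with $f$ equal to the indicator of $\{p^2=0\}\cap\overline V_+$. This needs no projector at all.

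I expect the main obstacle to be the rigorous disintegration in (ii). The measure is only locally finite and tempered, and it is tensor-valued rather than scalar. I would handle this by first working with the scalar trace measure $\operatorname{tr}M$, which dominates all entries by positivity, and taking the Radon--Nikodym density $m(p)$ with respect to it. Equivariance and uniqueness of the quasi-invariant measure class on each orbit are then applied to $\operatorname{tr}M$. The remaining technical points are handling null sets of orbits, and justifying weak conservation off $p=0$ from the operator identity modulo contact terms.
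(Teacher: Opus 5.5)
Your proposal is correct and follows the paper's proof essentially step for step. The shared steps are the spectral theorem applied to $\sum_b c_b\Theta(J_b)|\Omega\rangle$ for (i); disintegration over the orbits $H_s^+$ with $\dd\Omega_s$ fixed, conservation in the rest frame and Schur's lemma on $\mathbf 5\oplus\mathbf 1$ for (ii), followed by annihilation of the scalar block by TT sources; and positivity of the pushforward on $\{0\}$ for (iii). Your trace-measure Radon--Nikodym device is only a technical refinement of the paper's disintegration of each quadratic form.

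Two small corrections. First, no contact terms enter at all: the Wightman (not time-ordered) measure obeys $p^\mu\,\dd M_{\mu\nu,\rho\sigma}(p)=0$ exactly, and contact terms would in any case be polynomials in $p$ rather than supported at $p=0$. Second, the equivariance should read $M(\Lambda B)=D(\Lambda)M(B)D(\Lambda)^{\mathsf T}$ with $D=\Lambda\otimes\Lambda$, not $\Lambda^{\otimes4}$ on each side.
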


\begin{proof}
Equation~\eqref{eq:body-matrix-tt-positivity} is the spectral theorem applied
to the vector $\sum_b c_b\Theta(J_b)|\Omega\rangle$; it proves matrix
positivity, while the spectral condition gives the support. On the timelike
cone, the proper orthochronous Lorentz orbits are the $H_s^+$. Disintegration
of each positive quadratic-form measure over these orbits, followed by
polarisation, is unique after the normalisation of $\dd\Omega_s$ has been
fixed. In the rest frame of an orbit,
conservation leaves a symmetric tensor of the little group $SO(3)$, and
\begin{equation*}
 \operatorname{Sym}^2(\mathbb R^3)
 =\operatorname{Sym}^2_0(\mathbb R^3)\oplus\mathbb R
 =\mathbf5\oplus\mathbf1.
\end{equation*}
Schur's lemma forbids mixing of these inequivalent irreducible blocks and
makes the coefficient on each block a scalar measure. Positivity of the
quadratic form makes both measures nonnegative; inequivalence of the two
blocks gives uniqueness. Covariantising their rest-frame projectors gives
Eq.~\eqref{eq:body-lorentz-spin-projectors}. A TT source annihilates the
scalar block, yielding Eq.~\eqref{eq:body-source-factorisation}. Finally,
pushforward by $p\mapsto p^2$ preserves matrix positivity at the boundary,
and quotienting its null directions gives
Eq.~\eqref{eq:body-matrix-tt-atom}. This is the tensor K\"all\'en--Lehmann
construction~\cite{Kallen1952,Streater1964,Weinberg1995}.
\end{proof}

The two measures in the theorem serve different purposes.
The common tensor measure fixes $\dd\rho_2$, while a test-source family
fixes $\dd\rho^J_{ab}$ through the factors in
Eq.~\eqref{eq:body-source-factorisation}.

\begin{theorem}[Null-shell exclusion for the baseline stress tensor]
\label{thm:null-shell-exclusion}
Under the complete-tensor positivity, exact Lorentz covariance, conservation,
and invariant-vacuum assumptions above, the restriction of the
Wightman tensor measure to $H_0^+=\{p:p^0>0,\ p^2=0\}$ has the form
\begin{equation}
 \left.\dd M_{\mu\nu,\rho\sigma}(p)\right|_{H_0^+}
 =c\,p_\mu p_\nu p_\rho p_\sigma\,\dd\Omega_0(p),\qquad c\geq0.
 \label{eq:actual-null-shell-measure}
\end{equation}
It vanishes on every conserved test source. A possible measure at $p=0$
is proportional to $\eta_{\mu\nu}\eta_{\rho\sigma}\delta^{(4)}(p)$ and
vanishes on traceless sources. Consequently the TT matrix atom is
\begin{equation}
 \mathbf Z_0[J]=0\qquad(J_a\in\mathcal J_{\TT}).
 \label{eq:baseline-matrix-atom-zero}
\end{equation}
This conclusion requires neither isolated one-particle states nor an
assumption about their asymptotic interpretation.
\end{theorem}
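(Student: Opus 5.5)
The plan is to run the Lorentz-orbit analysis of Theorem~\ref{thm:matrix-tt-wightman-measure} on the two boundary orbits that part (ii) excludes: the null orbit $H_0^+$ and the fixed point $p=0$.

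On $H_0^+$, covariance and uniqueness of disintegration (as in part (ii)) write the restricted measure as $\dd M|_{H_0^+}=m_{\mu\nu,\rho\sigma}(p)\,\dd\Omega_0(p)$. Here $m$ is a covariant tensor function on the orbit. It is therefore fixed by its value at one reference momentum $k=(1,0,0,1)$, and that value must be invariant under the little group $ISO(2)$ of $k$, acting on symmetric tensors.

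The key step uses positivity with the noncompact translations of $ISO(2)$. The positive form $m(k)$ on $\operatorname{Sym}^2(\mathbb C^4)$ is $ISO(2)$-invariant, so its range is a finite-dimensional unitary representation of $ISO(2)$. Conservation gives $p^\mu m_{\mu\nu,\rho\sigma}=0$, so the range lies in tensors orthogonal to $k$.

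I will decompose $\operatorname{Sym}^2$ of $k^\perp$, which is spanned by $k$ and the two transverse directions, under the null rotations. The helicity-$\pm2$ and helicity-$0$ transverse pieces are not invariant. Null rotations map $\epsilon_\pm\otimes\epsilon_\pm$ into combinations containing $k\otimes\epsilon$ and $k\otimes k$, producing a nontrivial Jordan block.

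A positive semidefinite form invariant under a unipotent nontrivial action must vanish on everything but the invariant vectors. The reason is that $\langle v,mv\rangle$ for $v=e^{tN}w$ grows polynomially in $t$ unless $Nw$ lies in the kernel of $m$. Iterating this, the range of $m$ is contained in the span of $k\otimes k$, which gives $m=c\,k_\mu k_\nu k_\rho k_\sigma$ with $c\ge0$. Covariant transport along the orbit then yields Eq.~\eqref{eq:actual-null-shell-measure}.

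I expect the main obstacle to be this unipotent-positivity argument. Doing it cleanly means handling the helicity-$\pm2$, helicity-$0$ trace and mixed components together. It also means confirming that the Jordan structure really reaches $k\otimes k$ from each of them. I would first check this explicitly at $k$ with lightcone coordinates, then apply the general lemma that a positive form invariant under $e^{tN}$ is annihilated by $N$.

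Vanishing on conserved sources is then immediate. For $J$ with $\partial_\mu J^{\mu\nu}=0$ we have $p_\mu\widetilde J^{\mu\nu}(p)=0$, so the integrand $|p_\mu p_\nu\widetilde J^{\mu\nu}|^2$ vanishes identically.

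At $p=0$ the measure is a constant tensor times $\delta^{(4)}(p)$. That tensor must be fully Lorentz invariant and positive. The invariant symmetric tensors are $\eta\eta$ and $\eta_{\mu(\rho}\eta_{\sigma)\nu}$. Positivity on the indefinite metric space excludes the second: its quadratic form is $\eta\eta$ contracted with $J$, which takes both signs on traceless tensors. This leaves $\eta_{\mu\nu}\eta_{\rho\sigma}$, which vanishes on traceless sources.

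Finally, $\mathbf Z_0[J]$ is the pushforward of $\{p^2=0\}=H_0^+\cup\{0\}$. Both contributions vanish on $\mathcal J_{\TT}$, so Eq.~\eqref{eq:baseline-matrix-atom-zero} follows. No one-particle assumption enters, since only the measure is used.
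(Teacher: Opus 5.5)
Your proposal is correct and follows the paper's proof. Both use a covariant positive density on $H_0^+$ and let the little group act unitarily on its range while null rotations act unipotently; your polynomial-growth argument is just a proof that a unipotent unitary map is the identity. Conservation then leaves only $kk$, and the measure vanishes on conserved sources and, at the origin, on traceless ones.

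The differences are minor. At $p=0$ you classify the two invariant pair-symmetric tensors and discard $\eta_{\mu(\rho}\eta_{\sigma)\nu}$ because $\overline{K^{\mu\nu}}K_{\mu\nu}$ is indefinite on traceless $K$. The paper instead notes that boosts have positive real eigenvalues and therefore act trivially on the positive range. One justification should be corrected: the pointwise covariant density relative to $\dd\Omega_0$ on the single null orbit follows from quasi-invariance of the measure, or from the paper's Haar averaging of compensated Lorentz translates, not from uniqueness of disintegration.
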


\begin{proof}
The nonzero null cone is a single Lorentz orbit with invariant measure
$\dd\Omega_0$. Averaging compensated Lorentz translates with a smooth Haar
weight leaves a covariant measure unchanged and regularises it along the
orbit. Thus it has a covariant positive-semidefinite finite matrix density
$H(p)$ relative to $\dd\Omega_0$, as detailed in
Appendix~\ref{sec:app-null-shell}.
At a reference null momentum $\ell$, factor $H(\ell)=VV^\dagger$ with $V$
of full column rank. Little-group covariance gives
$D(g)H(\ell)D(g)^\dagger=H(\ell)$, so $D(g)V=VU(g)$ with $U(g)$ unitary.
Every null rotation is unipotent in the finite symmetric-tensor
representation $D$. Its restriction $U$ is therefore both unipotent and
unitary, hence is the identity. Every column of $V$ is fixed by null
rotations.

Choose a null tetrad $(\ell,n,e_1,e_2)$ with $\ell\cdot n=1$ and
$e_i\cdot e_j=-\delta_{ij}$. Null-rotation generators satisfy
$N_i\ell=0$, $N_i n=e_i$, and $N_i e_j=\delta_{ij}\ell$.
Solving $N_i u=0$ for a symmetric tensor leaves
$u=a\ell\ell+b\eta$; the component calculation is given in
Appendix~\ref{sec:app-null-shell}. Conservation sets $b=0$.
Thus $H(\ell)$ has range in the one-dimensional space spanned by
$\ell\ell$, and transporting it over the orbit proves
Eq.~\eqref{eq:actual-null-shell-measure} with $c\geq0$.
For a measure at $p=0$, the stabiliser is the full Lorentz group.
Its finite-dimensional positive range is trivial under boosts, whose
tensor-representation eigenvalues are positive real numbers but must also
have modulus one. Boosts generate the connected Lorentz group, leaving
only the invariant symmetric tensor $\eta$. This proves the stated form
of the origin contribution. Conservation and tracelessness of the sources
then give Eq.~\eqref{eq:baseline-matrix-atom-zero}.
This is a two-point-measure version of the null-little-group restriction
on exactly covariant fields~\cite{Weinberg2020Massless}.
\end{proof}

The theorem classifies vacuum spectral weight directly. Its assumptions
concern the complete tensor measure, which is stronger than positivity
only on a source quotient. A response defined on such a quotient, or a
gauge potential, requires a spectral construction appropriate to its
transformation law.

Let
\begin{equation}
 G^E_{\mu\nu,\rho\sigma}(Q)
 =\langle\Theta_{\mu\nu}(Q)
          \Theta_{\rho\sigma}(-Q)\rangle_{E,c}
 \label{eq:euclidean-stress-correlator}
\end{equation}
be the connected Euclidean correlator. At $Q^2>0$ its TT coefficient is
\begin{equation}
 C_2(Q^2)
 :=\frac1{N_2}\Pi^{(2)}_{\mu\nu,\rho\sigma}(Q)
                   G^E_{\mu\nu,\rho\sigma}(Q).
 \label{eq:tt-form-factor-definition}
\end{equation}
All kinematic powers are included in this definition; $C_2$ has mass
dimension four. Thus the scalar spectral density below includes the kinematic powers
associated with this normalisation.

\begin{proposition}[Construction of the scalar TT measure at the boundary]
\label{prop:scalar-tt-boundary}
Fix the orbit normalisation in Theorem~\ref{thm:matrix-tt-wightman-measure}
and the Euclidean normalisation in Eq.~\eqref{eq:tt-form-factor-definition}.
The timelike measures obey, for some $\epsilon>0$,
\begin{equation}
 \int_{(0,\epsilon)}\frac{\dd\rho_2(s)+\dd\rho_0(s)}{s^2}<\infty.
 \label{eq:physical-tensor-ir-integrability}
\end{equation}
Extend $\rho_2$ to $[0,\infty)$ by assigning zero mass to $\{0\}$.
This positive, locally finite extension supplies the subtracted
K\"all\'en--Lehmann representation of $C_2$. In particular,
\begin{equation}
 Z_0:=\rho_2(\{0\})=0.
 \label{eq:body-scalar-tt-atom}
\end{equation}
The scalar and source-matrix atoms agree through their common
vanishing:
\begin{equation}
 \mathbf Z_0[J]=0,\qquad Z_0=0.
 \label{eq:body-matrix-scalar-atom-correspondence}
\end{equation}
Neither a positive endpoint extension nor a nonzero null-shell atom is
inferred by taking a limit of the massive projector.
\end{proposition}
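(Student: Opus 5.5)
Our approach is to build the scalar TT measure directly from the timelike disintegration of Theorem~\ref{thm:matrix-tt-wightman-measure}(ii). The boundary value at $s=0$ is taken from Theorem~\ref{thm:null-shell-exclusion}, and the Euclidean coefficient $C_2$ is then identified by continuation. We never evaluate the massive projector at $p^2=0$. The steps are as follows.

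\textbf{Step 1: integrability near the boundary.} We would derive Eq.~\eqref{eq:physical-tensor-ir-integrability} from temperedness of the full tensor measure $\dd M$ together with its finiteness near the origin. Take a symmetric Schwartz test tensor $K$ that does not vanish near $p=0$ and contract Eq.~\eqref{eq:body-covariant-spin-disintegration} against it. The projectors $\Pi^{(A)}$ carry inverse powers $1/s$ and $1/s^2$ through $\pi_{\mu\nu}$. Finiteness of $\langle K,MK\rangle$ then bounds $\int_{(0,\epsilon)}s^{-2}\,\dd\rho_A$, up to the orbit volume factor. This step needs care, and we expect it to be the main obstacle. The delicate point is to choose $K$, for instance of the form $p_\mu p_\nu$ times a bump, or a component whose pairing with $\Pi^{(A)}$ produces exactly the $s^{-2}$ weight. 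The choice must also keep the contribution of the origin and of the null shell controlled. By Theorem~\ref{thm:null-shell-exclusion}, that contribution is $c\,p^4\,\dd\Omega_0$ plus $\eta\eta\,\delta^{(4)}$, which is finite. We would first try the rest-frame component $K_{00}$ localised near the cone. There the tensor weights are explicit, and the kinematic normalisation of $C_2$, which has dimension four, accounts for the powers.

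\textbf{Step 2: the representation of $C_2$.} Define $\rho_2$ on $[0,\infty)$ by setting $\rho_2(\{0\})=0$. Local finiteness and positivity are inherited from part~(ii) of Theorem~\ref{thm:matrix-tt-wightman-measure}. Write the Wightman two-point function as the sum of three pieces: the timelike integral, the null-shell piece and the origin piece. Continue to Euclidean momentum using the spectral condition and the standard Euclidean continuation, then contract with $\Pi^{(2)}(Q)/N_2$ at $Q^2>0$. The timelike part yields $\int\dd\rho_2(s)\,P(Q^2,s)/(s+Q^2)$, where $P$ is the kinematic polynomial fixed by the contraction. Polynomial growth is handled by subtracting a polynomial, which is allowed by temperedness and by Step~1. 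The polynomial ambiguity is exactly the contact freedom of Proposition~\ref{prop:tt-representative-independence}.

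\textbf{Step 3: the boundary contributions vanish.} The null-shell term $c\,p_\mu p_\nu p_\rho p_\sigma$ continues to a tensor built from $Q_\mu Q_\nu Q_\rho Q_\sigma$, and $\Pi^{(2)}(Q)$ annihilates it because $\theta Q=0$. The origin term $\eta\eta\,\delta^{(4)}$ is a constant that centring removes, and in any case it is annihilated by the traceless projector. The TT coefficient therefore receives no $1/Q^2$ pole beyond the timelike integral. By Step~1 that integral has no atom, so $\lim_{Q^2\downarrow0}Q^2\,(C_2-\text{poly})=0$ by dominated convergence, which gives Eq.~\eqref{eq:body-scalar-tt-atom}. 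The matrix statement $\mathbf Z_0[J]=0$ is Eq.~\eqref{eq:baseline-matrix-atom-zero}, which yields Eq.~\eqref{eq:body-matrix-scalar-atom-correspondence}. The final remark in the statement follows because the atom at $s=0$ was obtained only through the pushforward and Theorem~\ref{thm:null-shell-exclusion}.
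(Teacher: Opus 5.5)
Your proposal follows essentially the paper's proof: integrability from the positive $00,00$ component of the measure on a patch near the nonzero null cone, then a reconstruction in which the Euclidean TT contraction keeps the timelike spin-two block with coefficient one and kills the spin-zero, longitudinal null-shell and origin pieces, and finally the matrix atom taken from Theorem~\ref{thm:null-shell-exclusion}. The obstacle you flag in Step~1 is resolved by the fixed-frame $K_{00}$ you list last, not by a component in the orbit's rest frame (where $\pi_{00}=0$) and not by $K\propto p_\mu p_\nu$ (which both projectors annihilate): since $\pi_{00}=-|\mathbf p|^2/s$ on $H_s^+$, one gets $\Pi^{(2)}_{00,00}=2|\mathbf p|^4/(3s^2)$ and $\Pi^{(0)}_{00,00}=|\mathbf p|^4/(3s^2)$, with shell integrals bounded below when $|\mathbf p|$ stays away from zero; note also that your $P(Q^2,s)$ is identically one for the numerator built from $\eta_{\mu\nu}-p_\mu p_\nu/s$, and that local finiteness of the zero extension at $s=0$ comes from Step~1 rather than from part~(ii) of Theorem~\ref{thm:matrix-tt-wightman-measure}.
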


\begin{proof}
Use a compact momentum patch near the nonzero null cone with spatial
momentum bounded above and bounded away from zero, and a test tensor with
only its $00$ component nonzero. On $H_s^+$,
\begin{equation}
 \Pi^{(2)}_{00,00}=\frac{2|\boldsymbol p|^4}{3s^2},\qquad
 \Pi^{(0)}_{00,00}=\frac{|\boldsymbol p|^4}{3s^2}.
 \label{eq:positive-component-ir-bound}
\end{equation}
The shell integral of the test weight is uniformly bounded below for
small $s$. Local finiteness of the full positive measure, with no
cancellation between its two timelike blocks, proves
Eq.~\eqref{eq:physical-tensor-ir-integrability}. Zero extension is therefore
a well-defined positive measure.

Reconstruct the time-ordered tensor distribution from its timelike
and null-shell Wightman parts, with enough local UV subtractions. In the
timelike part use the numerator formed from
$\eta_{\mu\nu}-p_\mu p_\nu/s$ for each fixed $s>0$. After continuation,
the external Euclidean TT contraction annihilates every momentum factor
and trace: it extracts coefficient one from the spin-2 block and zero from
the spin-0 block. The null-shell part in
Eq.~\eqref{eq:actual-null-shell-measure} is longitudinal and also gives
zero. The origin measure is removed on traceless test sources before using
the momentum projector. Different polynomial off-shell extensions differ
by $p^2-s$ and therefore give only contact terms after cancellation of
the propagator; Eq.~\eqref{eq:physical-tensor-ir-integrability} controls
their endpoint coefficients. Covariant local tensor
ambiguities project to polynomials in $Q^2$. Thus the TT coefficient is
precisely the subtracted transform of the zero extension constructed
above, as in Eq.~\eqref{eq:subtracted-tt-dispersion}. The standard
time-ordered reconstruction and continuation are the QFT inputs used
here~\cite{Kallen1952,Lehmann1954,Streater1964}. The axiomatic relation
between Euclidean and Wightman functions is treated in
ref.~\cite{OsterwalderSchrader1973} under its reconstruction hypotheses.
Uniqueness of the Stieltjes measure then proves uniqueness of this
representation; it is not used to establish its existence.
Theorem~\ref{thm:null-shell-exclusion} independently supplies its agreement
with the TT matrix atom.
\end{proof}

Choose a subtraction point $Q_*^2>0$ and a nonnegative integer $n_{\rm sub}$ sufficient for UV
convergence. Reflection positivity and analytic continuation give
\begin{equation}
 C_2(Q^2)=P_2(Q^2)
 +(Q_*^2-Q^2)^{n_{\rm sub}}
 \int_0^\infty
 \frac{\dd\rho_2(s)}
 {(s+Q^2)(s+Q_*^2)^{n_{\rm sub}}}.
 \label{eq:subtracted-tt-dispersion}
\end{equation}
The finite polynomial $P_2$ contains contact terms and subtraction
constants. Changing $n_{\rm sub}$, $Q_*^2$, or a finite local counterterm changes the
analytic split but not $\dd\rho_2$.
We write
\begin{equation}
 C_2^{\rm nl}(Q^2):=C_2(Q^2)-P_2(Q^2)
 \label{eq:nonlocal-tt-form-factor}
\end{equation}
for the dispersive part in the chosen subtraction scheme.

The variable $s=p^2$ is invariant mass squared, $Q^2$ is squared external
Euclidean momentum, and $k$ is the RG momentum scale. The Gaussian
parameter $\tau$ has dimension inverse mass squared; $q$ is a Fourier
variable whose role is specified locally. The subtraction order
$n_{\rm sub}$ is distinct from the finite response dimension $N$ used
below. With the stated stress-tensor normalisation, $C_2$ has mass
dimension four and $Z_0$ dimension six.

\subsection{Extraction and classification of the zero-mass atom}
\label{sec:tt-atom-extraction}

The following extraction formula applies to any positive scalar
measure with the stated integrability and source matching. For the
covariant stress tensor, Proposition~\ref{prop:scalar-tt-boundary}
already gives zero atomic weight. Decompose the measure as
\begin{equation}
 \dd\rho_2(s)=Z_0\,\delta_0(\dd s)+\dd\rho_{2,c}(s),
 \qquad \rho_{2,c}(\{0\})=0,
 \label{eq:tt-atom-decomposition}
\end{equation}
where $Z_0\geq0$. The subscript $c$ denotes the complement of the zero
atom, not an assumption of absolute continuity: this remainder can contain
positive-mass atoms and singular continuous weight. The measure notation
also fixes the full weight of an atom at the integration endpoint.

\begin{proposition}[Euclidean extraction of the TT atom]
\label{prop:tt-atom-extraction}
Let
$\dd\nu(s)=\dd\rho_{2,c}(s)/(s+Q_*^2)^{n_{\rm sub}}$.
Assume that $\nu$ is locally finite at $s=0$, has no atom there, and obeys
$\int_1^\infty s^{-1}\dd\nu(s)<\infty$. Then
\begin{equation}
 Z_0=\lim_{Q^2\downarrow0}Q^2C_2^{\rm nl}(Q^2).
 \label{eq:tt-atom-limit}
\end{equation}
In particular, $Z_0>0$ is equivalent to an atom of the positive scalar
measure. It describes a physical TT atom only when that measure has been
independently matched to the source quadratic form. In the baseline
class this matching is Proposition~\ref{prop:scalar-tt-boundary} and gives
$Z_0=0$.
\end{proposition}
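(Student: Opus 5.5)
The plan is to start from the subtracted representation Eq.~\eqref{eq:subtracted-tt-dispersion} and insert the decomposition Eq.~\eqref{eq:tt-atom-decomposition}. By the definition Eq.~\eqref{eq:nonlocal-tt-form-factor}, $C_2^{\rm nl}$ is exactly the dispersive integral. The atom at $s=0$ contributes
\[
 (Q_*^2-Q^2)^{n_{\rm sub}}\frac{Z_0}{Q^2\,Q_*^{2n_{\rm sub}}}.
\]
After multiplication by $Q^2$, this term tends to $Z_0$ as $Q^2\downarrow0$, because the prefactor $(Q_*^2-Q^2)^{n_{\rm sub}}/Q_*^{2n_{\rm sub}}\to1$. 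The proof therefore reduces to showing that
\[
 Q^2\,(Q_*^2-Q^2)^{n_{\rm sub}}\int_{(0,\infty)}\frac{\dd\nu(s)}{s+Q^2}\to0 .
\]
The bounded polynomial prefactor is harmless, so it suffices to show $\int x/(s+x)\,\dd\nu(s)\to0$ as $x\downarrow0$.

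For this I will use dominated convergence, splitting the integral at $s=1$. For each $s>0$ the integrand $x/(s+x)$ is at most $1$ and tends to $0$ as $x\to0$. On $(0,1]$ the measure $\nu$ has finite mass by local finiteness at $0$, so the constant $1$ is an integrable majorant. On $(1,\infty)$, for $x\le1$ we have $x/(s+x)\le 1/s$, which is integrable by the hypothesis $\int_1^\infty s^{-1}\dd\nu<\infty$.

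Applying dominated convergence on both pieces gives the limit $\int \mathbf 1_{\{0\}}\,\dd\nu=\nu(\{0\})$. This vanishes by the no-atom assumption; it also vanishes automatically, since $\rho_{2,c}(\{0\})=0$ and the integration is over $(0,\infty)$. This proves Eq.~\eqref{eq:tt-atom-limit}. The same argument is the scalar counterpart of the strong limit in Proposition~\ref{prop:native-form-spectral}.

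For the remaining claims, $Z_0\geq0$ by positivity, and $Z_0>0$ is exactly $\rho_2(\{0\})>0$, which is an atom of the scalar measure. The physical interpretation is not part of the analytic argument: the limit only recovers $\rho_2(\{0\})$. Matching this to a TT source atom requires the source factorisation, which for the baseline class is Proposition~\ref{prop:scalar-tt-boundary}, and that proposition gives $Z_0=0$.

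The only delicate point is making sure the limit is taken on the dispersive part alone. $P_2$ is a polynomial, so $Q^2P_2(Q^2)\to0$ regardless, but using $C_2^{\rm nl}$ keeps the statement scheme-independent. The integrability hypotheses must also be checked for $\nu$ rather than for $\rho_{2,c}$ itself; this is why the proposition states them in terms of $\nu$. I expect this to be straightforward.
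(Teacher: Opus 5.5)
Your proposal is correct and takes essentially the same route as the paper. You isolate the atomic term, whose subtraction prefactor tends to one, and then remove the remainder by dominated convergence after splitting at $s=1$, with majorant $1$ on $(0,1]$ and the UV hypothesis on the tail; the only cosmetic difference is that the paper bounds the tail directly by $Q^2\int_1^\infty s^{-1}\,\dd\nu(s)\to0$ instead of invoking dominated convergence with majorant $s^{-1}$.
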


\begin{proof}
The atomic term in Eq.~\eqref{eq:subtracted-tt-dispersion}, multiplied by
$Q^2$, tends to $Z_0$. For the remaining measure, split the integral at $s=1$
in a fixed choice of mass units. On
$(0,1]$ the factor $Q^2/(s+Q^2)$ is bounded by one and tends pointwise to
zero; local finiteness and the absence of an atom permit dominated
convergence. On $[1,\infty)$ it is bounded by $Q^2/s$, so the assumed UV
bound makes the tail vanish. The remaining subtraction prefactor is bounded
and tends to $(Q_*^2)^{n_{\rm sub}}$. This proves Eq.~\eqref{eq:tt-atom-limit}.
\end{proof}

The atomic weight and the support of the remainder distinguish the
infrared possibilities. When $Z_0>0$ and
$\rho_{2,c}((0,s_{\rm gap}))=0$ for some $s_{\rm gap}>0$, the atom is
isolated and gives an analytic simple pole. If the remaining weight
accumulates at zero, the atom persists but there is no punctured
neighbourhood free of other spectral singularities. This includes both
continuous thresholds and accumulating positive-mass atoms.

When $Z_0=0$, a threshold reaching the origin may still produce a
long-distance power law; a gapped remainder has no massless component.
An absolutely continuous threshold can give logarithmic,
fractional-power or more general boundary behaviour. The covariant
stress tensor falls into these nonatomic cases by
Theorem~\ref{thm:null-shell-exclusion}. Positivity and representative
independence are assumptions of this spectral classification, rather
than additional spectral cases.

This distinction is already visible in free-field benchmarks. The stress
tensor is bilinear in free scalar, fermion, and gauge fields. Its first
non-vacuum matrix elements therefore lie in two-particle sectors. Massive
constituents give a threshold at $s=4m^2$, while massless constituents give
a continuum beginning at the origin. In a four-dimensional scale-invariant
theory dimensional analysis fixes the TT density to scale as $s^2$, up to
normalisation; logarithmic running can modify this scaling away from a fixed point
~\cite{OsbornPetkou1994}. The absence of a TT atom for the baseline tensor
also holds nonperturbatively by Theorem~\ref{thm:null-shell-exclusion}.
The theorem therefore extends the nonatomic conclusion beyond the
free-field calculation.

\section{Same-source response and changes of resolution}
\label{sec:gaussian-frg}

We now differentiate the spectral closure and its readout together.
The resulting identity first applies to the compact TT branch and
later to a composite-source functional flow. Between these two
realisations, we distinguish the internal Gaussian weights in the
defining equations from external averaging of the computed response.

\subsection{Response from the generating equations}
\label{sec:native-same-source}

\begin{proposition}[Same-source response]
\label{prop:native-same-source}
Let $\mathcal E(Y,h)$ and $\mathcal O(Y,h)$ be continuously Fr\'echet
differentiable maps on Banach spaces near a solution
$\mathcal E(Y_*,0)=0$. If $L=D_Y\mathcal E(Y_*,0)$ is a bounded
isomorphism, there is a local solution branch $Y(h)$ and its measured
linear response is
\begin{equation}
 C=\frac{\dd\mathcal O(Y(h),h)}{\dd h}\bigg|_0
   =\mathcal O_h-\mathcal O_Y L^{-1}\mathcal E_h.
 \label{eq:native-same-source-response}
\end{equation}
For any bounded isomorphism $M$ from the response space to the equation
space, the tuple
\begin{equation}
 \mathcal B=I-M^{-1}L,\quad
 \mathsf S=-M^{-1}\mathcal E_h,\quad
 \mathsf R=\mathcal O_Y,\quad C_{\rm dir}=\mathcal O_h
 \label{eq:native-same-source-tuple}
\end{equation}
has the identical response $C=C_{\rm dir}+\mathsf R(I-\mathcal B)^{-1}
\mathsf S$.
\end{proposition}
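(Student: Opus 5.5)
The plan is to obtain the branch and its derivative from the implicit function theorem, and then to check the tuple identity by direct algebra on the Neumann-free resolvent $(I-\mathcal B)^{-1}$.

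First, because $\mathcal E$ is $C^1$ near $(Y_*,0)$ and $L=D_Y\mathcal E(Y_*,0)$ is a bounded isomorphism, the Banach-space implicit function theorem supplies a neighbourhood of $h=0$ and a unique $C^1$ branch $Y(h)$ with $Y(0)=Y_*$ and $\mathcal E(Y(h),h)=0$. Differentiating this identity at $h=0$ gives $L\,Y'(0)+\mathcal E_h=0$, so $Y'(0)=-L^{-1}\mathcal E_h$.

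Second, the composite $h\mapsto\mathcal O(Y(h),h)$ is $C^1$ by the chain rule. Its derivative at $0$ is $\mathcal O_Y Y'(0)+\mathcal O_h$, which is exactly Eq.~\eqref{eq:native-same-source-response}. I will note that all partial derivatives are taken at $(Y_*,0)$ and that uniqueness of the branch makes $C$ independent of any choice.

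Third, for the tuple I will compute $I-\mathcal B=M^{-1}L$. This is a composition of bounded isomorphisms: $L$ maps the response space to the equation space, and $M^{-1}$ maps back. Hence $I-\mathcal B$ is invertible on the response space, with bounded inverse $(I-\mathcal B)^{-1}=L^{-1}M$, without any smallness assumption on $\mathcal B$. It then follows that
\[
 \mathsf R(I-\mathcal B)^{-1}\mathsf S=\mathcal O_Y L^{-1}M(-M^{-1}\mathcal E_h)=-\mathcal O_Y L^{-1}\mathcal E_h,
\]
and adding $C_{\rm dir}=\mathcal O_h$ reproduces $C$.

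The only delicate point is bookkeeping of spaces. $L$ and $M$ must share domain and codomain, so that $M^{-1}L$ is an operator on the response space. The identity must also be read as $(I-\mathcal B)^{-1}$ existing because of the factorisation, rather than through a convergent Neumann series. I will state explicitly that $\|\mathcal B\|<1$ is not needed. Beyond this, the argument is routine.
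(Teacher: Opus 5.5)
Your proposal is correct and follows essentially the same route as the paper. The paper also uses the implicit-function theorem to get $Y_h=-L^{-1}\mathcal E_h$, substitutes into the chain rule, and uses $I-\mathcal B=M^{-1}L$ so that the inverse cancels the preconditioner. Your added remarks on matching domains and on not needing $\|\mathcal B\|<1$ are accurate elaborations of that argument.
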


\begin{proof}
The implicit-function theorem gives $Y_h=-L^{-1}\mathcal E_h$.
Substitution in the chain rule proves
Eq.~\eqref{eq:native-same-source-response}. Since
$I-\mathcal B=M^{-1}L$, the inverse cancels the preconditioner in
Eq.~\eqref{eq:native-same-source-tuple}.
\end{proof}

For the compact branch, set $\kappa_{\rm TT}=4/27$ and
\begin{equation}
 \mathcal E(c,\vartheta)=\frac{A_{N,R}(\vartheta)}c-\kappa_{\rm TT},
 \qquad \mathcal O_a(c,\vartheta)
 =\frac{\sqrt{\widehat\lambda_a+\widehat m_a^2(\vartheta)}}c .
 \label{eq:native-spectral-closure-readout}
\end{equation}
The readout is the modal mass, or zero-momentum spectral energy in
the spectral convention.
Its dynamical inertial interpretation is considered in
Section~\ref{sec:native-static-exchange}.

\begin{proposition}[Endpoint and same-source modal response]
\label{prop:native-endpoint-response}
On the fixed branch of Proposition~\ref{prop:native-tt-homogeneity},
with $A=A_{N,R}>0$, the closure has exactly one positive root:
\begin{equation}
 c_*(\vartheta)=\frac{A(\vartheta)}{\kappa_{\rm TT}},\qquad
 \mathcal E_c=-\frac A{c^2},\quad
 \mathcal E_\vartheta=\frac{A'}c,\quad
 c_*'=\frac{A'}{\kappa_{\rm TT}}.
 \label{eq:native-endpoint-root-response}
\end{equation}
Primes in this calculation denote $\vartheta$ derivatives.
For a positive modal readout $M_a=\mathcal O_a(c_*(\vartheta),\vartheta)$,
\begin{equation}
 \frac{M_a'}{M_a}
 =\frac{(\widehat m_a^2)'}{2(\widehat\lambda_a+\widehat m_a^2)}
       -\frac{A'}A .
 \label{eq:native-endpoint-modal-response}
\end{equation}
The second term is common to all such modes; the first is the
direct response of the carrier mass shift.
\end{proposition}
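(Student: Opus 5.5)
The closure is explicit, so the plan is direct calculation, with Proposition~\ref{prop:native-same-source} used to confirm that the implicit-function route gives the same answer. The first step is to invoke Proposition~\ref{prop:native-tt-homogeneity}. On the fixed branch the retained set $\mathcal J_N$ and the integration limits $B_j$ do not depend on $c$ or $\vartheta$, so $A=A_{N,R}(\vartheta)$ is a positive function of $\vartheta$ alone. The equation $\mathcal E(c,\vartheta)=A/c-\kappa_{\rm TT}=0$ is then equivalent, for $c>0$, to $c=A/\kappa_{\rm TT}$. This gives existence and uniqueness of the positive root, because $c\mapsto A/c$ is strictly decreasing from $+\infty$ to $0$ on $(0,\infty)$.

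The partial derivatives follow by direct differentiation: $\mathcal E_c=-A/c^2$ and $\mathcal E_\vartheta=A'/c$. The formula $c_*'=A'/\kappa_{\rm TT}$ comes from differentiating the explicit root. As a check against Proposition~\ref{prop:native-same-source}, I would take the scalar case with $L=\mathcal E_c$, which is nonzero at $c_*$ since $A>0$. Then $c_*'=-\mathcal E_\vartheta/\mathcal E_c=(A'/c_*)(c_*^2/A)=c_*A'/A$. Substituting $c_*=A/\kappa_{\rm TT}$ recovers $A'/\kappa_{\rm TT}$. The value of $A'$ comes from Eq.~\eqref{eq:native-tt-torsion-derivative}, which requires the denominators to be strictly positive; I would state this hypothesis explicitly.

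For the modal response, write $\log M_a=\tfrac12\log(\widehat\lambda_a+\widehat m_a^2(\vartheta))-\log c_*(\vartheta)$. Positivity of the readout ensures the logarithm is defined. Differentiating gives the first term of Eq.~\eqref{eq:native-endpoint-modal-response} directly. It also gives $-c_*'/c_*=-(A'/\kappa_{\rm TT})(\kappa_{\rm TT}/A)=-A'/A$, which does not depend on $a$.

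The same result should also be obtained from Eq.~\eqref{eq:native-same-source-response}, as $\mathcal O_\vartheta-\mathcal O_c\mathcal E_c^{-1}\mathcal E_\vartheta$. This holds because $\mathcal O_c=-\mathcal O/c$, so the indirect term equals $-\mathcal O\,c_*'/c_*$. This shows that the common term is exactly the part induced through the spectral state.

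No step is a real obstacle. The only care needed is in the hypotheses: the fixed-branch property from Proposition~\ref{prop:native-tt-homogeneity}, so that the retained set does not move with $\vartheta$, and positivity of $A$ and of $\widehat\lambda_a+\widehat m_a^2$, so that the division and logarithm are legitimate.
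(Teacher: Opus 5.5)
Your proposal is correct and follows the paper's own argument. The paper obtains uniqueness from the strict decrease of $A/c$ from infinity to zero, the root derivative from implicit differentiation $c_*'=-\mathcal E_c^{-1}\mathcal E_\vartheta$, and the modal response from $\mathcal O_\vartheta-\mathcal O_c\mathcal E_c^{-1}\mathcal E_\vartheta$ with $\mathcal O_c=-\mathcal O/c$. Your logarithmic derivative is the same substitution written out, and stating the positivity of the denominators explicitly is a harmless refinement.
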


\begin{proof}
The function $A/c$ decreases continuously from infinity to zero.
Solving its crossing with $\kappa_{\rm TT}$ gives the unique root.
Implicit differentiation gives $c_*'=-\mathcal E_c^{-1}\mathcal E_\vartheta$.
The complete readout derivative is
$\mathcal O_{a,\vartheta}-\mathcal O_{a,c}\mathcal E_c^{-1}\mathcal E_\vartheta$.
Since $\mathcal O_{a,c}=-\mathcal O_a/c$, substitution gives the result.
\end{proof}

This is an explicit substitution into the same-source identity:
the source is the torsion variation of the retained spectrum, the
response variable is its radius endpoint, and the readout is that
same branch's modal mass. At $\vartheta>0$ the stated shifts and $A'<0$
make $M_a'>0$. The response combines a common endpoint shift with
carrier-dependent direct terms. The relation to the local energy
source is developed in Section~\ref{sec:native-static-exchange}.

The nonzero radius Jacobian makes the closure root regular in $c$.
Propagation criticality instead concerns the momentum-dependent
operator $\mathcal L(p;c_*)$ evaluated at that endpoint.

Preconditioning preserves this response and the failure of invertibility
of $L$, but not algebraic eigenvalue multiplicities. For instance
\begin{equation}
 L(z)=\begin{pmatrix}z&0\\0&1\end{pmatrix},\qquad
 M=\begin{pmatrix}0&1\\1&0\end{pmatrix},\qquad
 M^{-1}L(0)=\begin{pmatrix}0&1\\0&0\end{pmatrix}.
 \label{eq:native-preconditioner-jordan}
\end{equation}
The zero eigenvalue of $L(0)$ is simple, whereas the preconditioned
matrix has a double zero with a Jordan block. We therefore formulate
the positive-residue theorem for the reciprocal response operator. At criticality the inverse-function argument
is replaced by the analytic hypotheses of Section~\ref{sec:atom-kernel}.

For a time-dependent update the corresponding object is causal. If its
linearisation has an evolution family $U(t,s)$ and source term
$\mathsf S(s)h(s)$, variation of constants with zero past perturbation gives
\begin{equation}
 \delta\mathcal O(t)=C_{\rm dir}(t)h(t)
 +\int_{t_0}^t\mathsf R(t)U(t,s)\mathsf S(s)h(s)\,\dd s.
 \label{eq:native-causal-response}
\end{equation}
For a bounded autonomous generator $A_{\rm dyn}$ with exponentially
decaying semigroup, the static integral is
$\int_0^\infty e^{tA_{\rm dyn}}\,\dd t=-A_{\rm dyn}^{-1}$, obtained by
integrating its time derivative. Without such a justified limit, the
static inverse, the causal response and a resolved-state covariance
remain different response objects. Scale evolution and unitary time
evolution have distinct generators and are kept separate here.

\subsection{Characterisation of Gaussian averaging}
\label{sec:gaussian-characterisation}

Let $D_r:x\mapsto rx$ denote dilation on $\mathbb R^d$. We call a family of
probability measures $\{\mu_\tau\}_{\tau\geq0}$ an unbiased quadratic-scale
averaging law if it is weakly continuous at $\tau=0$, obeys
\begin{equation}
 \mu_{\tau_1}*\mu_{\tau_2}=\mu_{\tau_1+\tau_2},
 \qquad \mu_0=\delta_0,
 \label{eq:gaussian-semigroup-law}
\end{equation}
is centred and orthogonally invariant, has a finite nonzero second moment
at $\tau=1$, and
satisfies exact square-root self-similarity
\begin{equation}
 \mu_\tau=(D_{\sqrt\tau})_\#\mu_1.
 \label{eq:gaussian-self-similarity}
\end{equation}
The convolution law describes composition of independent averaging steps;
centering removes drift, and square-root self-similarity fixes the relation
between width and scale. Together these conditions specify the averaging
class used here.

\begin{proposition}[Gaussian averaging law]
\label{prop:gaussian-uniqueness}
Every unbiased quadratic-scale averaging law has characteristic function
\begin{equation}
 \widehat\mu_\tau(q)=\exp(-D\tau |q|^2)
 \label{eq:gaussian-characteristic-function}
\end{equation}
for a constant $D>0$. Thus the law is Gaussian, up to the unit chosen for
$\tau$.
\end{proposition}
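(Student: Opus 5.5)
We work with characteristic functions $\varphi_\tau(q)=\widehat\mu_\tau(q)=\int e^{\iu q\cdot x}\,\dd\mu_\tau(x)$ and use self-similarity first, then the semigroup law. By Eq.~\eqref{eq:gaussian-self-similarity}, $\varphi_\tau(q)=\varphi_1(\sqrt\tau\,q)$ for all $\tau\ge0$. Orthogonal invariance of $\mu_1$ gives $\varphi_1(q)=\phi(|q|)$ for a continuous function $\phi$ of one variable with $\phi(0)=1$. The convolution law in Eq.~\eqref{eq:gaussian-semigroup-law} then becomes the functional equation
\[
 \phi(\sqrt{\tau_1}\,r)\,\phi(\sqrt{\tau_2}\,r)=\phi(\sqrt{\tau_1+\tau_2}\,r),
\]
valid for $r\ge0$ and $\tau_1,\tau_2\ge0$. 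Writing $\psi(t)=\phi(\sqrt t)$ for $t\ge0$ (take $r=1$ and $t_i=\tau_i$), this reads $\psi(t_1)\psi(t_2)=\psi(t_1+t_2)$ with $\psi$ continuous and $\psi(0)=1$.

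Next I would show $\psi$ never vanishes. We have $\psi(t)=\psi(t/n)^n$, and continuity at $0$ makes $\psi(t/n)$ close to $1$ for large $n$, so $\psi(t)\neq0$. Equivalently, each $\mu_\tau$ is infinitely divisible, and infinitely divisible characteristic functions have no zeros. Since $\psi$ is real by orthogonal (in particular $x\mapsto-x$) invariance and positive near $0$, continuity and nonvanishing make it positive everywhere. The continuous solutions of Cauchy's multiplicative equation on $[0,\infty)$ are then $\psi(t)=e^{-Dt}$, which gives $\varphi_\tau(q)=\exp(-D\tau|q|^2)$, i.e.\ Eq.~\eqref{eq:gaussian-characteristic-function}.

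The step needing care is pinning down $D$ and using the moment hypothesis. A finite second moment of $\mu_1$ makes $\varphi_1$ twice differentiable at $0$. Centring and invariance give $\varphi_1(q)=1-\tfrac{\sigma^2}{2d}|q|^2+o(|q|^2)$, where $\sigma^2=\int|x|^2\,\dd\mu_1>0$. Comparing with $e^{-D|q|^2}$ yields $D=\sigma^2/(2d)>0$. Positivity of $D$ is essential: $D=0$ would force $\mu_\tau=\delta_0$, which contradicts the nonzero second moment. Also $D<0$ is impossible because $|\varphi|\le1$.

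The remaining point to check is that weak continuity at $\tau=0$ is consistent with these steps and not used circularly. Here it is in fact implied by self-similarity, since $\varphi_1(\sqrt\tau q)\to1$ as $\tau\downarrow0$. Finally, Lévy's uniqueness theorem identifies $\mu_\tau$ as the centred Gaussian with covariance $2D\tau I$. Rescaling the unit of $\tau$ normalises $D$, which is the "up to the unit chosen for $\tau$" clause.
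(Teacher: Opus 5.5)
Your proof is correct and uses the same ingredients as the paper's: nonvanishing via $n$-th roots and continuity at the origin, the continuous solution of the multiplicative Cauchy equation, square-root self-similarity for quadratic scaling, orthogonal invariance for radial dependence, the nonzero second moment for $D>0$, and uniqueness of characteristic functions. The difference is only the order. You impose self-similarity and rotation invariance first, so a single scalar equation in $t=\tau|q|^2$ replaces the paper's family of semigroups indexed by $q$; as you note, this also makes weak continuity a consequence of self-similarity, and your value $D=\sigma^2/(2d)$ agrees with the covariance $2D\tau\,\delta_{ij}$ recorded in the appendix.
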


\begin{proof}
For fixed $q$, weak continuity and
Eq.~\eqref{eq:gaussian-semigroup-law} give a continuous multiplicative
semigroup $\widehat\mu_\tau(q)$. Orthogonal invariance contains
$q\mapsto-q$ in every dimension, so the characteristic function is real.
Infinite divisibility and continuity at the origin imply
that it cannot change sign or vanish at finite $\tau$; hence
$\widehat\mu_\tau(q)=\exp[-\tau\psi(q)]$ with $\psi\geq0$. On the other
hand, Eq.~\eqref{eq:gaussian-self-similarity} gives
$\widehat\mu_\tau(q)=\widehat\mu_1(\sqrt\tau q)$, and therefore
$\psi(rq)=r^2\psi(q)$. Orthogonal invariance then implies
$\psi(q)=D|q|^2$. The finite nonzero second moment gives $0<D<\infty$.
Uniqueness of characteristic functions proves the claim.
\end{proof}

With a rescaling of $\tau$ we set $D=1$. The corresponding heat semigroup is
\begin{equation}
 (\cH_\tau f)(x)
 =\frac1{(4\pi\tau)^{d/2}}
  \int\dd^dy\,e^{-|x-y|^2/(4\tau)}f(y),
 \qquad
 \widehat{\cH_\tau f}(Q)=e^{-\tau Q^2}\widehat f(Q).
 \label{eq:heat-semigroup-action}
\end{equation}
If the coordinate-space width is denoted by $\sigma$, our convention is
$\tau=\sigma^2/2$ for a Fourier factor
$e^{-\sigma^2Q^2/2}$.

The scaling assumption is essential to this characterisation.
Dropping finite variance alone does not admit non-Gaussian laws while exact
square-root self-similarity and orthogonal invariance are retained: the same
proof still gives $\psi(q)=D|q|^2$ with $D\geq0$, and nondegeneracy
excludes the trivial law $D=0$.
Non-Gaussian stable semigroups require a different scaling exponent, such as
$\mu_\tau=(D_{\tau^{1/\alpha}})_\#\mu_1$ with $0<\alpha<2$;
dropping exact self-similarity admits many other convolution semigroups.
The averaging law and the functional RG regulator serve different
purposes: the former fixes external resolution, while $R_k$ below controls
the integration of fluctuations.

\subsection{Preservation of the atom under external smearing}
\label{sec:external-gaussian}

Smearing each stress-tensor insertion with $\cH_{\tau/2}$ multiplies the
Euclidean two-point form factor by
\begin{equation}
 C_{2,\tau}^{\rm ext}(Q^2)
 =e^{-\tau Q^2}C_2(Q^2).
 \label{eq:external-gaussian-correlator}
\end{equation}
The averaging acts on the external insertions of the fixed physical
response.

\begin{proposition}[Gaussian invariance of the atom]
\label{prop:gaussian-atom-invariance}
At every finite $\tau\geq0$,
\begin{equation}
 \lim_{Q^2\downarrow0}Q^2
 e^{-\tau Q^2}C_2^{\rm nl}(Q^2)=Z_0.
 \label{eq:gaussian-atom-limit}
\end{equation}
If $C_2^{\rm nl}$ is regular at $Q^2=0$, multiplication by the Gaussian
cannot create a pole there.
\end{proposition}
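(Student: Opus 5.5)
The plan is to reduce the statement to Proposition~\ref{prop:tt-atom-extraction} by factoring the Gaussian out of the limit. For fixed finite $\tau\geq0$, the function $Q^2\mapsto e^{-\tau Q^2}$ is continuous on $[0,\infty)$ and equals one at $Q^2=0$. Under the hypotheses of Proposition~\ref{prop:tt-atom-extraction}, the limit $\lim_{Q^2\downarrow0}Q^2C_2^{\rm nl}(Q^2)$ exists and equals $Z_0$. The product of two convergent factors then converges to the product of the limits, which is $1\cdot Z_0$. This gives Eq.~\eqref{eq:gaussian-atom-limit} directly.

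Before that, I would note one point about the subtraction scheme. The external averaging in Eq.~\eqref{eq:external-gaussian-correlator} multiplies the full $C_2$, including the polynomial $P_2$. The limit, however, is stated for $e^{-\tau Q^2}C_2^{\rm nl}$. Even if one applies it to $e^{-\tau Q^2}C_2$, the extra term is $Q^2e^{-\tau Q^2}P_2(Q^2)$. This term tends to zero because $P_2$ is a polynomial and is therefore bounded near $Q^2=0$. So the atom read off from the smeared full correlator agrees with the one read off from its nonlocal part.

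For the second assertion, suppose $C_2^{\rm nl}$ is regular at $Q^2=0$, meaning it is bounded, or analytic, in a neighbourhood of zero. Then $e^{-\tau Q^2}C_2^{\rm nl}$ is a product of an entire function with a regular function. It is therefore regular as well, and it satisfies the bound $|e^{-\tau Q^2}C_2^{\rm nl}|\leq|C_2^{\rm nl}|$ for real $Q^2\geq0$. Hence $Q^2e^{-\tau Q^2}C_2^{\rm nl}\to0$, and no pole and no atom can appear.

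I do not expect a genuine obstacle here. The only delicate point is keeping the two Gaussian roles separate. The Gaussian acts on an already formed response, so it does not alter the measure $\dd\rho_2$ used in Proposition~\ref{prop:tt-atom-extraction}. The smeared function need not itself be a Stieltjes transform, but this does not matter, because the argument never requires that representation for the smeared response.
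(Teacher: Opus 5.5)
Your proof is correct and is essentially the paper's argument. The paper writes $C_2^{\rm nl}=Z_0/Q^2+C_{\rm rem}$ with $Q^2C_{\rm rem}\to0$, which is the extraction limit you invoke, and uses that the entire multiplier equals one at the origin, so the residue is unchanged (it also records that the regular part shifts by $-\tau Z_0$); your product-of-limits step, your remark that $Q^2e^{-\tau Q^2}P_2(Q^2)\to0$, and your handling of the regular case are the same reasoning, stated a little more economically.
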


\begin{proof}
The multiplier is entire, nonzero, and equal to one at the origin. If
$C_2^{\rm nl}=Z_0/Q^2+C_{\rm rem}$ with
$Q^2C_{\rm rem}\to0$, then
\begin{equation}
 e^{-\tau Q^2}C_2^{\rm nl}(Q^2)
 =\frac{Z_0}{Q^2}+C_{\rm rem}(Q^2)-\tau Z_0+o(1).
 \label{eq:gaussian-pole-expansion-paper1}
\end{equation}
The residue is unchanged. The converse follows because multiplication by an
entire function cannot create a negative Laurent coefficient from a regular
form factor.
\end{proof}

The filtered expression in Eq.~\eqref{eq:external-gaussian-correlator} need
not itself be a Stieltjes function, and
$e^{-\tau Q^2}P_2(Q^2)$ is not a contact polynomial. Positivity and contact
subtraction are therefore imposed on the unfiltered correlator first.
Equation~\eqref{eq:external-gaussian-correlator} is used only as a robustness
test. At finite $\tau$, $\cH_\tau$ is injective on tempered distributions;
for $\tau>0$ its inverse is unbounded on the natural Hilbert spaces.
Its use therefore entails a loss of resolution at finite precision,
which is quantified separately from the functional truncation error.

\subsection{The background-source effective action}
\label{sec:background-source-frg}

The interacting response must be generated before it is projected. In
Euclidean signature introduce a source $h_{\mu\nu}$ for the centred,
renormalised stress tensor and write
\begin{equation}
 Z_k[j,h]=\int\mathcal D\phi\,
 \exp\left[-S[\phi]-\Delta S_k[\phi]
      +j\cdot\phi+h\cdot\Theta_R\right],
 \qquad W_k=\log Z_k,
 \label{eq:paper1-composite-generating-functional}
\end{equation}
where
$h\cdot\Theta_R=\int\dd^4x\,h^{\mu\nu}\Theta_{\mu\nu,R}$ and
$\Delta S_k=\tfrac12\phi\cdot R_k\cdot\phi$ is schematic superfield
notation. Counterterms polynomial in $h$ are included from the outset; they
renormalise coincident composite insertions.
The auxiliary source $h$ in this representation is distinct from the
collective field $\Phi$ in the matter coupling of
Eq.~\eqref{eq:native-matter-tensor-coupling}. Identifying their response
functions requires the operator and source matching of
Section~\ref{sec:atom-kernel}; it does not follow from their tensor
indices.

The modified Legendre transform is taken only with respect to $j$,
\begin{equation}
 \Gamma_k[\varphi,h]
 =j\cdot\varphi-W_k[j,h]-\Delta S_k[\varphi],
 \qquad \varphi=\frac{\delta W_k}{\delta j}.
 \label{eq:paper1-modified-legendre-transform}
\end{equation}
At fixed $h$ it obeys the exact Wetterich equation~\cite{Wetterich1993}
\begin{equation}
 \partial_t\Gamma_k[\varphi,h]
 =\frac12\STr\left[G_k[\varphi,h]\,\partial_tR_k\right],
 \qquad
 G_k=(\Gamma_k^{(2,0)}+R_k)^{-1},
 \qquad t=\log(k/k_0).
 \label{eq:paper1-wetterich}
\end{equation}
Here the first superscript counts fundamental-field derivatives and the
second counts $h$ derivatives. Scale-dependent composite fields and
their correlation functions can also be treated by the flow construction
of ref.~\cite{FloerchingerWetterich2009}; here we retain the fixed composite
source and differentiate its generating functional directly.

For a source-independent regulator, differentiating
Eq.~\eqref{eq:paper1-wetterich} twice with respect to source components
$h_i,h_j$ gives
\begin{align}
 \partial_t\Gamma^{(0,2)}_{k,ij}
 =\frac12\STr\bigl[&G_k\Gamma^{(2,1)}_{k,j}G_k
                   \Gamma^{(2,1)}_{k,i}G_k\partial_tR_k
 +G_k\Gamma^{(2,1)}_{k,i}G_k
                   \Gamma^{(2,1)}_{k,j}G_k\partial_tR_k
 \notag\\[-2mm]
 &-G_k\Gamma^{(2,2)}_{k,ij}G_k\partial_tR_k\bigr].
 \label{eq:paper1-composite-two-source-flow}
\end{align}
The equation is exact but does not close on a finite vertex set. If $R_k$
depends on $h$, the corresponding source derivatives of $R_k$ must be added.

The connected source response is not, in general,
$\Gamma_k^{(0,2)}$. From Eq.~\eqref{eq:paper1-modified-legendre-transform}
one obtains the Schur identity
\begin{equation}
 W^{(0,2)}_{k,hh}
 =-\Gamma^{(0,2)}_{k,hh}
  +\Gamma^{(1,1)}_{k,h\varphi}
   (\Gamma_k^{(2,0)}+R_k)^{-1}
   \Gamma^{(1,1)}_{k,\varphi h},
 \label{eq:source-schur-identity}
\end{equation}
with signs fixed by Eq.~\eqref{eq:paper1-modified-legendre-transform}.
At $k=0$, after renormalisation and continuation, the TT projection of
$W^{(0,2)}_{hh}$ is the quantity that must be matched to
Eq.~\eqref{eq:subtracted-tt-dispersion}.

\begin{proposition}[Composite source to response kernel]
\label{prop:composite-source-kernel-bridge}
At vanishing fundamental-field source $j$ and for the source-independent
regulator used in Eq.~\eqref{eq:paper1-composite-two-source-flow}, define the
exact stationarity map
\begin{equation}
 \mathcal E_k(\varphi,h)
 :=\Gamma_{k,\varphi}[\varphi,h]+R_k\varphi=0,
 \qquad
 \mathcal A_k:=D_\varphi\mathcal E_k
 =\Gamma_{k,\varphi\varphi}+R_k=G_k^{-1}.
 \label{eq:canonical-stationarity-map}
\end{equation}
Choose a noncritical source-free reference branch $\varphi_*$ at
$h=0$. Assume that $\mathcal E_k$ is continuously Fr\'echet differentiable
near this branch and that its Jacobian $\mathcal A_k$ is a bounded
isomorphism on a specified response space. All response maps below are
evaluated on this branch. External momentum labels Fourier blocks of the
linearised response; it need not parametrise the translation-invariant
vacuum field itself. Let $\mathcal A_{0,k}(z)$ be an analytic
boundedly invertible preconditioner on the same space, and set
\begin{align}
 \mathfrak F_k(\varphi;h)
 &:=\varphi-\mathcal A_{0,k}^{-1}\mathcal E_k(\varphi,h),
 &\mathcal B_k&:=D_\varphi\mathfrak F_k
 =I-\mathcal A_{0,k}^{-1}\mathcal A_k,
 \notag\\
 \mathsf S_k&:=-\mathcal A_{0,k}^{-1}\Gamma_{k,\varphi h},
 &\mathsf R_k&:=-\Gamma_{k,h\varphi},
 &C_{k,\mathrm{dir}}&:=-\Gamma_{k,hh}.
 \label{eq:canonical-response-tuple}
\end{align}
Then an infinitesimal physical tensor-source variation obeys
\begin{equation}
 \delta\varphi=(I-\mathcal B_k)^{-1}\mathsf S_k\,\delta h,
 \qquad
 W_{k,hh}=C_{k,\mathrm{dir}}
 +\mathsf R_k(I-\mathcal B_k)^{-1}\mathsf S_k.
 \label{eq:canonical-response-resolvent}
\end{equation}
Thus the direct term, source map, readout map, and response kernel are fixed
by the same source-dependent effective action and reproduce the Schur
identity~\eqref{eq:source-schur-identity} exactly.

More generally, let $Y$ be a complete set of fields, vertices, and composite
responses satisfying an exact closed stationarity or Schwinger--Dyson system
$\mathcal E_k(Y,h)=0$, and let $\mathcal O_k(Y,h)$ be the exact connected
stress-tensor one-point readout. Assume that these maps are continuously
Fr\'echet differentiable and that $D_Y\mathcal E_k$ is boundedly
invertible on the chosen noncritical reference branch. Replacing in
Eq.~\eqref{eq:canonical-response-tuple}
\begin{equation}
 \mathcal A_k\mapsto D_Y\mathcal E_k,
 \quad
 \mathsf S_k\mapsto-\mathcal A_{0,k}^{-1}D_h\mathcal E_k,
 \quad
 \mathsf R_k\mapsto D_Y\mathcal O_k,
 \quad
 C_{k,\mathrm{dir}}\mapsto\partial_h\mathcal O_k|_Y
 \label{eq:extended-response-tuple}
\end{equation}
gives the same resolvent by the implicit-function theorem. Schur--Feshbach
elimination of complementary variables preserves this identity whenever
the eliminated block is invertible.

At a singular stationarity Jacobian, continuation from the
noncritical domain uses an analytic response family satisfying the
Fredholm hypotheses of Section~\ref{sec:atom-kernel}. This is an
additional assumption on the family, independent of the
preconditioner. For a finite vertex ansatz, $\mathcal E_{k,N}$ and
$\mathcal O_{k,N}$ give the response of the corresponding closure;
comparison with $W_{k,hh}$ uses the convergence bounds of
Appendix~\ref{app:fredholm}.
\end{proposition}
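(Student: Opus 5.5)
The plan is to reduce the statement to Proposition~\ref{prop:native-same-source}, applied with $\mathcal E=\mathcal E_k$, $Y=\varphi$ and the readout $\mathcal O_k(\varphi,h):=-\Gamma_{k,h}[\varphi,h]$. This readout is the connected one-point function $W_{k,h}$ evaluated on the stationary branch. Three steps are needed: identify this readout, apply the implicit-function argument, and check that the resulting resolvent matches the Schur identity~\eqref{eq:source-schur-identity}.

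\emph{Step 1: the readout.} From the modified Legendre transform~\eqref{eq:paper1-modified-legendre-transform} at fixed $h$, differentiating in $\varphi$ gives $j=\Gamma_{k,\varphi}+R_k\varphi$. At $j=0$ this is exactly the stationarity equation $\mathcal E_k(\varphi,h)=0$. Differentiating in $h$ at fixed $\varphi$ gives $\Gamma_{k,h}=-W_{k,h}$; the $j$-dependence cancels by the definition of $\varphi$. Hence $W_{k,h}(h)=\mathcal O_k(\varphi(h),h)$ along the solution branch, and $W_{k,hh}$ is the total $h$-derivative of $\mathcal O_k$.

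\emph{Step 2: the implicit-function argument.} Since $\mathcal A_k$ is a bounded isomorphism, the branch $\varphi(h)$ exists and
\[
 \delta\varphi=-\mathcal A_k^{-1}\Gamma_{k,\varphi h}\,\delta h .
\]
Because $\mathcal A_{0,k}$ is boundedly invertible, $I-\mathcal B_k=\mathcal A_{0,k}^{-1}\mathcal A_k$, and therefore
\[
 (I-\mathcal B_k)^{-1}\mathsf S_k=-\mathcal A_k^{-1}\mathcal A_{0,k}\mathcal A_{0,k}^{-1}\Gamma_{k,\varphi h}=-\mathcal A_k^{-1}\Gamma_{k,\varphi h}.
\]
This proves the first relation in~\eqref{eq:canonical-response-resolvent}. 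The chain rule then gives
\[
 W_{k,hh}=\mathcal O_{k,h}+\mathcal O_{k,\varphi}\,\delta\varphi/\delta h=-\Gamma_{k,hh}+\mathsf R_k(I-\mathcal B_k)^{-1}\mathsf S_k,
\]
which is the second relation.

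\emph{Step 3: the Schur identity.} Expanding the right-hand side gives
\[
 -\Gamma_{k,hh}+\Gamma_{k,h\varphi}(\Gamma_{k,\varphi\varphi}+R_k)^{-1}\Gamma_{k,\varphi h},
\]
which is~\eqref{eq:source-schur-identity}. A regulator independent of the source is essential here: it ensures $D_h\mathcal E_k=\Gamma_{k,\varphi h}$ with no $R_k$ term, and that $\Delta S_k$ contributes nothing to $\Gamma_{k,h}$.

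For the general statement, the same proof applies verbatim through Proposition~\ref{prop:native-same-source} with the substitutions in~\eqref{eq:extended-response-tuple}. The preconditioner again cancels, since $(I-\mathcal B)^{-1}\mathsf S=-(D_Y\mathcal E_k)^{-1}D_h\mathcal E_k$. For Schur--Feshbach elimination, split $Y=(Y_1,Y_2)$ with $D_{Y_2}\mathcal E_{k,2}$ invertible. The implicit-function theorem then solves $Y_2(Y_1,h)$, and the reduced system in $Y_1$ has the Schur complement as its Jacobian. By the chain rule, the reduced system together with the composed readout has the same total derivative, because the solution branch $Y(h)$ is unchanged.

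The main obstacle is functional-analytic rather than algebraic. Step~1 needs the Legendre relations to hold in the chosen Banach setting, including the Fr\'echet differentiability of $\Gamma_k$ in $h$ and the equality of mixed derivatives $\Gamma_{k,h\varphi}=\Gamma_{k,\varphi h}^{*}$ used to call $\mathsf R_k$ the readout. I would take these as part of the stated differentiability hypotheses on the noncritical branch. The closing remarks about critical continuation and finite vertex ansätze are not claims proved here; they defer to Section~\ref{sec:atom-kernel} and Appendix~\ref{app:fredholm}.
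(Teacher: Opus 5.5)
Your proposal is correct and follows essentially the paper's route: differentiate the stationarity equation to get $\mathcal A_k\,\delta\varphi=-\Gamma_{k,\varphi h}\,\delta h$, cancel the preconditioner through $I-\mathcal B_k=\mathcal A_{0,k}^{-1}\mathcal A_k$, and substitute into the total $h$-derivative of $W_{k,h}=-\Gamma_{k,h}$; the extended statement is handled by the same chain rule, and your partial implicit solve is a nonlinear form of the paper's block Gaussian elimination. One small correction: the symmetry $\Gamma_{k,h\varphi}=\Gamma_{k,\varphi h}^{*}$ that you list as an obstacle is not needed, because $\mathsf R_k=-\Gamma_{k,h\varphi}$ is by definition $D_\varphi$ of your readout $\mathcal O_k=-\Gamma_{k,h}$.
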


\begin{proof}
Differentiating Eq.~\eqref{eq:canonical-stationarity-map} at fixed vanishing
fundamental source gives
$\mathcal A_k\,\delta\varphi=-\Gamma_{k,\varphi h}\delta h$.
Since
$I-\mathcal B_k=\mathcal A_{0,k}^{-1}\mathcal A_k$ and the
Jacobian is invertible on the stated noncritical domain, the first identity in
Eq.~\eqref{eq:canonical-response-resolvent} follows. Substitution into the
total $h$ derivative of $W_{k,h}=-\Gamma_{k,h}$ gives its second identity,
which is precisely Eq.~\eqref{eq:source-schur-identity}. The extended result
is the same chain rule applied to
$D_Y\mathcal E_k\,\delta Y=-D_h\mathcal E_k\,\delta h$ on its
invertible domain; block Gaussian
elimination gives the Schur--Feshbach statement. A projected finite system
proves this identity only for its projected maps, so comparison with the
exact system requires additional bounds.
\end{proof}

\subsection{TT response space and complementary channels}
\label{sec:tt-response-space}

A practical vertex expansion contains physical TT insertions together with
trace, longitudinal, BRST, EOM, and auxiliary response variables. Let
$P$ project the response space $\cX$ onto its TT block and
$\overline P=I-P$. The two blocks reduce the full response operator
$\cB$ precisely when
\begin{equation}
 \overline P\cB P=P\cB\overline P=0.
 \label{eq:tt-response-closure}
\end{equation}
Whenever $I-\overline P\cB\overline P$ is invertible, the exact projected resolvent
uses the Schur--Feshbach kernel
\begin{equation}
 \cB_{\TT}^{\rm eff}
 =P\cB P
 +P\cB\overline P
  (I-\overline P\cB\overline P)^{-1}
  \overline P\cB P.
 \label{eq:tt-feshbach-kernel}
\end{equation}
Retaining only $P\cB P$ loses the second term whenever it is nonzero.
One-sided invariance can make that term vanish without reducing both blocks.
The source, readout, and direct term must be reduced at the same time, as
given explicitly in Eqs.~\eqref{eq:app-complete-feshbach-tuple} and
\eqref{eq:app-complete-reduced-response}. Neither the source nor the
readout is assumed to vanish on the $\overline P$ block. In this reduction,
the effective maps $\mathsf S_{\mathrm{eff}}$ and
$\mathsf R_{\mathrm{eff}}$ act on tensor sources before the scalar
contraction specified in Section~\ref{sec:tt-response-factorisation}.

The reduced response acts between the source spaces as
\begin{equation}
 \mathcal J_{\TT}
 \xrightarrow{\ \mathsf S_{\TT}\ }\cX_{\TT}
 \xrightarrow{\ (I-\cB_{\TT}^{\rm eff})^{-1}\ }\cX_{\TT}
 \xrightarrow{\ \mathsf R_{\TT}\ }\mathcal J_{\TT}^{*}.
 \label{eq:typed-tt-response-chain}
\end{equation}
The first map realises the stress-tensor source in the chosen response
coordinates; the last map reconstructs the measured TT correlator. The source and readout must transform together under a change of operator
normalisation so that the measured quadratic form is preserved.
Section~\ref{sec:atom-kernel} states the corresponding criticality test.

\section{Visible critical modes and spectral residues}
\label{sec:atom-kernel}

The compact spectral ladder gives an explicit internal crossing, and
its matter vertices resolve the source on compact tensor modes.
We begin with these calculations and then derive the pole residue
from the critical space of the propagation operator.
The V4 mass-shift prescription accompanying the compact closure
uses $t_\chi=e^{-2\chi}$, $\chi\geq0$. We use the endpoint units
$L=c$, $k=\sqrt\pi/c$ of Section~\ref{sec:native-tt-spectral-branch}.
It holds the matter eigenvalues, volume and Gaussian window fixed,
replaces each $\widehat m_j^2$ by $t_\chi\widehat m_j^2$, and scales
the two entries $8/c^2$ and $6/c^2$ of the TT vertex by $t_\chi$.
Thus it is a specified spectral deformation, distinct from a common
radius rescaling of every operator. With the data of
Eq.~\eqref{eq:native-tt-coefficient}, define
\begin{align}
 \mathcal J(\chi)&=\frac1{\pi^3}\sum_jd_jw_j
 \int_0^{B_j}\frac{X_j(u)^2e^{-X_j(u)/\pi}}
 {[X_j(u)+t_\chi\widehat m_j^2]^2}\,\dd u,\notag\\
 D(t)&=8t\left[\frac{8t}{1-e^{-8t/\pi}}+6t\right],\qquad
 x(\chi)=\frac{\mathcal J(\chi)}{D(t_\chi)}.
 \label{eq:native-chi-spectral-ladder}
\end{align}
The frequency substitution $u=c\omega$ gives
$\Pi_\chi=c^{-4}\mathcal J(\chi)$ and
$v_{\rm TT,\chi}=c^4/D(t_\chi)$. Their product $x=v_{\rm TT,\chi}\Pi_\chi$
is therefore independent of $c$, while $\mathcal J(0)=32A_{N,R}$.

Its critical point is simple. Every summand of $\mathcal J$ is
nondecreasing in $\chi$ and its limit is positive. For $t>0$ the function
$8t/(1-e^{-8t/\pi})$ is strictly increasing, since $e^y>1+y$.
Consequently $D'(t)>0$, $x'(\chi)>0$, and
$D(t)\sim8\pi t$ as $t\downarrow0$, so $x(\chi)\to\infty$.
For the eight rows of Appendix~\ref{sec:app-compact-spectral-branch},
$\sum_jd_jw_j=2080$. Bounding each rational kernel by one and
$\int_0^\infty e^{-u^2/\pi}\dd u=\pi/2$ gives
$x(0)<65/(7\pi^2)<1$. There is exactly one $\chi_c>0$ with
$x(\chi_c)=1$. At $\vartheta=1/50$, independent quadrature of these
same rows gives $x(0)\simeq0.10065632$,
$\chi_c\simeq0.65722485$ and $x'(\chi_c)\simeq3.05763755$.
The preceding inequalities establish existence and simplicity
independently of the quadrature.

The scalar ladder response $\Pi_\chi/(1-x(\chi))$ therefore obeys
\begin{equation}
 \lim_{\delta\downarrow0}\delta
 \frac{\Pi_{\chi_c-\delta}}{1-x(\chi_c-\delta)}
 =\frac{\mathcal J(\chi_c)}{c^4x'(\chi_c)}>0.
 \label{eq:native-chi-detuning-residue}
\end{equation}
This pole is in the internal detuning $\delta=\chi_c-\chi$.
The radius endpoint and the propagation variable remain different
arguments: the former fixes $c$, whereas the latter must enter the
source-resolved spectral response before a spatial pole can be read out.

\subsection{Visible tensor modes of the compact source}
\label{sec:native-compact-tt-visibility}

There is an explicit TT projection of the scalar source in
Proposition~\ref{prop:native-geometric-source}. Identify the round
quotient with $SO(3)$ and write its adjoint matrix as $R_{ib}(U)$.
The nine real modes
$u_{ib}=(3/V_3)^{1/2}R_{ib}$ are orthonormal and have scalar
eigenvalue $8/L^2$. For a real symmetric trace-free matrix $q$, form
$h_q^+=V_3^{-1/2}q_{ab}\theta^a\theta^b$ in the left-invariant
orthonormal coframe, and $h_q^-$ in the right-invariant coframe.
Both are transverse and traceless. They span orthogonal
five-dimensional subspaces with rough tensor eigenvalue $6/L^2$.

The scalar operator source and its energy source on this shell are
\begin{align}
 H[h_q^+]&=\frac4{L^2\sqrt{V_3}}I_3\otimes q,&
 H[h_q^-]&=\frac4{L^2\sqrt{V_3}}q\otimes I_3,\notag\\
 D H_{\rm sc}[h_q^+]&=\alpha_2 I_3\otimes q,&
 D H_{\rm sc}[h_q^-]&=\alpha_2 q\otimes I_3,\qquad
 \alpha_2=\frac2{M_2L^2\sqrt{V_3}},
 \label{eq:native-compact-tt-vertices}
\end{align}
where $H_{\rm sc}=(-\Delta_g+\mathcal R_g/8)^{1/2}$ and
$M_2=\sqrt{35}/(2L)$. The curvature variation vanishes on these TT
directions. All source factors therefore follow from the same scalar
operator and its common source normalisation.
Appendix~\ref{sec:app-compact-tt-visibility} proves the normalisation,
transversality and the two matrix formulas.

For a real symmetric occupation matrix $N$ on this nine-dimensional shell, let
$N_L$ and $N_R$ be its partial traces. Its ten TT source components
are the matrix pair
\begin{equation}
 j_{\rm hom}(N)=\alpha_2\big((N_L)_0,(N_R)_0\big),\qquad
 B_0=B-\tfrac13\operatorname{tr}(B)I_3.
 \label{eq:native-compact-tt-source}
\end{equation}
They span both tensor subspaces, so this source is visible
on a ten-dimensional TT space. An isotropic occupation
$N=I_9/9$ has zero TT projection despite positive energy $M_2$.
The constant scalar mode also has positive energy and zero TT
projection. These exact examples make the trace and constraint
response essential in the static matter calculation of
Section~\ref{sec:native-static-exchange}.

The long-root line used in the compact normalisation also has an
explicit source on this tensor space. In the right-invariant frame,
let $\mathcal Q=\operatorname{Sym}_0(\mathbb R^3)$ and set
$a_{q,v}=(3/V_3)^{1/2}(qRv)_a\theta_R^a$ for $q\in\mathcal Q$,
$v\in\mathbb R^3$. These forms span the $(2,1)$ band with Hodge
eigenvalue $16/L^2$. On its identification with
$\mathcal Q\otimes\mathbb R^3$, the compressed source for
$h_t^-=V_3^{-1/2}t_{ab}\theta_R^a\theta_R^b$ is
\begin{equation}
 H_{\rm lr}[h_t^-]=\frac{16}{L^2\sqrt{V_3}}
       \mathcal A_t\otimes I_3,\qquad
 \mathcal A_tq=(tq+qt)_0 .
 \label{eq:native-long-root-tt-source}
\end{equation}
The source is obtained by differentiating the one-form spectral
operator and its norm with respect to the tensor perturbation. Appendix~\ref{sec:app-native-long-root-source}
derives it and its nonzero source on each real decomposable mode.
The equal occupation of the full band again has zero TT source.

These pairings determine the source on compact tensor eigenspaces
whose rough eigenvalue is positive at finite $L$. The propagation
residue depends on its projection onto the critical space of the
sourced response, as described by the following theorem.

\subsection{A finite-dimensional critical space}
\label{sec:native-critical-space}

\begin{theorem}[Positive residue on a finite critical space]
\label{thm:native-finite-rank-residue}
Let $L(z)$ be bounded and analytic on a Hilbert space $\mathcal X$ near
zero, with $L(\overline z)^*=L(z)$. Suppose that $L(0)$ has a
finite-dimensional kernel with orthogonal projection $P$, and that its
restriction to $\overline P\mathcal X$, $\overline P=I-P$, is boundedly
invertible. Assume
\begin{equation}
 G=P L'(0)P\big|_{P\mathcal X}>0.
 \label{eq:native-critical-slope}
\end{equation}
Then $L(z)$ is invertible for sufficiently small nonzero $z$ and
\begin{equation}
 L(z)^{-1}=\frac{P G^{-1}P}{z}+O(1).
 \label{eq:native-critical-inverse}
\end{equation}
For a Hilbert source space $\mathcal U$, if $\mathsf S(z):\mathcal U\to\mathcal X$ and
$\mathsf R(z):\mathcal X\to\mathcal U$ are analytic, satisfy
$\mathsf R(x)=\mathsf S(x)^*$ for real $x$, and $C_{\rm dir}$ is analytic,
then the residue of $C=C_{\rm dir}+\mathsf R L^{-1}\mathsf S$ is
\begin{equation}
 Z_{\rm nat}=\mathsf S(0)^*P G^{-1}P\mathsf S(0)\succeq0,
 \qquad \operatorname{rank}Z_{\rm nat}
        =\operatorname{rank}(P\mathsf S(0)).
 \label{eq:native-finite-rank-residue}
\end{equation}
\end{theorem}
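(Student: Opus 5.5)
The plan is a Schur--Feshbach (Lyapunov--Schmidt) reduction of $L(z)$ onto the finite-dimensional critical space, followed by Laurent expansion of the resulting finite matrix. The starting point is that $L(0)$ is self-adjoint, because $L(\overline z)^*=L(z)$ at $z=0$. Hence $\Ker L(0)=P\mathcal X$ reduces $L(0)$, and $PL(0)=L(0)P=0$. Write $L(z)$ in block form relative to $\mathcal X=P\mathcal X\oplus\overline P\mathcal X$, with blocks $L_{11}(z)=PL(z)P$, $L_{12}(z)=PL(z)\overline P$, $L_{21}(z)=\overline PL(z)P$ and $L_{22}(z)=\overline PL(z)\overline P$. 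At $z=0$ the off-diagonal blocks and $L_{11}(0)$ vanish, while $L_{22}(0)$ is boundedly invertible by hypothesis. By analyticity and a Neumann series, $L_{22}(z)$ remains invertible with analytic inverse for small $|z|$. The off-diagonal blocks are $O(z)$ and $L_{11}(z)=zG+O(z^2)$.

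Next form the Schur complement on the critical space,
\[
S(z)=L_{11}(z)-L_{12}(z)L_{22}(z)^{-1}L_{21}(z)=zG+O(z^2).
\]
Since the correction term is a product of two $O(z)$ factors, it is $O(z^2)$ and does not alter the linear coefficient. Because $G>0$ on the finite-dimensional space $P\mathcal X$, write $S(z)=z\bigl(G+O(z)\bigr)$. This is invertible for small $z\neq0$, with $S(z)^{-1}=z^{-1}G^{-1}+O(1)$. The block inversion formula then gives
\[
L(z)^{-1}=PS^{-1}P-PS^{-1}L_{12}L_{22}^{-1}-L_{22}^{-1}L_{21}S^{-1}P+L_{22}^{-1}+L_{22}^{-1}L_{21}S^{-1}L_{12}L_{22}^{-1}.
\]
In the second and third terms, an $O(z)$ off-diagonal factor cancels the $z^{-1}$ from $S^{-1}$. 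In the last term, two such factors make it $O(z)$. Only $PG^{-1}P/z$ survives at order $z^{-1}$, which proves \eqref{eq:native-critical-inverse}. This also establishes invertibility of $L(z)$, since the block formula yields a two-sided inverse.

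For the response, expand $\mathsf S(z)=\mathsf S(0)+O(z)$ and $\mathsf R(z)=\mathsf R(0)+O(z)$, and note that $C_{\rm dir}$ is analytic. The $z^{-1}$ coefficient of $C$ is then $\mathsf R(0)PG^{-1}P\mathsf S(0)$. At the real point $x=0$ we have $\mathsf R(0)=\mathsf S(0)^*$, which gives the stated $Z_{\rm nat}$. Positivity follows from $G^{-1}>0$, since $\langle u,Z_{\rm nat}u\rangle=\langle P\mathsf S(0)u,G^{-1}P\mathsf S(0)u\rangle\geq0$. For the rank, write $Z_{\rm nat}=T^*T$ with $T=G^{-1/2}P\mathsf S(0)$, so that $\operatorname{rank}Z_{\rm nat}=\operatorname{rank}T$. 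Because $G^{-1/2}$ is invertible on $P\mathcal X$, this equals $\operatorname{rank}P\mathsf S(0)$, which is finite.

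The main point requiring care is the use of self-adjointness of $L(0)$. It is what guarantees that $P$ (the orthogonal projection onto the kernel) also annihilates the range of $L(0)$ from the left, so that $PL(0)=0$ and the off-diagonal blocks vanish at $z=0$. Without it, a Jordan-type structure such as \eqref{eq:native-preconditioner-jordan} could appear. A second point is that $G$ is defined from $L'(0)$ compressed by the same $P$. I would verify that $PL(z)P=zPL'(0)P+O(z^2)$ uses only $PL(0)P=0$. I would also check that the $O(z^2)$ Schur correction cannot affect the leading term, using uniform operator-norm bounds on the analytic blocks in a small disc.
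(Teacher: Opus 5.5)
Your proposal is correct and follows essentially the same route as the paper. Both arguments split $\mathcal X=P\mathcal X\oplus\overline P\mathcal X$, use self-adjointness of $L(0)$ to kill its mixed and critical blocks, show the Schur complement is $zG+O(z^2)$, note that the $O(z)$ mixed factors confine the pole to the $P$ block, and factor $Z_{\rm nat}=T^*T$ with $T=G^{-1/2}P\mathsf S(0)$ to get positivity and rank. You only write out the block-inverse formula and the order counting more explicitly than the paper does.
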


\begin{proof}
Use the orthogonal splitting $P\mathcal X\oplus\overline P\mathcal X$.
The complementary block has an analytic inverse near zero. Both mixed
blocks of $L(0)$ vanish by self-adjointness, and the $P$ block vanishes.
The Schur complement is consequently $zG+O(z^2)$. Since $G$ is positive
definite on the finite-dimensional space, its inverse has expansion
$z^{-1}G^{-1}+O(1)$. The mixed factors in the block inverse are $O(z)$,
so only its $P$ block has a pole. This proves
Eq.~\eqref{eq:native-critical-inverse}. Evaluation of the analytic source
and readout at zero gives the residue. Finally it is $T^*T$ with
$T=G^{-1/2}P\mathsf S(0)$, proving positivity and the rank identity.
\end{proof}

Reciprocity makes the slope Hermitian, but does not make it positive.
For example $L(z)=-zI$ has a negative residue despite reciprocity.
Equation~\eqref{eq:native-critical-slope} must be calculated from the
propagation rule. The projection $P$ is taken after the physical
constraints have been handled; a gauge or normalisation direction is
not made observable by this theorem. For an unbounded form realisation,
the bounded result applies only after a justified form-space reduction;
Proposition~\ref{prop:native-form-spectral} provides a separate exact
unbounded affine construction.

For matter source columns $\mathsf S_a(0)$, the critical mixed residue is
the Gram matrix of $G^{-1/2}P\mathsf S_a(0)$. A common denominator does
not force that matrix to have rank one. If an energy-source
identity gives $P\mathsf S_a(0)=Q_a v_0$ with the same $v_0$ for every
allowed matter state, it instead gives
\begin{equation}
 Z_{ab}=Q_aQ_b\langle v_0,G^{-1}v_0\rangle
 \label{eq:native-common-energy-residue}
\end{equation}
for real energy charges. Section~\ref{sec:native-static-exchange}
derives this common source direction for stationary conserved matter
from the tensor vertex and the reduced source map.

\subsection{Response factorisation}
\label{sec:tt-response-factorisation}

We first fix how the tensor-source response becomes the scalar $C_2$.
On a Euclidean ray $Q=\sqrt z\,\widehat Q$ with $z>0$, choose a real unit
TT polarisation $e$ transverse to the fixed direction $\widehat Q$.
In a covariant renormalisation scheme, rotational covariance gives
$e^*G^E(Q)e=C_2(z)$, independently of this choice of $e$.
Thus contracting the complete effective source and readout gives
$\mathsf S_{\TT}=\mathsf S_{\mathrm{eff}}e$ and
$\mathsf R_{\TT}=e^*\mathsf R_{\mathrm{eff}}$; the direct term is
contracted in the same way. In this section these are scalar-channel maps
$\mathbb C\to\cX_{\TT}$ and $\cX_{\TT}\to\mathbb C$.
Their analytic continuation is assumed below, not obtained by evaluating
a momentum projector at the origin. If a truncation breaks rotational
symmetry, polarisation dependence must be retained until restoration is
controlled. For several source amplitudes the response is instead a
matrix, and the norm estimates of Appendix~\ref{app:fredholm} apply to it.

The polarisation multiplicity also matters for the kernel theorem.
If exact rotational covariance identifies the spin-two response block as
$V_2\otimes\mathcal X_{\rm mult}$ with
$\cB=I_{V_2}\otimes\cB_{\rm red}$, restriction to a fixed polarisation
leaves $\cB_{\rm red}$ on the multiplicity space. Here $\dim V_2=5$
at nonzero Euclidean momentum. Simplicity in the theorem below refers
to this reduced family, not to the five identical polarisation copies.
This reduction follows from the intertwining property and Schur's lemma;
it must be verified in the chosen response representation. When it is
unavailable, the full critical multiplicity is retained and a finite-rank
analysis replaces the simple-eigenvalue theorem.

Let $\cX_{\TT}$ be the response space selected either by the closed TT block
or by the Schur--Feshbach reduction in
Eq.~\eqref{eq:tt-feshbach-kernel}, with any justified polarisation reduction
made as above. Near a source-free solution $X_*$,
consider
\begin{equation}
 X=\mathfrak F(X;z,\eta)+\mathsf S_{\TT}(z;\eta)J,
 \qquad
 \cB_{\TT}(z;\eta):=D_X\mathfrak F(X_*;z,\eta).
 \label{eq:tt-self-consistent-response}
\end{equation}
Here $J$ is the scalar amplitude of an infinitesimal background-source
variation along $e$; the fundamental-field source is $j$ in
Eq.~\eqref{eq:paper1-composite-generating-functional}. Equation~\eqref{eq:tt-self-consistent-response}
is the linear-source form of the construction in
Proposition~\ref{prop:composite-source-kernel-bridge}. Its kernel, $C_{2,\rm reg}$, $\mathsf S_{\TT}$ and $\mathsf R_{\TT}$
therefore come from the same source functional.
The parameter $z$ is the complexified external invariant, with $z=Q^2$ on
the Euclidean axis and $z=-p^2-\iu0$ at the Lorentzian boundary. The symbol
$\eta$ denotes a control parameter, which may include couplings or an RG
scale. Differentiation with respect to $z$ is not RG evolution.
Only the first derivative at $J=0$ is used; terms beyond linear response
are suppressed in Eq.~\eqref{eq:tt-self-consistent-response}.

If a measured response has regular direct term $C_{2,\rm reg}$ and readout
$\mathsf R_{\TT}$, differentiation of
Eq.~\eqref{eq:tt-self-consistent-response} with respect to $J$ gives
\begin{equation}
 C_2(z;\eta)=C_{2,\rm reg}(z;\eta)
 +\mathsf R_{\TT}(z;\eta)
  [I-\cB_{\TT}(z;\eta)]^{-1}
  \mathsf S_{\TT}(z;\eta).
 \label{eq:tt-response-resolvent}
\end{equation}
Thus the relevant kernel is the dimensionless Jacobian of the complete
self-consistent map. The Bethe--Salpeter bound-state equation
\cite{Salpeter1951} provides a two-particle representation. In the
two-particle-irreducible effective-action formulation
\cite{CornwallJackiwTomboulis1974}, differentiation of the stationarity
equation yields the corresponding response kernel. Its product with
the two-particle propagator represents this Jacobian when the sourced
equations and readout are retained consistently.

For an $N$-dimensional calculation with response space
$\cX_{\TT,N}$ we require a bounded comparison embedding
$\iota_N:\cX_{\TT,N}\to\cX_{\TT}$ and projection
$P_N:\cX_{\TT}\to\cX_{\TT,N}$ satisfying
$P_N\iota_N=I_{\cX_{\TT,N}}$, together with the finite response data
\begin{equation}
 \mathfrak T_N=
 \left(\begin{gathered}
 \{\cO^{(2)}_a\}_{a=1}^{N},\ \Gamma_{k,N},\ R_k,\
 \mathcal E_{k,N},\ \mathcal O_{k,N},\ \mathcal A_{0,k,N},\\
 \iota_N,\ P_N,\ \mathsf S_{\TT,N},\ \mathsf R_{\TT,N},\
 \cB_{\TT,N},\ C_{2,\mathrm{reg},N},\ P_{2,N}
 \end{gathered}\right).
 \label{eq:declared-tt-truncation}
\end{equation}
It records the renormalised spin-2 operator basis, effective-action ansatz,
regulator, projected stationarity equations, one-point readout,
preconditioner, comparison embedding and response projection, source and
readout maps, response Jacobian, regular direct term, and subtraction
polynomial. Proposition~\ref{prop:composite-source-kernel-bridge} derives
the response entries from the stationarity system and readout. The
comparison maps then place finite operators in the common space of
Appendix~\ref{app:fredholm}. These data fix the observable and the norm in which its approximation
is compared with the exact response.

\subsection{Internal averaging and closure defect}
\label{sec:tt-closure-defect}

External smearing changes the source and readout in
Eq.~\eqref{eq:tt-response-resolvent} but not the spectrum of a fixed
$\cB_{\TT}$. An internally averaged response can change the kernel only
through a specified coarse-grained self-consistent map. Let
$\mathcal C_\tau:\cX_{\TT}\to\cX_{\TT,\tau}$ and define
\begin{equation}
 E_\tau(X)
 :=\mathcal C_\tau\mathfrak F(X)
  -\mathfrak F_\tau(\mathcal C_\tau X).
 \label{eq:paper1-closure-defect}
\end{equation}
Write $\overline Y_*(z)=\mathcal C_\tau X_*(z)$ and
$H_\tau(Y;z)=Y-\mathfrak F_\tau(Y;z)$. Since $X_*$ is a source-free
solution, its transported point has residual
\begin{equation}
 H_\tau(\overline Y_*;z)=E_\tau(X_*;z).
 \label{eq:paper1-transported-branch-residual}
\end{equation}
Thus the transported point is an actual coarse solution only when this
residual vanishes. If $E_\tau=0$ in a neighbourhood of $X_*$,
differentiation gives the exact
intertwining relation
\begin{equation}
 \mathcal C_\tau\cB(X_*)
 =\cB_\tau(\mathcal C_\tau X_*)\mathcal C_\tau.
 \label{eq:paper1-kernel-intertwining}
\end{equation}
For an approximate closure, let $Y_{\tau,*}$ be an independently certified
coarse solution and let
$\delta_{Y,\tau}:=\|Y_{\tau,*}-\overline Y_*\|$ have a verified upper
bound. Suppose that $D_Y\mathfrak F_\tau$ is Lipschitz, with constant
$L_\tau$, on a ball containing both points. On the specified invariant response
comparison spaces where $\mathcal C_\tau$ is a bounded isomorphism, the
actual response Jacobian $\cB_\tau(Y_{\tau,*})$ then obeys
\begin{equation}
 \begin{aligned}
 &\|\cB_\tau(Y_{\tau,*})
       -\mathcal C_\tau\cB(X_*)\mathcal C_\tau^{-1}\|\\
 &\qquad\leq
 \|D_XE_\tau(X_*)\|\,\|\mathcal C_\tau^{-1}\|
       +L_\tau\delta_{Y,\tau}.
 \end{aligned}
 \label{eq:paper1-kernel-defect-bound}
\end{equation}
Indeed, differentiation of Eq.~\eqref{eq:paper1-closure-defect} compares
Jacobians at $X_*$ and $\overline Y_*$; adding and subtracting
$\cB_\tau(\overline Y_*)$ gives the additional displacement term. Exact
reference-branch matching $E_\tau(X_*;z)=0$ permits the choice
$Y_{\tau,*}=\overline Y_*$ and sets this term to zero.
Proposition~\ref{prop:app-actual-branch-certificate} supplies one sufficient
test for an approximate branch. At a critical point, a singular stationarity
Jacobian cannot be inverted to infer that branch: exact matching or a
separate critical-branch construction is required. The inverse norm also
makes explicit why strong smoothing can amplify a small closure error.
Appendix~\ref{app:fredholm} includes the resulting root and residue bounds.

\subsection{An isolated Fredholm pole}
\label{sec:isolated-kernel-certificate}

\begin{assumption}[Isolated TT response]
\label{ass:isolated-tt-response}
At a fixed physical endpoint $\eta=\eta_c$, there is a connected
neighbourhood $U$ of $z=0$ such that $\cB_{\TT}(z)$ is a compact analytic
operator family, $I-\cB_{\TT}(z)$ is invertible at some point of $U$, and no
multiparticle threshold or other nonmeromorphic singularity crosses $U$.
The direct term, source map, and readout map in
Eq.~\eqref{eq:tt-response-resolvent} are analytic on $U$.
\end{assumption}

\begin{theorem}[Simple pole of the TT response kernel]
\label{thm:tt-kernel-certificate}
Under Assumption~\ref{ass:isolated-tt-response}:
\begin{enumerate}[label=(\roman*)]
 \item if $1\notin\spec\cB_{\TT}(0)$, this kernel does not generate a pole
 at $z=0$;
 \item if an algebraically simple eigenvalue satisfies
 \begin{equation}
  \lambda(0)=1,
  \qquad \lambda'(0)\neq0,
  \label{eq:tt-simple-crossing}
 \end{equation}
 then $(I-\cB_{\TT})^{-1}$ has a simple pole;
 \item with right and left eigenvectors normalised by
 $\langle\ell|r\rangle=1$, that pole is visible in $C_2$ if and only if
 \begin{equation}
  \mathsf R_{\TT}(0)|r\rangle\neq0,
  \qquad
  \langle\ell|\mathsf S_{\TT}(0)\neq0;
  \label{eq:tt-overlap-conditions}
 \end{equation}
 \item its algebraic residue is
 \begin{equation}
  Z_{\rm alg}
  =-\frac{\mathsf R_{\TT}(0)|r\rangle
          \langle\ell|\mathsf S_{\TT}(0)}{\lambda'(0)}.
  \label{eq:tt-algebraic-residue}
 \end{equation}
\end{enumerate}
\end{theorem}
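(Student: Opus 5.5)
The plan rests on the analytic Fredholm theorem together with Riesz-projection perturbation theory for an isolated eigenvalue.

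\emph{Meromorphy.} By Assumption~\ref{ass:isolated-tt-response}, $\cB_{\TT}(z)$ is a compact analytic family on the connected set $U$, and $I-\cB_{\TT}$ is invertible at one point of $U$. The analytic Fredholm theorem then shows that $(I-\cB_{\TT}(z))^{-1}$ is meromorphic on $U$. Its poles occur exactly at the points where $1\in\spec\cB_{\TT}(z)$, and the principal parts there have finite rank.

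\emph{Part (i).} If $1\notin\spec\cB_{\TT}(0)$, then $I-\cB_{\TT}(0)$ is invertible. Invertibility is an open condition, so the inverse is analytic near $z=0$. Since the direct term, source and readout are analytic, Eq.~\eqref{eq:tt-response-resolvent} is analytic at $0$, and the kernel generates no pole there.

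\emph{Part (ii).} Let $\lambda(0)=1$ be algebraically simple. I would take a small circle $\Gamma$ around $1$ that encloses no other point of $\spec\cB_{\TT}(0)$. The Riesz projection
\[
\Pi(z)=\frac1{2\pi\iu}\oint_\Gamma(\zeta-\cB_{\TT}(z))^{-1}\dd\zeta
\]
is then analytic and of rank one for $z$ near $0$. Standard Kato perturbation theory gives an analytic simple eigenvalue $\lambda(z)$ and analytic eigenvectors $r(z)$, $\ell(z)$ with $\Pi(z)=|r(z)\rangle\langle\ell(z)|$ and $\langle\ell|r\rangle=1$. We split
\[
(I-\cB_{\TT})^{-1}=\frac{\Pi(z)}{1-\lambda(z)}+(I-\cB_{\TT})^{-1}(I-\Pi(z)).
\]
The second term is analytic near $0$, because $1$ is not in the spectrum of the restriction of $\cB_{\TT}(z)$ to the range of $I-\Pi(z)$, uniformly for small $z$. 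Since $\lambda'(0)\ne0$, we have $1-\lambda(z)=-\lambda'(0)z+O(z^2)$, which gives a simple pole with residue $-\Pi(0)/\lambda'(0)$.

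\emph{Parts (iii) and (iv).} Substituting this expansion into Eq.~\eqref{eq:tt-response-resolvent} shows that $C_{2,\rm reg}$ and the regular part contribute nothing singular. Evaluating the analytic $\mathsf R_{\TT}$ and $\mathsf S_{\TT}$ at $0$ yields the residue Eq.~\eqref{eq:tt-algebraic-residue}. Since the residue is the rank-one product $\mathsf R_{\TT}(0)|r\rangle\langle\ell|\mathsf S_{\TT}(0)$ divided by a nonzero scalar, it vanishes if and only if one of the two factors vanishes. This gives the visibility criterion Eq.~\eqref{eq:tt-overlap-conditions}. The criterion relies on the fact that $\langle\ell|\mathsf S_{\TT}(0)$ maps $\mathbb C\to\mathbb C$ and $\mathsf R_{\TT}(0)|r\rangle$ is a vector or scalar, so their product is a product of scalars.

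\emph{Main obstacle.} The delicate step is part (ii): showing that the reduced resolvent on the complementary range is analytic and uniformly bounded near $z=0$. This requires the isolation of the eigenvalue $1$, which follows from compactness (nonzero spectrum is discrete) and upper semicontinuity of the spectrum under analytic perturbation.

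A second point is that the eigenvalue must be algebraically simple, not merely geometrically simple. Otherwise a Jordan block, as in Eq.~\eqref{eq:native-preconditioner-jordan}, could produce higher-order poles.

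I would also remark that if the eigenvectors are normalised differently, the residue formula changes only through the factor $\langle\ell|r\rangle$.
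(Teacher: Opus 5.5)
Your proposal is correct and follows essentially the same route as the paper's proof in Appendix~\ref{app:fredholm}. That proof splits the resolvent with the Riesz projector as $[I-\cB(z)]^{-1}=\Pi(z)/(1-\lambda(z))+\mathcal R(z)$, with $\mathcal R$ analytic, expands $1-\lambda(z)=-\lambda'(0)z+O(z^2)$, and inserts the result between $\mathsf R_{\TT}$ and $\mathsf S_{\TT}$ to obtain the residue and the visibility criterion; part (i) follows from openness of invertibility. Your justification that the reduced resolvent is analytic, via isolation of the eigenvalue and upper semicontinuity of the spectrum, spells out a step the paper simply attributes to analytic Fredholm theory.
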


The proof is the rank-one Laurent expansion of the analytic Fredholm
resolvent and is included in Appendix~\ref{app:fredholm}. A first-order transverse crossing in the control parameter $\eta$
additionally satisfies $\partial_\eta\lambda(0;\eta_c)\neq0$. This is not a necessary
condition for every critical crossing: $\lambda(z;\eta)=1-z+\eta^3$
crosses one at $z=0$, $\eta=0$ while $\partial_\eta\lambda(0;0)=0$;
the external-invariant slope is nevertheless $\partial_z\lambda=-1$.
In an auxiliary FRG realisation, a crossing at finite regulator scale
$k$ is a scale-flow diagnostic; the unregulated claim concerns the
renormalised $k\to0$ response. A native physical correlation endpoint
is a different datum and is not removed by this prescription.

The remaining step is to identify the Laurent coefficient with the
weight of the physical measure. This requires the sourced response, rather
than just the kernel spectrum, to agree with the reconstructed correlator.

\begin{theorem}[TT Fredholm--spectral matching]
\label{thm:tt-fredholm-spectral-bridge}
In addition to Theorem~\ref{thm:tt-kernel-certificate}, assume that:
\begin{enumerate}[label=(\roman*)]
 \item the endpoint response in
 Eq.~\eqref{eq:tt-response-resolvent} is the unfiltered connected correlator
 of the same physical source observable that defines the response
 tuple; for the baseline this is the stress-tensor representative in
 Eq.~\eqref{eq:stress-wightman};
 \item this observable has an independently established positive scalar
 spectral measure on $[0,\infty)$, with its endpoint atom matched to the
 physical source quadratic form, and the same TT quotient,
 continuation, and subtraction conventions as the kernel response.
 For the baseline these facts are supplied by
 Proposition~\ref{prop:scalar-tt-boundary} and
 Eq.~\eqref{eq:subtracted-tt-dispersion}. A changed observable must supply
 its own corresponding construction;
 \item the direct term and complementary resolvent contributions are
 analytic at $z=0$ after subtraction; and
 \item the remaining spectral measure satisfies the hypotheses of
 Proposition~\ref{prop:tt-atom-extraction}.
\end{enumerate}
Then
\begin{equation}
 Z_{\rm alg}
 =\lim_{Q^2\downarrow0}Q^2C_2^{\rm nl}(Q^2)
 =Z_0\geq0.
 \label{eq:tt-fredholm-spectral-identity}
\end{equation}
A negative or non-real scalar $Z_{\rm alg}$ therefore signals a failure of the
response realisation, continuation, subtraction, endpoint limit, or
physical positivity assumptions.
\end{theorem}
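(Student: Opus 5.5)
The plan is to compute the same Laurent coefficient twice: once from the kernel side through Theorem~\ref{thm:tt-kernel-certificate}, and once from the spectral side through Proposition~\ref{prop:tt-atom-extraction}. Hypothesis (i) is what lets us identify the two objects. Then $C_2$ in Eq.~\eqref{eq:tt-response-resolvent} and the subtracted dispersive representation Eq.~\eqref{eq:subtracted-tt-dispersion} describe one function of $z$ on a common domain. That domain contains the Euclidean ray $z=Q^2>0$ in $U$. Since $P_2$ is a polynomial, it is analytic and contributes nothing to $Q^2C_2^{\rm nl}$ in the limit.

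\textbf{Kernel side.} Theorem~\ref{thm:tt-kernel-certificate}(ii),(iv) gives
\[
(I-\cB_{\TT}(z))^{-1}=\frac{|r\rangle\langle\ell|}{-\lambda'(0)\,z}+O(1).
\]
Combined with analyticity of $\mathsf S_{\TT},\mathsf R_{\TT}$ and $C_{2,\rm reg}$, this gives $C_2(z)=Z_{\rm alg}/z+O(1)$ on a punctured neighbourhood. Hypothesis (iii) shows that the complementary Schur--Feshbach contributions and the direct term stay bounded after subtraction, so they do not alter the coefficient. Multiplying by $z=Q^2$ and taking $Q^2\downarrow0$ along the real axis gives
\[
\lim_{Q^2\downarrow0} Q^2C_2^{\rm nl}(Q^2)=Z_{\rm alg}.
\]

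\textbf{Spectral side.} Hypotheses (ii) and (iv) allow Proposition~\ref{prop:tt-atom-extraction} to be applied to the positive measure $\rho_2$, giving the same limit equal to $Z_0=\rho_2(\{0\})\geq0$. Equating the two limits proves Eq.~\eqref{eq:tt-fredholm-spectral-identity}. The final remark is then the contrapositive: if all the assumptions hold, $Z_{\rm alg}$ must be real and nonnegative, so a negative or non-real value means at least one assumption fails.

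\textbf{Main obstacle.} The delicate step is justifying the identification of the two functions. The kernel expression is defined on the analytic neighbourhood $U$, while the dispersive integral is defined for $Q^2>0$. I would restrict both to $(0,\epsilon)\cap U$, where they agree by hypothesis (i). The limit there is one-sided, so no continuation through a threshold is needed; Assumption~\ref{ass:isolated-tt-response} rules out thresholds crossing $U$ in any case.

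A further point needs checking: the normalisations of the source and readout must match the scalar contraction of Section~\ref{sec:tt-response-factorisation}, meaning polarisation $e$ and trace $N_2$. Otherwise the two residues could differ by a positive constant factor. I would handle this by fixing the contraction convention once, through rotational covariance, before comparing the two sides.
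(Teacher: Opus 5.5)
Your proposal is correct and follows the paper's proof. You obtain $Z_{\rm alg}$ as the limit of $Q^2C_2^{\rm nl}(Q^2)$ from the rank-one Laurent expansion of Theorem~\ref{thm:tt-kernel-certificate}, with (iii) keeping the remainder analytic, and then use (i), (ii) and (iv) with Proposition~\ref{prop:tt-atom-extraction} to identify the same limit with $Z_0=\rho_2(\{0\})\geq0$. Your remarks on restricting to a common Euclidean interval and fixing the polarisation normalisation only make explicit what the paper folds into hypotheses (i) and (ii).
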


\begin{proof}
Theorem~\ref{thm:tt-kernel-certificate} gives
$C_2^{\rm nl}(Q^2)=Z_{\rm alg}/Q^2+A(Q^2)$ with $A$ analytic under
assumption (iii). Multiplication by $Q^2$ and passage to the origin gives
$Z_{\rm alg}$. Assumptions (i), (ii), and (iv) identify the same limit with
the Wightman atom through Proposition~\ref{prop:tt-atom-extraction}.
Positivity follows from the independently matched measure in assumption (ii).
\end{proof}

For the complete baseline stress tensor, the match includes
Theorem~\ref{thm:null-shell-exclusion}, hence $Z_{\rm alg}=Z_0=0$.
This supplies a consistency condition on an algebraically critical
truncation: its visible residue must vanish in a valid exact limit.
For another response object, the positive measure and null-shell match
are supplied by that object's spectral construction.

\subsection{From compact spectral data to a propagation residue}
\label{sec:compact-spectrum-interface}

Compact internal spectra can provide bases, masses, degeneracies, and
window weights for a finite response calculation. For example, the
prescription in ref.~\cite{Guo2026CompactSpectrumII} uses
\begin{equation}
 K_{\TT}^{\rm win}(k^2,m^2)=\frac{k^4}{(k^2+m^2)^2},
 \qquad y=\frac{m^2}{k^2+m^2},
 \qquad \max_{0\leq y\leq1}y(1-y)^2=\frac4{27}.
 \label{eq:compact-scale-window}
\end{equation}
Here $m$ is an internal spectral mass and $k$ is the window scale;
$z$ remains the external invariant in the response equation. The sourced equations determine how these internal data enter the
external-momentum operator in Eq.~\eqref{eq:tt-response-resolvent}.
The following proposition states sufficient conditions for convergence
to a nonzero physical residue.

\begin{proposition}[Physical residue from compact-spectrum approximations]
\label{prop:compact-spectrum-interface}
Let a finite calculation supply internal eigenvalues $m_n^2$, degeneracies
$d_n$, and scale-window factors $K_n(k^2,m_n^2)$, and suppose that an
extremum or fixed-point equation selects a model closure. Suppose the associated response has an independently constructed
positive spectral measure and lies outside the tensor class excluded
by Theorem~\ref{thm:null-shell-exclusion}.
The simple Fredholm construction yields a positive physical atom under
the following conditions:
\begin{enumerate}[label=(\roman*)]
 \item the internal data enter the finite response
 Eq.~\eqref{eq:declared-tt-truncation}, whose analytic kernel, source,
 readout and direct term are derived from common sourced equations;
 the comparison maps and subtraction polynomial are those defined there;
 \item the simple-crossing, visibility, and residue tests of
 Theorem~\ref{thm:tt-kernel-certificate} in the external variable $z$,
 with truncation residues evaluated at their certified displaced roots as
 in Proposition~\ref{prop:app-root-residue-certificate};
 \item common-space bounds control the basis, vertex, momentum,
 complementary-channel and closure errors as in
 Proposition~\ref{prop:app-telescoping-error-budget}, together with
 convergence to the physical endpoint; and
 \item identification of the limiting unfiltered correlator and its positive
 Wightman measure under Theorem~\ref{thm:tt-fredholm-spectral-bridge},
 including separation of an atom from any zero threshold, and a strictly
 positive limiting residue. For a scalar channel, a certified error
 $|Z_N-Z_0|\leq\Delta_N$ with
 $\operatorname{Re}Z_N-\Delta_N>0$ supplies such a lower bound.
\end{enumerate}
Then $Z_0>0$. The simple-eigenvalue and transverse-crossing assumptions
are sufficient here; a pole with higher critical multiplicity is
covered instead by a finite-rank analysis.
\end{proposition}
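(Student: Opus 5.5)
The argument assembles results already proved, so the plan is a chain of identifications rather than a new estimate. Condition (i) puts the finite calculation in the form of Proposition~\ref{prop:composite-source-kernel-bridge}: the kernel $\cB_{\TT,N}$, the source $\mathsf S_{\TT,N}$, the readout $\mathsf R_{\TT,N}$ and the direct term come from one sourced stationarity system. The truncated response therefore has the resolvent form \eqref{eq:tt-response-resolvent} on $\cX_{\TT,N}$. Through $\iota_N,P_N$ this form transports to the common space $\cX_{\TT}$.

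Next, condition (ii) lets me apply Theorem~\ref{thm:tt-kernel-certificate} to each truncation. I do this not at $z=0$ but at the certified displaced root $z_N$ supplied by Proposition~\ref{prop:app-root-residue-certificate}. This gives a simple pole with algebraic residue
\[
 Z_N=-\frac{\mathsf R_{\TT,N}|r_N\rangle\langle\ell_N|\mathsf S_{\TT,N}}{\lambda_N'}.
\]
Condition (iii) then invokes the telescoping budget of Proposition~\ref{prop:app-telescoping-error-budget}, together with the closure-defect bound \eqref{eq:paper1-kernel-defect-bound}. Their role is to make $\cB_{\TT,N}\to\cB_{\TT}$ in norm, uniformly on a neighbourhood of $z=0$, with the source and readout maps converging alongside.

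The convergence of the kernels has three consequences, all by stability of an isolated simple eigenvalue under analytic perturbation (Riesz projection on a small circle around $\lambda=1$):
\begin{itemize}
 \item the roots satisfy $z_N\to0$;
 \item the slopes satisfy $\lambda_N'\to\lambda'(0)\neq0$;
 \item the spectral projections $|r_N\rangle\langle\ell_N|$ converge.
\end{itemize}
It follows that $Z_N\to Z_{\rm alg}$ for the limiting exact response. Equivalently, the certified bound $|Z_N-Z_{\rm alg}|\leq\Delta_N$ holds with $\Delta_N\to0$, or at least it holds at some finite $N$ as allowed in (iv).

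Finally, condition (iv) supplies the hypotheses of Theorem~\ref{thm:tt-fredholm-spectral-bridge}. These are: the limit is the unfiltered correlator of the same observable; it has an independently matched positive measure; the remainder is analytic; and the spectral remainder obeys Proposition~\ref{prop:tt-atom-extraction}. The theorem then gives $Z_{\rm alg}=Z_0$. The strict lower bound $\operatorname{Re}Z_N-\Delta_N>0$ forces $\operatorname{Re}Z_0>0$, and since $Z_0\geq0$ is real, $Z_0>0$.

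The hypothesis that the response lies outside the class of Theorem~\ref{thm:null-shell-exclusion} is needed only for consistency. Inside that class $Z_0=0$, which would contradict the certificate. Its presence shows the hypotheses are not vacuous, but it plays no further role in the proof.

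I expect the main obstacle to be the third step. The issue is that residues are compared at moving roots $z_N$ rather than at a fixed point. The simple-pole expansion must hold uniformly in $N$ on one disc, and the eigenvalue $\lambda=1$ must stay isolated from the rest of $\spec\cB_{\TT,N}$ uniformly. I would handle this with a uniform separation radius for the Riesz contour and a Rouché-type argument for $\lambda_N(z)-1$, which is where the cited appendix certificate is used. Without that uniformity, a nearby eigenvalue or a threshold approaching $z=0$ could split or absorb the residue.
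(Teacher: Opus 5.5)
Your proposal is correct and follows essentially the same route as the paper. Conditions (i)--(ii) feed the simple-pole theorem for the sourced response, (iii) transfers the truncation residues to the exact one, and (iv) invokes Theorem~\ref{thm:tt-fredholm-spectral-bridge} to give $Z_0\geq\operatorname{Re}Z_N-\Delta_N>0$; your uniform Riesz-isolation and Rouch\'e discussion is what the paper's short proof delegates to Proposition~\ref{prop:app-root-residue-certificate}.
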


\begin{proof}
The first two conditions allow application of the simple-pole theorem to
the sourced response. The last two control its limiting
identification with the physical measure through
Theorem~\ref{thm:tt-fredholm-spectral-bridge} and keep its weight nonzero:
that theorem gives $Z_0\geq\operatorname{Re}Z_N-\Delta_N>0$.
Finite visibility alone is insufficient; for example,
$C_N(z)=N^{-2}/z$ has a positive residue at every $N$ but a zero limiting
response. Internal masses and a scale
profile contain neither the derivative with respect to $z$ nor the source
and readout overlaps. For example, the scalar identity
$\max_y y(1-y)^2=4/27$ establishes a stationary window weight, but it does
not imply $1\in\spec\cB_{\TT,N}(0)$ or
$\partial_z\lambda_N(0)\neq0$. The propagation derivative and the two overlaps are computed from
the sourced response equation.
\end{proof}

For example, $\cB(z)=(1-z)I_2$ with
$\mathsf S=(1,1)^{\mathsf T}$ and $\mathsf R=(1,1)$ gives
$\mathsf R(I-\cB)^{-1}\mathsf S=2/z$, although its critical eigenvalue
has algebraic multiplicity two. This example illustrates the finite-rank alternative to the
simple-eigenvalue assumptions above.

\subsection{Thresholds and spectral accumulation}
\label{sec:kernel-thresholds}

If $\dd\rho_{2,c}$ has support arbitrarily close to $s=0$, the analytically
continued correlator has other spectral singularities arbitrarily close to
the candidate pole. Continuous thresholds and accumulating positive-mass
atoms both prevent an isolated meromorphic neighbourhood. For a continuous
threshold, the boundary behaviour need not be a branch point of one universal form. Assumption~\ref{ass:isolated-tt-response} then generally fails.
The atomic contribution is instead determined by the direct measure
criterion, or by a limiting-absorption or weighted-resolvent analysis
that separates it from the remaining spectrum.

Positivity also constrains possible moment identities. A condition of
the form
\begin{equation}
 \int_0^\infty s\,\dd\rho_2(s)=0
 \label{eq:excluded-positive-superconvergence}
\end{equation}
forces a nonnegative measure to be supported at zero, so it is
incompatible with a nonzero positive continuum. Signed, subtracted
sum rules concern a different spectral object. Similarly, Ward
identities constrain tensor structures on the timelike cone; they do not
provide spectral weight or force an eigenvalue crossing. On the null cone,
the stronger combination of full positivity, exact covariance, and
conservation gives Theorem~\ref{thm:null-shell-exclusion}.

The response equation thus determines visible singularities, while
spectral matching fixes their physical weights. For the complete
covariant stress tensor, the null-shell theorem already fixes the
zero-mass weight independently of the kernel representation.

\section{Matter coupling and long-range response}
\label{sec:gravity-gates}

The matter interaction uses the collective tensor field on the fixed
spacetime, with the source normalisation absorbed in $\Phi$:
\begin{equation}
 S_{\rm coupling}=\int\dd^4x\,T_{\rm mat}^{\mu\nu}\Phi_{\mu\nu}.
 \label{eq:native-matter-tensor-coupling}
\end{equation}
Here $T_{\rm mat}$ is the total matter energy--momentum tensor, including
interaction contributions. The response field $\Phi$ acts on the fixed
spacetime. The coupling fixes its matter vertex, and the spectral
calculation determines the mediated response. The compact comparison
geometry enters through the internal spectral operators.

\subsection{Energy sources and the complete static response}
\label{sec:native-static-exchange}

Let $\tau_a^{\mu\nu}(\mathbf x)$ denote the vacuum-subtracted expectation
of $T_{\rm mat}^{\mu\nu}$ in a stationary, localised matter configuration.
The tensor source is kept complete, including its stresses.
At leading exchange order each configuration is evaluated in its
unsourced stationary state. The following integrated stress identity
is Laue's theorem~\cite{Giulini2018Laue}, with the long-wavelength
estimate stated in the normalisation of the matter vertex.

\begin{proposition}[Energy monopole of the matter vertex]
\label{prop:native-matter-energy-monopole}
Suppose $\tau_a$ is symmetric, time independent and conserved in the
fixed flat coordinates, and
$\int(1+|\mathbf x|)\|\tau_a(\mathbf x)\|\,\dd^3x<\infty$.
In its rest frame, set $E_a=\int\tau_a^{00}\,\dd^3x$ and
$(e_{00})^{\mu\nu}=\delta^\mu_0\delta^\nu_0$. Then
\begin{equation}
 \int\tau_a^{ij}\,\dd^3x=0,\qquad
 \widehat\tau_a(\mathbf p)=E_a e_{00}+r_a(\mathbf p),\qquad
 \|r_a(\mathbf p)\|\leq |\mathbf p|
            \int |\mathbf x|\|\tau_a(\mathbf x)\|\,\dd^3x.
 \label{eq:native-matter-energy-monopole}
\end{equation}
Thus the zero-momentum vertex in
Eq.~\eqref{eq:native-matter-tensor-coupling} has the same energy
normalisation for every such configuration, including bound systems
when their complete conserved tensor is used.
\end{proposition}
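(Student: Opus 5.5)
The plan is the standard Laue argument for the integrated stresses, followed by a first-order Taylor bound on the Fourier transform.

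\emph{Conservation step.} Time independence reduces conservation to $\partial_i\tau_a^{i\nu}=0$ for every $\nu$, understood distributionally. Fix indices $i,j$ and use the identity
\[
\tau_a^{ij}=\partial_k\bigl(x^j\tau_a^{ki}\bigr)-x^j\partial_k\tau_a^{ki}
=\partial_k\bigl(x^j\tau_a^{ki}\bigr),
\]
where the first equality uses symmetry, $\tau^{ki}=\tau^{ik}$, and the second uses $\partial_k\tau^{ki}=0$.

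Integrate this identity against a cutoff $\chi_R(\mathbf x)=\chi(\mathbf x/R)$, with $\chi\in C_c^\infty$ and $\chi=1$ near the origin. This gives
\[
\int\chi_R\,\tau_a^{ij}=-\int(\partial_k\chi_R)\,x^j\tau_a^{ki}.
\]
The derivative $\partial_k\chi_R$ is $O(1/R)$ and supported where $|\mathbf x|\sim R$, so the right side is bounded by $C\int_{|\mathbf x|\gtrsim R}|x^j|R^{-1}\|\tau_a\|$. This is at most $C\int_{|\mathbf x|\gtrsim R}\|\tau_a\|$, which tends to zero because $\tau_a\in L^1$.

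The left side tends to $\int\tau_a^{ij}$ by dominated convergence. This proves $\int\tau_a^{ij}=0$.

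The same argument applied to $\tau^{0i}=\partial_k(x^i\tau^{k0})$ shows $\int\tau_a^{0i}=0$. It uses $\partial_k\tau^{k0}=0$ and symmetry in the form $\tau^{0i}=\tau^{i0}$. This vanishing is needed so that the zero-momentum value is exactly $E_ae_{00}$ in the rest frame.

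\emph{Rest frame.} Should the rest frame be defined by vanishing total momentum, the $0i$ integrals vanish by definition. Either way, $\widehat\tau_a(0)=E_ae_{00}$.

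\emph{Fourier remainder.} With $\widehat\tau(\mathbf p)=\int e^{-\iu\mathbf p\cdot\mathbf x}\tau(\mathbf x)\,\dd^3x$, set
\[
r_a(\mathbf p)=\int\bigl(e^{-\iu\mathbf p\cdot\mathbf x}-1\bigr)\tau_a(\mathbf x)\,\dd^3x,
\]
so that $\widehat\tau_a(\mathbf p)=\widehat\tau_a(0)+r_a(\mathbf p)=E_ae_{00}+r_a(\mathbf p)$. The elementary bound $|e^{-\iu\theta}-1|\le|\theta|\le|\mathbf p||\mathbf x|$ is applied under the integral, together with the triangle inequality for the chosen matrix norm, which holds for the integral of a matrix-valued function. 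This gives
\[
\|r_a(\mathbf p)\|\le|\mathbf p|\int|\mathbf x|\,\|\tau_a(\mathbf x)\|\,\dd^3x .
\]
The first-moment hypothesis makes this finite.

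\emph{Consequence.} The zero-momentum vertex $\widehat\tau_a(0)^{\mu\nu}\Phi_{\mu\nu}=E_a\Phi_{00}$ depends on the configuration only through $E_a$. Since $\tau_a$ is the complete conserved tensor, $E_a$ includes interaction and binding energy.

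\emph{Main obstacle.} The only delicate point is justifying the integration by parts when $\tau_a$ is merely integrable with a finite first moment and conservation holds distributionally. The cutoff argument handles this. The key input is $\int|\mathbf x|\|\tau\|<\infty$, which ensures the boundary term $\int|\partial\chi_R||\mathbf x|\|\tau\|$ vanishes as $R\to\infty$; bare $L^1$ decay of $|\mathbf x|\,\|\tau\|$ would not suffice without this moment bound.
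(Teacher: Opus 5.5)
Your proof is correct and follows the paper's route. It uses the same cutoff version of Laue's argument for $\int\tau_a^{ij}\,\dd^3x=0$ and the same bound $|e^{-\iu\mathbf p\cdot\mathbf x}-1|\le|\mathbf p||\mathbf x|$; one small bonus is that you derive $\int\tau_a^{0i}\,\dd^3x=0$ from stationarity and conservation, where the paper simply invokes the rest frame. One correction to your closing remark: your own estimate shows that the boundary term vanishes using only $\tau_a\in L^1$, because $|x^j\partial_k\chi_R|$ is uniformly bounded on the annulus, so the first-moment hypothesis is needed only for the Fourier remainder bound, not for the Laue step.
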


\begin{proof}
Choose a smooth cutoff $\chi_R$ equal to one on the ball of radius $R$
and zero outside the ball of radius $2R$. Conservation gives
\[
 \int\chi_R\tau_a^{ij}\,\dd^3x
   =-\int x^i(\partial_k\chi_R)\tau_a^{kj}\,\dd^3x.
\]
The right-hand side tends to zero by integrability, since
$|x^i\partial_k\chi_R|$ is bounded on its receding annulus.
The mixed components integrate to the total momentum, zero in the rest
frame. The remaining component integrates to $E_a$. Subtracting the
zero-momentum Fourier transform and using
$|e^{-\iu\mathbf p\cdot\mathbf x}-1|\leq|\mathbf p|\,|\mathbf x|$
proves the bound with the interaction and binding contributions
included in the conserved tensor.
\end{proof}

For a slowly varying static field this is also a position-space estimate:
the difference between $\int\tau_a^{\mu\nu}(\mathbf x)
\Phi_{\mu\nu}(\mathbf x)\,\dd^3x$ and $E_a\Phi_{00}(0)$ is bounded
by $\|\nabla\Phi\|_\infty\int |\mathbf x|\|\tau_a\|\,\dd^3x$.
External supports must be included in the conserved system; an isolated
matter subsystem under an external force does not meet the proposition.
Stationary configurations on a compact domain instead retain the full
modal vertex, since the cutoff argument uses a noncompact matter domain.

After the same-source constraint reduction, write the field seen by
matter as $\Phi=\mathsf V\gamma$ in the retained coordinates. The
static sign convention corresponding to the coupling above is
$E_{\rm coupling}=-J[\gamma]$, with $J_a=\mathsf V^*\tau_a$.
The mediated tensor response is consequently
\begin{equation}
 \mathcal D_{\rm stat}=\mathsf V\mathcal G\mathsf V^*,\qquad
 E_{ab}^{\rm med}=-\langle\tau_a,\mathcal D_{\rm stat}\tau_b\rangle,
 \label{eq:native-existing-vertex-exchange}
\end{equation}
where $\mathcal G$ is the inverse of the reduced energy form in
Proposition~\ref{prop:native-static-energy}. Source-dependent eliminated
components also produce the direct term of
Eq.~\eqref{eq:native-schur-tuple}. Neither that term nor the stresses
are dropped before reduction. For a translation-invariant response,
Eq.~\eqref{eq:native-matter-energy-monopole} selects its $00,00$
contraction at leading long wavelength. Its sign and spatial decay
still come from the computed full kernel, not the radiative TT block
alone. A change $\Phi\mapsto a\Phi$ multiplies its covariance by $a^2$
and its vertex by $a^{-1}$, leaving the exchange unchanged.

We next compare this matter vertex with the energy readouts of the
compact spectrum. For the branch of Section~\ref{sec:native-tt-spectral-branch},
the modal readout is $M_a(c,\vartheta)=\widehat M_a(\vartheta)/c$.
Apply the same geometric compression $c\mapsto c e^{-\eta}$ to
all carriers, with their dimensionless content fixed. Directly,
\begin{equation}
 \left.\partial_\eta M_a(c e^{-\eta},\vartheta)\right|_0=M_a(c,\vartheta),
 \qquad
 \left.\partial_\eta\sum_a n_a M_a(c e^{-\eta},\vartheta)\right|_0
       =\sum_a n_aM_a(c,\vartheta).
 \label{eq:native-uniform-energy-source}
\end{equation}
The occupations are held fixed and the sum is finite or absolutely
convergent. This internal scale derivative measures the integrated
modal energy at a fixed reference branch.

For the complete interacting spectral energy $\mathcal E_{\rm mat}$,
the same statement extends precisely to its degree-one homogeneous
part: if $\mathcal E_{\rm mat}(c,\xi)=c^{-1}\widehat{\mathcal E}_{\rm mat}(\xi)$,
compression differentiates its full energy, including binding.
When dimensionless variables $\xi$ change under the source variation,
their contribution $D_\xi\mathcal E_{\rm mat}\,\partial_\eta\xi$
must also be retained. Equation~\eqref{eq:native-endpoint-modal-response}
is an explicit example of such an additional contribution.
These spectral derivatives refer to the specified internal comparison.
For a nonlinear source map, its second derivative also enters the
chain rule. This internal susceptibility is distinct from the
propagator of $\Phi$.

The scalar branch admits a local version of the energy identity on the same
compact geometry. Its curvature shift is the conformal one:
\begin{equation}
 \mathcal A_g=-\Delta_g+\frac{\mathcal R_g}{8},\qquad
 H_g=\mathcal A_g^{1/2},\qquad
 \mathcal R_g=\frac6{L^2}\quad\hbox{at the round reference}.
 \label{eq:native-conformal-energy-operator}
\end{equation}
The internal curvature shift is differentiated together with the
Laplacian in this comparison.

\begin{proposition}[Local energy charge on the conformal scalar branch]
\label{prop:native-local-scalar-energy}
Let $g_\eta=e^{-2\eta(x)}g$ on the compact three-dimensional geometry,
and identify its scalar Hilbert space with the reference space by
$U_\eta f=e^{-3\eta/2}f$. Then
\begin{equation}
 U_\eta\mathcal A_{g_\eta}U_\eta^{-1}
       =e^\eta\mathcal A_ge^\eta .
 \label{eq:native-local-conformal-congruence}
\end{equation}
For a finite stationary modal occupation matrix $N\succeq0$,
$[N,H_g]=0$, define its sourced energy by
$E_N(\eta)=\operatorname{Tr}[N(U_\eta\mathcal A_{g_\eta}
 U_\eta^{-1})^{1/2}]$. In a joint eigenbasis, $Nu_i=n_i u_i$ and
$H_gu_i=M_i u_i$, its first variation is
\begin{equation}
 \begin{aligned}
 DE_N(0)[\eta]&=\int_M\eta(x)\rho_{E,N}(x)\,\dd v_g,\\
 \rho_{E,N}&=\sum_i n_iM_i|u_i|^2\geq0,\qquad
 \int_M\rho_{E,N}\,\dd v_g=E_N(0).
 \end{aligned}
 \label{eq:native-local-scalar-charge}
\end{equation}
Finite matrices within a degenerate eigenspace are included.
\end{proposition}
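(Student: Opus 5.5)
The plan has two parts. First we establish the congruence \eqref{eq:native-local-conformal-congruence} through the conformal covariance of the Yamabe operator. Then we differentiate the square root by the same functional-calculus route as Proposition~\ref{prop:native-spectral-derivative}.

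For the congruence, write $g_\eta=e^{2\omega}g$ with $\omega=-\eta$ in dimension $n=3$. The conformal Laplacian $\mathcal A_g=-\Delta_g+\frac{n-2}{4(n-1)}\mathcal R_g$ has coefficient $1/8$ when $n=3$, and it obeys
\[
\mathcal A_{e^{2\omega}g}=e^{-\frac{n+2}{2}\omega}\,\mathcal A_g\,e^{\frac{n-2}{2}\omega}.
\]
With $n=3$ and $\omega=-\eta$ this reads $\mathcal A_{g_\eta}=e^{5\eta/2}\mathcal A_g e^{-\eta/2}$. Conjugating by $U_\eta=e^{-3\eta/2}$ gives
\[
U_\eta\mathcal A_{g_\eta}U_\eta^{-1}=e^{-3\eta/2}e^{5\eta/2}\mathcal A_g e^{-\eta/2}e^{3\eta/2}=e^{\eta}\mathcal A_g e^{\eta}.
\]
This is exactly \eqref{eq:native-local-conformal-congruence}. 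Unitarity of $U_\eta$ follows because $\dd v_{g_\eta}=e^{-3\eta}\dd v_g$. I would check the covariance identity directly, expanding $\Delta_{e^{2\omega}g}$ and the scalar-curvature transformation law in three dimensions, so that the argument does not rest only on a citation. I would also confirm that the domains agree: multiplication by a smooth positive function preserves $H^2$.

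For the variation, set $\mathcal A(\eta)=e^{\eta}\mathcal A_g e^{\eta}$. Its derivative at zero is $\dot{\mathcal A}=\eta\mathcal A_g+\mathcal A_g\eta$. Since $\mathcal A_g>0$ on $\mathbb{RP}^3$ with $\mathcal R_g>0$, the function $x\mapsto x^{1/2}$ is holomorphic on a neighbourhood of the spectrum. We therefore have
\[
D\bigl[\operatorname{Tr}N\mathcal A^{1/2}\bigr]=\operatorname{Tr}\bigl[N\,D(\cdot)^{1/2}[\dot{\mathcal A}]\bigr].
\]
Because $N$ commutes with $\mathcal A_g$, the cyclicity of the trace combined with the contour formula for $Df$ reduces this to $\operatorname{Tr}[N f'(\mathcal A_g)\dot{\mathcal A}]$. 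We then substitute $f'(\mathcal A_g)=\tfrac12\mathcal A_g^{-1/2}=\tfrac12H_g^{-1}$. Using the commutation once more, the result is
\[
\tfrac12\operatorname{Tr}\bigl[NH_g^{-1}(\eta H_g^2+H_g^2\eta)\bigr]=\operatorname{Tr}[NH_g\eta].
\]
In the joint eigenbasis this equals $\sum_i n_iM_i\langle u_i,\eta u_i\rangle=\int\eta\rho_{E,N}\,\dd v_g$. Setting $\eta=1$ gives $\int\rho_{E,N}=\sum n_iM_i=E_N(0)$. Nonnegativity of $\rho_{E,N}$ is immediate from $n_i\geq0$ and $M_i>0$. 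For a nondiagonal $N$ inside a degenerate eigenspace, the expression $\operatorname{Tr}[NH_g\eta]$ is basis-free. Diagonalising $N$ within the eigenspace produces a joint eigenbasis, which covers the final clause of the statement.

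The main obstacle is rigour for the unbounded operator. $N$ has finite rank and its range lies in finitely many eigenspaces, so the trace involves only finitely many smooth eigenfunctions. I would justify the derivative through a resolvent contour integral,
\[
\operatorname{Tr}\Bigl[N\tfrac1{2\pi i}\oint f(\zeta)(\zeta-\mathcal A)^{-1}\dot{\mathcal A}(\zeta-\mathcal A)^{-1}\dd\zeta\Bigr],
\]
with $\zeta$ taken in a sectorial form, or alternatively through the integral representation $\mathcal A^{1/2}=\pi^{-1}\int_0^\infty t^{-1/2}\mathcal A(t+\mathcal A)^{-1}\dd t$. Either representation is differentiated under the integral. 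The key justification is that $\eta\mapsto e^\eta\mathcal A_ge^\eta$ is a relatively bounded analytic perturbation. Once that is in place, the cyclicity step applies, because $N(\zeta-\mathcal A_g)^{-1}$ has finite rank and maps into smooth functions.
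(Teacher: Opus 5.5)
Your proposal is correct and follows the paper's proof: the congruence comes from the three-dimensional conformal covariance $\mathcal A_{e^{2\omega}g}=e^{-5\omega/2}\mathcal A_g e^{\omega/2}$ conjugated by $U_\eta=e^{-3\eta/2}$, and the variation comes from $\dot{\mathcal A}=\eta\mathcal A_g+\mathcal A_g\eta$ together with the resolvent integral representation of the square root, which is exactly the paper's convergent pairing. The only difference is cosmetic: you use $[N,\mathcal A_g]=0$ and trace cyclicity to pass directly to $\operatorname{Tr}[N f'(\mathcal A_g)\dot{\mathcal A}]$, whereas the paper first computes the full matrix $(DH_g[\eta])_{ij}=\frac{M_i^2+M_j^2}{M_i+M_j}\langle u_i,\eta u_j\rangle$ (which it also uses for the coherent off-diagonal entries) and then takes the trace with $N$.
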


\begin{proof}
Appendix~\ref{sec:app-local-scalar-energy} derives the conformal
identity and the square-root variation on the common form domain:
\begin{equation}
 (DH_g[\eta])_{ij}
 =\frac{M_i^2+M_j^2}{M_i+M_j}\langle u_i,\eta u_j\rangle.
 \label{eq:native-local-square-root-source}
\end{equation}
Its diagonal entries are $M_i\langle u_i,\eta u_i\rangle$.
Taking the trace with $N$ proves the density and normalisation.
\end{proof}

The fixed-connection scalar extension and coherent source entries are
derived in Appendix~\ref{sec:app-local-scalar-energy}. They describe
the internal energy readout; the common matter vertex remains
Eq.~\eqref{eq:native-matter-tensor-coupling}.

The tensor vertex also determines the common critical source direction.
On a fixed momentum ray, suppose the same-source reduction gives
$\widehat j_a(\mathbf p)=\mathsf V(\mathbf p)^*\widehat\tau_a(\mathbf p)$,
with $\|\mathsf V(\mathbf p)^*\|\leq C_V$ and
$\|\mathsf V(\mathbf p)^*-\mathsf V_0^*\|\leq K_V|\mathbf p|$.
For the physical critical projection $P$ on that ray, put
$v_0=P\mathsf V_0^*e_{00}$ and
$I_a=\int |x|\|\tau_a(x)\|\,\dd^3x$. The energy-monopole identity gives
\begin{equation}
 \left\|P\widehat j_a(\mathbf p)-E_a v_0\right\|
 \leq |\mathbf p|\big(C_V I_a+K_V|E_a|\big).
 \label{eq:native-source-moment-bound}
\end{equation}
Indeed, subtract $E_av_0$ and split the difference into
$P\mathsf V(\mathbf p)^*r_a(\mathbf p)$ and
$E_aP(\mathsf V(\mathbf p)^*-\mathsf V_0^*)e_{00}$.
Thus, whenever Theorem~\ref{thm:native-finite-rank-residue} gives
the propagation inverse on this ray with $z=|\mathbf p|^2$, its
mixed matter residue is
\begin{equation}
 Z_{ab}^{\rm mat}=E_aE_b\,\mathfrak b_{\rm mat},\qquad
 \mathfrak b_{\rm mat}=\langle v_0,G^{-1}v_0\rangle\geq0.
 \label{eq:native-matter-common-residue}
\end{equation}
This follows by multiplying the inverse expansion by $|\mathbf p|^2$
and taking the source limits. The common factor follows from the tensor
vertex and reduced map, with all matter stresses included before
the limit. It is strictly
positive precisely when $v_0\ne0$. Direction dependence of $P$, $G$
and $\mathsf V_0$ is retained until isotropy of this contraction is
established. The compact source density above has this matter
interpretation only after its transport to the same matter domain.

The complete quadratic equations can include constraint multipliers.
For example, a block with entries $K,C^*,C,0$ is a saddle system rather
than a positive form on all its variables. Solve its constraints or take
its legitimate physical quotient before using
Proposition~\ref{prop:native-form-spectral}. Complementary sources and
direct terms are retained through Eq.~\eqref{eq:native-schur-tuple}.
Ordinary static matter is not in general a four-dimensional TT source.

\begin{proposition}[Static exchange from a reduced energy]
\label{prop:native-static-energy}
Suppose the constraint reduction has a real Hilbert energy space
$\mathcal V$ with a bounded symmetric coercive form $\mathfrak l$, and
the source-dependent energy on that space is
\begin{equation}
 E_{\rm stat}[\gamma;J]
 =\tfrac12\mathfrak l[\gamma,\gamma]-J[\gamma]
   +E_m+E_{\rm dir},\qquad J\in\mathcal V^*.
 \label{eq:native-static-energy}
\end{equation}
Let $\mathcal G:\mathcal V^*\to\mathcal V$ be the inverse of its form
operator. The unique stationary point is $\gamma=\mathcal GJ$ and the
reduced exchange energy is $-J[\mathcal GJ]/2$. For $J=J_a+J_b$, the
mediated cross term is
\begin{equation}
 E_{ab}^{\rm med}=-J_a[\mathcal GJ_b].
 \label{eq:native-static-cross-energy}
\end{equation}
Direct matter cross terms, if present, are added separately.
\end{proposition}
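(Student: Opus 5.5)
The plan is the standard Lax--Milgram/Riesz argument followed by polarisation. First, bounded symmetric coercivity of $\mathfrak l$ on the real Hilbert space $\mathcal V$ makes the form operator $\mathcal L:\mathcal V\to\mathcal V^*$, $\langle\mathcal L\gamma,\delta\rangle=\mathfrak l[\gamma,\delta]$, a bounded isomorphism. Lax--Milgram, or simply Riesz representation in the equivalent inner product $\mathfrak l$, gives this, so $\mathcal G=\mathcal L^{-1}$ is well defined, bounded and symmetric: $J[\mathcal GK]=\mathfrak l[\mathcal GJ,\mathcal GK]=K[\mathcal GJ]$. The terms $E_m$ and $E_{\rm dir}$ are independent of $\gamma$ and play no role in the variational step.

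Second, I will establish existence and uniqueness of the stationary point by completing the square. For $\gamma_J:=\mathcal GJ$ and arbitrary $\gamma=\gamma_J+\delta$,
\[
 \tfrac12\mathfrak l[\gamma,\gamma]-J[\gamma]
 =\tfrac12\mathfrak l[\delta,\delta]-\tfrac12 J[\mathcal GJ],
\]
using $J[\delta]=\mathfrak l[\gamma_J,\delta]$. Alternatively, the Fr\'echet derivative $\mathfrak l[\gamma,\cdot]-J$ vanishes exactly when $\mathcal L\gamma=J$. Coercivity makes $\mathfrak l[\delta,\delta]>0$ for $\delta\ne0$. Hence $\gamma_J$ is the unique stationary point, indeed the strict global minimiser, and the reduced value is $-J[\mathcal GJ]/2$ plus the source-independent constants.

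Third, I will substitute $J=J_a+J_b$ and expand the quadratic expression bilinearly:
\[
 -\tfrac12(J_a+J_b)[\mathcal G(J_a+J_b)]
 =-\tfrac12J_a[\mathcal GJ_a]-\tfrac12J_b[\mathcal GJ_b]
 -\tfrac12\bigl(J_a[\mathcal GJ_b]+J_b[\mathcal GJ_a]\bigr).
\]
The symmetry of $\mathcal G$ from the first step merges the last two terms into $-J_a[\mathcal GJ_b]$. This is Eq.~\eqref{eq:native-static-cross-energy}, the part of the energy bilinear in the two sources. The self-energies and $E_m+E_{\rm dir}$ are not counted in the mediated term. Any direct matter cross terms sit in $E_m$ or $E_{\rm dir}$ and are added separately, as the statement says.

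The only point needing care is the symmetry of $\mathcal G$ and the correct identification of $\mathcal V^*$ pairings, since $\mathcal V$ is real and the form is only assumed bounded and coercive. This is routine once $\mathcal L$ is shown to be an isomorphism. There is no genuine obstacle: the proposition is a direct consequence of the Riesz/Lax--Milgram theorem.
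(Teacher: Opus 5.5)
Your proposal is correct and follows essentially the same route as the paper: Riesz representation in the equivalent form inner product defines $\mathcal GJ$, completing the square gives $\tfrac12\mathfrak l[\gamma-\mathcal GJ,\gamma-\mathcal GJ]-\tfrac12 J[\mathcal GJ]$, and symmetry of $\mathcal G$ merges the two cross terms into $-J_a[\mathcal GJ_b]$. You also supply two details the paper leaves implicit: the symmetry check $J[\mathcal GK]=\mathfrak l[\mathcal GJ,\mathcal GK]=K[\mathcal GJ]$, and the derivative argument showing that the stationary point, not just the minimiser, is unique.
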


\begin{proof}
The form norm is equivalent to the Hilbert norm, so the Riesz
representation theorem in that norm gives a unique $\mathcal GJ$.
Completing the square yields
\[
 \tfrac12\mathfrak l[\gamma-\mathcal GJ,\gamma-\mathcal GJ]
    -\tfrac12J[\mathcal GJ].
\]
Symmetry of the inverse gives Eq.~\eqref{eq:native-static-cross-energy}.
\end{proof}

Positive self-response does not fix the sign of this mixed term: even
for $\mathcal G=I$, opposite source vectors have negative mutual
contraction. Attraction therefore requires the relative source
directions and the energy sign, not just a positive pole residue.

The Gaussian source calculation gives the same exchange directly.
For an admissible real source and a centred Gaussian reference with
covariance $\mathcal G$, completing the square gives
\begin{equation}
 W[J]-W[0]=\tfrac12J[\mathcal GJ],\qquad
 \langle\gamma\rangle_J=\mathcal GJ,\qquad
 \langle\gamma\gamma\rangle_{J,c}=\mathcal G.
 \label{eq:native-linear-source-gaussian}
\end{equation}
Thus a linear matter source shifts the mean; it does not rescale the
connected covariance. The full second moment also contains the outer
product of the means. Nonlinear response requires the higher source
derivatives of the same generating rule. The Gaussian identity here
concerns the field distribution, not the Gaussian scale filter.

\subsection{Spatial propagation and the relevant spectrum}
\label{sec:native-spatial-spectrum}

To read off spatial propagation, we must distinguish the internal
response operator from its external momentum fibres and from the full
spatial operator. Suppose separation of the response operator gives
$L(\mathbf p)=A_{\rm int}+a|\mathbf p|^2I$, $a>0$, with
$A_{\rm int}\geq0$. A visible internal kernel then gives a fibre
term $P_{\ker A_{\rm int}}/(a|\mathbf p|^2)$. This does not require a
normalisable zero eigenvector of the full operator on an infinite
spatial volume.

For example, if $f\in L^2(\mathbb R^3)$ satisfies $-\Delta f=0$
distributionally, then $|\mathbf p|^2\widehat f=0$. Its Fourier transform
is supported at a set of Lebesgue measure zero, hence $f=0$. Nevertheless
the distributional inverse has kernel
\begin{equation}
 G_a(x)=\frac{1}{4\pi a|x|},\qquad -a\Delta G_a=\delta_0.
 \label{eq:native-spatial-green}
\end{equation}
Away from zero its Laplacian vanishes; the flux through a shrinking
sphere fixes the unit delta coefficient. For a smooth compact source,
$\int|\widehat J(\mathbf p)|^2/|\mathbf p|^2\,\dd^3p<\infty$,
which defines its inverse in the homogeneous gradient energy space.
The bounded inverse on $L^2$ itself is not being used. Thus absence of a
full-space zero atom does not exclude a long-range Green function.

For the free-carrier spectral rule
\eqref{eq:native-original-spectral-propagation}, the static kernel is
obtained mode by mode. Put $a_{ab,l}=\langle g_a,P_lg_b\rangle$.
The three-dimensional Fourier transform at zero external frequency is
\begin{equation}
 G_{ab}^{\rm free}(r)=\frac1{4\pi r}\sum_l a_{ab,l}e^{-M_lr},
 \qquad r>0.
 \label{eq:native-original-static-kernel}
\end{equation}
Absolute summability of $a_{ab,l}$ justifies the sum and its derivative
away from zero. Since $M_l\geq m_\delta>0$, it gives the explicit bounds
\begin{equation}
 \begin{aligned}
 |G_{ab}^{\rm free}(r)|&\leq
       \frac{\|g_a\|\|g_b\|}{4\pi r}e^{-m_\delta r},\\
 |\partial_rG_{ab}^{\rm free}(r)|&\leq
       \frac{\|g_a\|\|g_b\|}{4\pi r^2}
                    (1+m_\delta r)e^{-m_\delta r}.
 \end{aligned}
 \label{eq:native-original-static-bounds}
\end{equation}
The second bound uses that $(1+Mr)e^{-Mr}$ decreases with $M\geq0$.
These are the spatial consequences of the scalar mass tower.
They fix the elementary propagator used in composite spectral
calculations; the collective tensor uses its resummed response and
the matter vertex of Eq.~\eqref{eq:native-matter-tensor-coupling}.

On a connected compact space without boundary, integration of
$-\Delta\gamma=J$ gives $\int J=0$. Its inverse is consequently defined
only on a compatible source subspace, up to a constant. Replacing a
positive total energy source by its zero-mean part changes the physical
problem unless a background or constraint equation supplies exactly
that compensation. Finite-space distance windows and their boundary
conditions are separate from the infinite-volume limit above.

\subsection{Spatial tails and the force readout}
\label{sec:native-force-tail}

\begin{proposition}[Separate bounds for potential and force tails]
\label{prop:native-green-tail}
Let $\nu$ be a real signed or complex measure on $(0,\infty)$ such that
$\int e^{-r_0\sqrt s}\,\dd|\nu|(s)<\infty$ for some $r_0>0$.
For $r>r_0$ define
\begin{equation}
 g_{\rm rem}(r)=\frac{1}{4\pi r}
             \int e^{-r\sqrt s}\,\dd\nu(s).
 \label{eq:native-green-remainder}
\end{equation}
Then $g_{\rm rem}=o(r^{-1})$ and $g_{\rm rem}'=o(r^{-2})$.
\end{proposition}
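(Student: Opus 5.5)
The plan is to show that the integral $I(r)=\int e^{-r\sqrt s}\,\dd\nu(s)$ tends to zero as $r\to\infty$, and that $r$ times its derivative also tends to zero. Both claims then follow from the product rule applied to $g_{\rm rem}=I(r)/(4\pi r)$. The only tool needed is dominated convergence against the integrable majorant $e^{-r_0\sqrt s}\,\dd|\nu|(s)$. The key structural fact is that $\nu$ lives on the open half-line $(0,\infty)$, so $\sqrt s>0$ at every point of its support and each integrand decays pointwise.

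\emph{Potential tail.} For $r>r_0$ we have $|e^{-r\sqrt s}|\leq e^{-r_0\sqrt s}$. The integral is therefore defined, and the integrand tends to $0$ pointwise on $(0,\infty)$ as $r\to\infty$. Dominated convergence gives $I(r)\to0$, hence $r\,g_{\rm rem}(r)=I(r)/(4\pi)\to0$, that is, $g_{\rm rem}=o(r^{-1})$. No atom at $s=0$ can obstruct this, because the measure is on $(0,\infty)$; this is exactly where the statement differs from the atomic term $Z_0$.

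\emph{Differentiation under the integral.} Fix $r_1$ with $r_0<r_1<r$. On $[r_1,\infty)$ the $r$-derivative of the integrand is $-\sqrt s\,e^{-r\sqrt s}$. We bound it by $\sqrt s\,e^{-(r_1-r_0)\sqrt s}e^{-r_0\sqrt s}\leq C_{r_1}e^{-r_0\sqrt s}$, where $C_{r_1}=\sup_{y\geq0}ye^{-(r_1-r_0)y}=1/(e(r_1-r_0))$. This majorant is $|\nu|$-integrable and uniform in $r\geq r_1$. It justifies
\[
I'(r)=-\int\sqrt s\,e^{-r\sqrt s}\,\dd\nu(s).
\]

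\emph{Force tail.} We have $g_{\rm rem}'=I'/(4\pi r)-I/(4\pi r^2)$. The second term is $o(r^{-2})$ by the first step, so it remains to show $rI'(r)\to0$. Write $r\sqrt s\,e^{-r\sqrt s}=\big(r\sqrt s\,e^{-(r-r_0)\sqrt s}\big)e^{-r_0\sqrt s}$. For $r\geq2r_0$ we have $r-r_0\geq r/2$, and setting $y=r\sqrt s$ the bracket is at most $ye^{-y/2}\leq 2/e$. The bracket therefore has a uniform bound, and it tends to zero pointwise for each fixed $s>0$. Dominated convergence again gives $rI'(r)\to0$, hence $g_{\rm rem}'=o(r^{-2})$.

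This last step is the main obstacle, since the extra factor $r\sqrt s$ threatens uniformity near small $s$. The scaling bound $ye^{-y/2}\leq2/e$ resolves it. No rate beyond $o(\cdot)$ is claimed, consistent with the possibility that the measure accumulates at $s=0$.
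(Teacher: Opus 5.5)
Your proof is correct and follows essentially the same route as the paper: dominated convergence against the majorant $e^{-r_0\sqrt s}\,\dd|\nu|(s)$ for the potential tail, then differentiation under the integral, then the scaling bound in $u=r\sqrt s$ for $r\geq 2r_0$ to obtain an $r$-independent majorant for the force tail. The only cosmetic difference is that the paper combines the two product-rule terms into $4\pi r^2 g_{\rm rem}'(r)=-\int(1+r\sqrt s)e^{-r\sqrt s}\,\dd\nu(s)$ and uses $(1+u)e^{-u}\leq Ce^{-u/2}$, whereas you treat $I/r^2$ and $rI'$ separately.
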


\begin{proof}
Spectral dominated convergence proves the first limit since the measure
has no mass at zero. For $r>r_0$, differentiation under the integral is
justified locally by the additional exponential decay. It gives
\begin{equation}
 4\pi r^2g_{\rm rem}'(r)
  =-\int (1+r\sqrt s)e^{-r\sqrt s}\,\dd\nu(s).
 \label{eq:native-green-derivative}
\end{equation}
For $r\geq2r_0$, put $u=r\sqrt s$. The inequality
$(1+u)e^{-u}\leq C e^{-u/2}\leq C e^{-r_0\sqrt s}$ gives an integrable
majorant independent of $r$. At each $s>0$ the integrand tends to zero,
so dominated convergence proves the derivative limit as well.
\end{proof}

Mixed source spectral measures can be signed even when their full
matrix measure is positive; this is why the hypothesis uses total
variation. The proposition applies to the stated separated spectral
kernel. Source multipoles, direct terms and other spatial corrections
need their own bounds and are included in the total remainder below.

When the source projection, constraint reduction and spatial
inverse give a common static energy channel, their result takes the form
\begin{equation}
 E_{ab}(r)
 =-\frac{\beta_{\rm eff}^2}{4\pi a}\frac{Q_aQ_b}{r}
      +\mathcal R_{ab}(r),\qquad
 \mathcal R_{ab}'(r)=o(r^{-2}).
 \label{eq:native-energy-force}
\end{equation}
For the tensor vertex and an isotropic critical contraction in
Eq.~\eqref{eq:native-matter-common-residue}, $Q_a=E_a$ and
$\beta_{\rm eff}^2/a=\mathfrak b_{\rm mat}$; the coefficient is fixed
by the same reduced propagation and source maps.
If quasistatic work in the matter update is
$\delta W=-\delta E_{ab}$, its radial force is
$-\beta_{\rm eff}^2Q_aQ_b/(4\pi a r^2)+o(r^{-2})$.
The leading force is attractive for positive product $Q_aQ_b$.
This force law therefore uses three results from the same update:
the energy-source identity, the spatial inverse and the quasistatic
work relation.

A force proportional to energy charges also requires an inertia
calculation before being stated as universal free fall. For a stable
matter branch with
\begin{equation}
 E_a(\mathbf P)=Q_a+\frac{|\mathbf P|^2}{2m_{I,a}}
                         +o(|\mathbf P|^2),\qquad m_{I,a}>0,
 \label{eq:native-inertial-response}
\end{equation}
the ratio $Q_a/m_{I,a}$ enters the acceleration. In the Lorentz-covariant
realisation of Proposition~\ref{prop:native-matter-energy-monopole},
this ratio follows from the same total tensor. For the family obtained
by boosting the isolated stationary state, a boost $\Lambda$ changes
the equal-time volume by $1/\Lambda^0{}_0$. Stationarity and the
integrated stress identity give
\begin{equation}
 P_a'{}^\mu=\frac{\Lambda^0{}_\alpha\Lambda^\mu{}_\beta}
                         {\Lambda^0{}_0}
       \int\tau_a^{\alpha\beta}\,\dd^3x
       =E_a\Lambda^\mu{}_0,\qquad
 E_a(\mathbf P)=\sqrt{E_a^2+|\mathbf P|^2}.
 \label{eq:native-energy-inertia-match}
\end{equation}
For $E_a>0$, this proves $m_{I,a}=E_a=Q_a$ in that realisation.
For other spectral motion laws, Eq.~\eqref{eq:native-inertial-response}
requires their dispersion relation. At finite coupling, energy transfer
and the mediator contribution enter the full coupled update.
A field redefinition $\gamma'=c\gamma$ sends
$a$ to $a/c^2$ and $\beta_{\rm eff}$ to $\beta_{\rm eff}/c$; the measured
combination $\beta_{\rm eff}^2/a$ in
Eq.~\eqref{eq:native-energy-force} is unchanged.

\subsection{From the atom to a long-range response}
\label{sec:atom-long-range}

We now specialise to an independently constructed response with a
matched positive TT Wightman atom. The complete stress tensor of
Theorem~\ref{thm:null-shell-exclusion} does not have such an atom.
This specialisation concerns the invariant-mass spectrum, rather than
the full-space spectrum discussed in Section~\ref{sec:native-spatial-spectrum}.

Let $J^{\mu\nu}\in\mathcal J_{\TT}$ be an external source. The quadratic
part of the connected generating functional is
\begin{equation}
 W[J]-W[0]
 =\frac12\int\frac{\dd^4Q}{(2\pi)^4}\,
 J^{\mu\nu}(-Q)\Pi^{(2)}_{\mu\nu,\rho\sigma}(Q)
 C_2(Q^2)J^{\rho\sigma}(Q)+\cdots.
 \label{eq:source-source-generating-functional}
\end{equation}
If $Z_0>0$, the nonlocal part contains
\begin{equation}
 C_2^{\rm nl}(Q^2)=\frac{Z_0}{Q^2}+C_{2,\rm rem}(Q^2),
 \qquad Q^2C_{2,\rm rem}(Q^2)\longrightarrow0.
 \label{eq:tt-massless-response}
\end{equation}
Away from coincident points,
\begin{equation}
 \int\frac{\dd^4Q}{(2\pi)^4}\frac{e^{\iu Q\cdot x}}{Q^2}
 =\frac1{4\pi^2|x|^2}.
 \label{eq:four-dimensional-massless-green}
\end{equation}
For a static factorised exchange, the corresponding three-dimensional
transform is $1/(4\pi r)$. Local subtraction terms contribute only contact
distributions and cannot remove these tails.

\begin{proposition}[Long-range TT response]
\label{prop:long-range-tt-response}
Under the hypotheses of Proposition~\ref{prop:tt-atom-extraction}, a
positive TT atom produces the massless power-law contribution
Eq.~\eqref{eq:four-dimensional-massless-green} to the connected response of
physical TT sources. External Gaussian smearing at any finite $\tau$ leaves
its coefficient unchanged.
\end{proposition}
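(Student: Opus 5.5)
The plan is to combine three earlier results: Proposition~\ref{prop:tt-atom-extraction}, the Fourier identity Eq.~\eqref{eq:four-dimensional-massless-green}, and Proposition~\ref{prop:gaussian-atom-invariance}.

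\emph{Step 1: split the form factor.} Under the hypotheses of Proposition~\ref{prop:tt-atom-extraction}, the decomposition Eq.~\eqref{eq:tt-atom-decomposition} inserted into the subtracted dispersion Eq.~\eqref{eq:subtracted-tt-dispersion} gives
\[
 C_2^{\rm nl}(Q^2)=\frac{Z_0}{Q^2}\Bigl(\frac{Q_*^2-Q^2}{Q_*^2}\Bigr)^{n_{\rm sub}}+C_{2,\rm rem}(Q^2).
\]
Expanding the prefactor, the atomic term is $Z_0/Q^2$ plus a polynomial in $Q^2$. That polynomial is local and can be absorbed into $P_2$. The proof of Proposition~\ref{prop:tt-atom-extraction} already shows $Q^2C_{2,\rm rem}\to0$, which is Eq.~\eqref{eq:tt-massless-response}.

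\emph{Step 2: compute the contribution to the generating functional.} Insert this split into Eq.~\eqref{eq:source-source-generating-functional} for $J\in\mathcal J_{\TT}$. A TT source is conserved and traceless, so $\Pi^{(2)}(Q)$ acts as the identity on $\widetilde J(Q)$ at every $Q\ne0$. The singular projector therefore never has to be evaluated at the origin, in line with the quotient formulation of Section~\ref{sec:tt-source-quotient}. The atomic piece becomes
\[
 \frac{Z_0}{2}\int \frac{\dd^4Q}{(2\pi)^4}\frac{\overline{\widetilde J(Q)}\cdot\widetilde J(Q)}{Q^2}.
\]
This integral converges, since $1/Q^2$ is locally integrable in four dimensions and $\widetilde J$ is Schwartz. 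Eq.~\eqref{eq:four-dimensional-massless-green} rewrites it in position space as $\tfrac{Z_0}{2}\int J(x)\cdot J(y)\,[4\pi^2|x-y|^2]^{-1}$. Because the coefficient is $Z_0>0$, this is a nonvanishing $|x-y|^{-2}$ tail between separated source supports.

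\emph{Step 3: the tail cannot be cancelled.} The polynomial $P_2$ and the absorbed terms produce only contact distributions, which vanish at separated points. The remainder term needs more care, and this is the main obstacle. The condition $Q^2C_{2,\rm rem}\to0$ gives only a weaker infrared singularity. A threshold at the origin can still yield a slower-decaying correction, for example logarithmic. I would therefore not claim that the remainder is subleading in $|x|$. Instead I would state the result as a contribution with a uniquely identified coefficient: the coefficient of $1/Q^2$ is extracted as the limit in Eq.~\eqref{eq:tt-atom-limit}. It is determined by the measure, so it is independent of the remainder and of the subtraction scheme.

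\emph{Step 4: Gaussian smearing.} External smearing multiplies $C_2$ by $e^{-\tau Q^2}$, as in Eq.~\eqref{eq:external-gaussian-correlator}. By Proposition~\ref{prop:gaussian-atom-invariance}, the limit of $Q^2 e^{-\tau Q^2}C_2^{\rm nl}(Q^2)$ as $Q^2\downarrow0$ is still $Z_0$. The corrections, $-\tau Z_0$ and the smeared polynomial, are regular. Hence the coefficient of the massless contribution is unchanged for every finite $\tau$. In position space the smeared sources $\cH_{\tau/2}J$ still yield the same $Z_0/(4\pi^2|x|^2)$ far tail.
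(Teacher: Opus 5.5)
Your proof is correct, and for the first claim it follows the paper's argument. On $\mathcal J_{\TT}$ the projector $\Pi^{(2)}(Q)$ acts as the identity, and the atomic term $Z_0/Q^2$ is inserted into Eq.~\eqref{eq:source-source-generating-functional}. Eq.~\eqref{eq:four-dimensional-massless-green} then gives the kernel $Z_0/(4\pi^2|x|^2)$. The extraction limit of Proposition~\ref{prop:tt-atom-extraction} separates this coefficient from any zero-threshold remainder.

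The two arguments differ on the smearing claim. You argue in momentum space through Proposition~\ref{prop:gaussian-atom-invariance}: the Laurent coefficient of $1/Q^2$ survives multiplication by $e^{-\tau Q^2}$. The paper instead computes the smeared atomic kernel in position space. It uses $e^{-\tau Q^2}/Q^2=\int_\tau^\infty e^{-tQ^2}\,\dd t$ and the heat kernel to obtain Eq.~\eqref{eq:smeared-atomic-spatial-kernel}. Your route is shorter and shows directly that the regular terms cannot change the coefficient. The paper's route proves the position-space statement and exhibits the finite-distance smoothing.

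Your last sentence of Step 4, the same far tail in position space, needs the paper's computation or an equivalent symbol estimate. The function $Z_0(e^{-\tau Q^2}-1)/Q^2$ is entire but not a polynomial, so it is not a contact term. Regularity at $Q^2=0$ alone does not show that its transform decays faster than $r^{-2}$. Explicitly, the transform is $-Z_0e^{-r^2/(4\tau)}/(4\pi^2r^2)$.

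Your caution in Step 3 is unnecessary. The massive Euclidean kernel $G_m(r)=mK_1(mr)/(4\pi^2r)$ satisfies $4\pi^2r^2G_m(r)=mrK_1(mr)\leq1$ and tends to zero for $m>0$. Under the hypotheses of Proposition~\ref{prop:tt-atom-extraction}, dominated convergence then makes the nonatomic remainder $o(r^{-2})$ at separated points; this is the four-dimensional analogue of Proposition~\ref{prop:native-green-tail}. So a threshold at the origin cannot decay more slowly than the atomic term.

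Finally, your ``nonvanishing $|x-y|^{-2}$ tail between separated source supports'' holds for the kernel, not for the integrated mutual response. A compactly supported conserved source obeys $\int J^{\mu\nu}\,\dd^4x=\int\partial_\alpha(x^\mu J^{\alpha\nu})\,\dd^4x=0$. Two such sources at separation $R$ therefore interact more weakly than $R^{-2}$. The proposition, like the paper, asserts only the kernel contribution.
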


For a conserved traceless source, the TT projector acts as the identity.
Insert the atomic term into the resulting quadratic form.
For $\tau>0$, its smeared scalar kernel at $r=|x|>0$ is explicitly
\begin{equation}
 \int_\tau^\infty\frac{e^{-r^2/(4t)}}{(4\pi t)^2}\,\dd t
 =\frac{1-e^{-r^2/(4\tau)}}{4\pi^2r^2}.
 \label{eq:smeared-atomic-spatial-kernel}
\end{equation}
This follows from $e^{-\tau Q^2}/Q^2=\int_\tau^\infty e^{-tQ^2}\dd t$
and the heat kernel. Its leading coefficient at large $r$ is unchanged;
finite-distance smoothing is retained in the exponential term.
A zero-threshold continuum may contribute a further power-law tail. The
atom-extraction limit separates the massless spectral contribution from
that remainder.

A positive scalar spectral measure, with the required tempered growth,
defines a scalar generalised-free two-point function. A tensor realisation
requires its own covariance and physical-state construction, and an
interacting realisation also requires higher-point functions. These are
the data addressed by the following conditions.

\subsection{Helicity and tensor structure}
\label{sec:helicity-gate}

The Euclidean projector in Eq.~\eqref{eq:euclidean-tt-projector} has rank
five and describes the continued transverse, traceless spin-2 sector at
nonzero invariant momentum. The physical helicity space of a massless
spin-2 particle is two-dimensional. A helicity-two interpretation
identifies the atom in the
Lorentzian continuation with a positive-norm massless representation
of helicities $+2$ and $-2$. Longitudinal and negative-norm
representatives then decouple from physical amplitudes. For the
gravitational exchange considered here, the same pole must couple
to matter in mixed correlators, and the response to a generic
conserved source must have no additional massless scalar residue.

The trace sector matters for coupling to ordinary matter. Between generic conserved sources, a
massless graviton in four dimensions has numerator
\begin{equation}
 \mathcal P^{\rm GR}_{\mu\nu,\rho\sigma}
 =\frac12(\eta_{\mu\rho}\eta_{\nu\sigma}
          +\eta_{\mu\sigma}\eta_{\nu\rho})
  -\frac12\eta_{\mu\nu}\eta_{\rho\sigma}
  +\text{terms proportional to }q_\mu,
 \label{eq:graviton-conserved-source-numerator}
\end{equation}
where the last terms vanish on conserved sources. The nonzero-mass TT
projector instead contains the coefficient $-1/3$ in four dimensions.
Their contractions agree on traceless sources but not on general matter.
The full gravitational tensor structure therefore involves the Ward
identities and scalar sector as well as the pole in $C_2$. Their
combined massless amplitude has the Einstein form when its numerator
is Eq.~\eqref{eq:graviton-conserved-source-numerator}.

\subsection{Factorisation and the physical coupling}
\label{sec:factorisation-gate}

Let $a$ and $b$ label matter states or gauge-invariant matter operators. A
force interpretation requires the connected mixed amplitude to factorise at
the same pole,
\begin{equation}
 \mathcal A_{ab}(q)
 =\frac{\mathcal V_a^{\mu\nu}(q)
        \mathcal P^{\rm phys}_{\mu\nu,\rho\sigma}(q)
        \mathcal V_b^{\rho\sigma}(-q)}{-q^2-\iu0}
  +\mathcal A_{ab}^{\rm reg}(q),
 \label{eq:mixed-amplitude-factorisation}
\end{equation}
with positive physical residue. This condition is stronger than a pole in
$\langle TT\rangle$ and should be checked through mixed source derivatives
of the same generating functional.

If the helicity and soft-coupling conditions hold, write the canonically
normalised gravitational field as $\gamma_{\mu\nu}$, distinct from the
background source $h_{\mu\nu}$ of Section~\ref{sec:background-source-frg},
and choose the conventional normalisation
\begin{equation}
 \mathcal L_{\rm int}=-\frac{\kappa_{\rm phys}}2
 \gamma_{\mu\nu}T^{\mu\nu}_{\rm matter}.
 \label{eq:physical-graviton-coupling}
\end{equation}
This is the convention $\Phi_{\mu\nu}=-(\kappa_{\rm phys}/2)\gamma_{\mu\nu}$
for the tensor vertex in Eq.~\eqref{eq:native-matter-tensor-coupling}.
Then, for conserved sources,
\begin{equation}
 \mathcal A_{ab}(q)
 \underset{q^2\to0}{\sim}
 \frac{\kappa_{\rm phys}^2}{4}
 \frac{T_a^{\mu\nu}T_{b,\mu\nu}
       -\tfrac12T_aT_b}{-q^2-\iu0},
 \qquad
 \kappa_{\rm phys}^2=32\pi G_N.
 \label{eq:newton-from-mixed-residue}
\end{equation}
This equation defines $G_N$ in the stated convention. The Wightman weight
$Z_0$ is not itself $1/(8\pi G_N)$: it changes under a rescaling of the
interpolating operator, while the product of normalised source overlaps in
Eq.~\eqref{eq:mixed-amplitude-factorisation} does not.

\subsection{The Weinberg--Witten obstruction}
\label{sec:ww-gate}

The Weinberg--Witten theorem constrains the particle interpretation.
In the form relevant here,
it excludes a massless spin greater than one particle that carries a
nonzero conserved four-momentum measured by a Lorentz-covariant conserved
stress tensor, subject to its assumptions on states and matrix elements
~\cite{WeinbergWitten1980}. The theorem does not distinguish elementary from
composite particles.

\begin{theorem}[Weinberg--Witten obstruction to a particle interpretation]
\label{thm:ww-obstruction}
Suppose a positive TT atom is interpreted as an ordinary asymptotic massless
helicity-two particle in the same positive-metric Hilbert space. If the
microscopic theory retains a Lorentz-covariant conserved stress tensor whose
charge generates translations on that state, and the remaining hypotheses
of the Weinberg--Witten theorem hold, such a particle is excluded. The conclusion applies equally to
elementary and composite states.
\end{theorem}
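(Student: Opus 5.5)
The plan is to reduce the statement to the classical Weinberg--Witten argument, stated as a matrix-element computation in the positive-metric Hilbert space. The first step fixes the hypotheses precisely. There is a one-particle massless state $|p,\sigma\rangle$ of helicity $\sigma=\pm2$ in the same Hilbert space on which the microscopic $T_{\mu\nu}$ acts. $T_{\mu\nu}$ is Lorentz covariant, symmetric and conserved there. Its charge $P^\mu=\int T^{0\mu}\,\dd^3x$ generates translations on these states. The states are normalised with $\langle p',\sigma|p,\sigma\rangle=(2\pi)^3 2p^0\delta^3(\mathbf p'-\mathbf p)$. The charge condition then gives the forward limit
\[
 \lim_{p'\to p}\langle p',\sigma|T^{\mu\nu}(0)|p,\sigma\rangle=\frac{p^\mu p^\nu}{E}\cdot\mathrm{const}\neq0 ,
\]
obtained by wave-packet smearing and the identity $\langle p'|P^\mu|p\rangle=p^\mu\langle p'|p\rangle$. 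Nothing here depends on whether the state is elementary or composite; only its membership in the Hilbert space and the translation action are used. This already settles the last sentence of the statement.

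The second step is the kinematic argument. I will choose $p,p'$ null with $p\neq p'$ and work in the frame where $p+p'$ has zero spatial part, so that $\mathbf p'=-\mathbf p$. A rotation $R(\theta)$ about the $\mathbf p$ axis fixes both momenta up to the little-group phase. Covariance then gives
\[
 e^{\pm\iu(\sigma-\sigma')\theta}\langle p',\sigma|T^{\mu\nu}|p,\sigma\rangle=R^\mu{}_\alpha R^\nu{}_\beta\langle p',\sigma|T^{\alpha\beta}|p,\sigma\rangle ,
\]
and because $p'$ is reversed the effective phase is $e^{\pm2\iu\sigma\theta}$. A rank-two tensor carries only angular-momentum components $|m|\leq2$ about that axis, while $2|\sigma|=4>2$. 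Hence all matrix elements vanish for every $p\neq p'$ in this frame. Lorentz covariance transports the vanishing to all nonforward pairs, and continuity then forces the forward limit to be zero, which contradicts the first step.

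The third step is the step I expect to be the main obstacle: justifying continuity of the matrix elements in $p'\to p$ for a possibly composite state. My approach is to smear with normalisable wave packets, use the temperedness and covariance assumptions already imposed in Section~\ref{sec:covariant-stress-realisation}, and state explicitly that the "remaining hypotheses" of the theorem include this regularity, as in ref.~\cite{WeinbergWitten1980}. The discussion should also note the case that is not covered. A positive TT atom of an independently defined response, as in Proposition~\ref{prop:compact-spectrum-interface}, escapes the obstruction only if one of the assumptions fails, namely covariance, translation generation, or membership in the same positive space. That is consistent with Theorem~\ref{thm:null-shell-exclusion}, which already shows that the complete stress tensor itself carries no such atom.
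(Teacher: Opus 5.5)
Your proposal is correct and matches the paper, which gives no separate proof but simply invokes the Weinberg--Witten theorem under hypotheses assumed in the statement, including the forward-limit regularity you flag; your helicity-phase argument is the standard content of that citation. Two harmless slips: with your covariant normalisation the forward element is $2p^\mu p^\nu$, with no $1/E$; and the back-to-back frame reaches only non-collinear pairs (those with $p+p'$ timelike), not all nonforward pairs, which still suffices because continuity lets you take $p'\to p$ through non-collinear momenta.
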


\begin{corollary}[Vacuum and particle matrix-element restrictions]
\label{cor:ww-baseline-branch}
Retain the covariant stress-tensor input class of
Section~\ref{sec:covariant-stress-realisation}.
Its TT atom vanishes by Theorem~\ref{thm:null-shell-exclusion}, without
any ordinary-particle assumption. Independently, an ordinary asymptotic
helicity-two particle in the same positive-metric Hilbert space is excluded if the same covariant conserved stress tensor carries
its nonzero four-momentum and the remaining Weinberg--Witten hypotheses
hold. The first statement concerns vacuum spectral matrix elements; the
second concerns the particle's stress-tensor matrix elements and charges.
\end{corollary}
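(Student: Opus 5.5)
The corollary has two independent assertions, and the plan is to prove each by invoking the relevant earlier result.

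\emph{First statement (vacuum atom).} Keep the input class of Section~\ref{sec:covariant-stress-realisation} fixed: positivity of the complete tensor Wightman measure, exact covariance on the positive physical Hilbert space, conservation, and an invariant vacuum. Theorem~\ref{thm:null-shell-exclusion} then says three things. The null-shell part is $c\,p_\mu p_\nu p_\rho p_\sigma\,\dd\Omega_0$. The origin part is proportional to $\eta_{\mu\nu}\eta_{\rho\sigma}$. Hence $\mathbf Z_0[J]=0$ on $\mathcal J_{\TT}$.

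Proposition~\ref{prop:scalar-tt-boundary} turns this into $Z_0=\rho_2(\{0\})=0$ for the scalar measure. The point to emphasise is the logical content of that proof. It uses only the disintegration of the vacuum two-point measure from Theorem~\ref{thm:matrix-tt-wightman-measure}, together with little-group and unipotence arguments. It never invokes an isolated one-particle state or asymptotic completeness. So the first claim needs no particle hypothesis, exactly as the corollary states.

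\emph{Second statement (particle exclusion).} This is Theorem~\ref{thm:ww-obstruction} applied to the same tensor. The hypotheses to check are:
\begin{itemize}
\item the particle lies in the same positive-metric space;
\item $T_{\mu\nu}$ is Lorentz covariant and conserved;
\item its charge $P^\mu=\int T^{0\mu}\dd^3x$ generates translations, so $\langle p|T^{0\mu}|p\rangle$ at zero momentum transfer returns the nonzero four-momentum;
\item the remaining Weinberg--Witten hypotheses hold.
\end{itemize}
The standard argument then applies. Covariance forces helicity-$\pm2$ matrix elements $\langle p',\pm2|T_{\mu\nu}|p,\pm2\rangle$ at lightlike $p,p'$ to vanish. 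This contradicts the nonzero normalisation in the forward limit.

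The corollary does not use the first statement to prove the second, nor the reverse. The reason is that the two rest on different matrix elements. The first uses $\langle\Omega|\Theta\,E(\dd p)\,\Theta|\Omega\rangle$. The second uses $\langle p'|T|p\rangle$ and the associated charges.

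\emph{Main obstacle.} The hard part is stating precisely why neither statement implies the other. A vanishing vacuum atom does not rule out a particle state with $\langle\Omega|\Theta(J)|p\rangle=0$ on TT sources. Conversely, the WW exclusion says nothing about nonparticle atoms. I would make this explicit with two observations:
\begin{itemize}
\item A one-particle state contributes to $\mathbf Z_0$ only through $|\langle\Omega|\Theta(J)|p\rangle|^2$.
\item The WW argument concerns diagonal particle matrix elements only.
\end{itemize}
With that in place, the two conclusions follow separately from the two cited theorems.
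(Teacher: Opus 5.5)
Your proposal is correct and follows the paper: the corollary just places Theorem~\ref{thm:null-shell-exclusion} (a vacuum two-point argument giving $\mathbf Z_0[J]=0$ with no particle input) next to Theorem~\ref{thm:ww-obstruction} applied to the same tensor, and, as you say, the two rest on different matrix elements. One small correction: the null-shell step does not use the timelike disintegration of Theorem~\ref{thm:matrix-tt-wightman-measure}; it uses only the pushforward-atom definition in its part (iii), together with the separate Haar-regularised orbit analysis on $H_0^+$ and the boost argument at $p=0$.
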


For a response outside this setting, applicability depends on the
infrared states, Lorentz covariance of the physical stress-tensor
matrix elements, and the conserved translation charge acting on those
states. The forward matrix-element limit used to identify the charge
is part of this comparison. A nonlocal formulation evades the theorem
only if it changes one of its actual hypotheses; nonlocality alone
does not remove the matrix-element obstruction.

If ordinary asymptotic states are absent, the spectral long-range
response remains meaningful, while an application of Weinberg's
$S$-matrix soft theorem requires an independent infrared scattering
construction. Thus an alternative realisation has to specify both the
hypothesis that ceases to hold and the low-energy structure replacing
it. Jenkins discusses these restrictions for emergent gravity
\cite{Jenkins2009}. Porrati extends the argument to exclude gravitationally
interacting massless particles of spin greater than two even in the
absence of a gauge-invariant stress tensor \cite{Porrati2008}; his later
review explains the relation to soft-theorem constraints
\cite{Porrati2012}.

\subsection{Soft universality and the Einstein form}
\label{sec:einstein-gate}

Assume now that an infrared scattering framework exists and that the same
positive-norm helicity-two pole factorises in amplitudes involving every
matter species. Lorentz invariance and decoupling of spurious
polarisations imply the leading soft identity. Weinberg's argument then
forces the coupling to be proportional to each external four-momentum and
hence universal~\cite{Weinberg1965}. This supplies universality at leading soft order. Its extension to the
local low-energy theory uses the consistency assumptions below.

To infer an action, assume in addition a local derivative expansion,
ghost-free propagation, a single interacting massless spin-2 field, and a
consistent deformation of its linear gauge symmetry. Under the usual
two-derivative hypotheses, the nonlinear completion is Einstein gravity
~\cite{Deser1970,Boulanger2000}. The low-energy action may therefore be
written
\begin{equation}
 S_{\rm IR}
 =\frac1{16\pi G_N}\int\dd^4x\sqrt{-g}\,(R-2\Lambda)
 +S_{\rm matter}[g,\Psi]
 +\sum_i c_i\mathcal I_i^{(4+)},
 \label{eq:conditional-einstein-action}
\end{equation}
where $\mathcal I_i^{(4+)}$ are local higher-derivative operators after
additional massive fields have been integrated out. Light matter fields
are retained in $S_{\rm matter}$. The pole fixes neither $\Lambda$ nor the coefficients
$c_i$. Expansion around exact Minkowski space requires the renormalised
tadpole, including $\Lambda$, to vanish; that is a background condition, not
a prediction of the spectral criterion.

\begin{corollary}[Conditional Einstein interpretation]
\label{cor:conditional-einstein}
A positive TT atom in an independently matched response outside the excluded
baseline class leads to the Einstein form
Eq.~\eqref{eq:conditional-einstein-action} provided the helicity and tensor
structure conditions, mixed-amplitude factorisation, universal soft
coupling, a demonstrated failure of an applicable Weinberg--Witten
hypothesis, and the local two-derivative consistency assumptions all hold. These assumptions relate the spectral response of
Proposition~\ref{prop:long-range-tt-response} to its local gravitational
completion.
\end{corollary}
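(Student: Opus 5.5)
The corollary is a chaining statement, so I will prove it by composing results already established, each supplying one link from the positive TT atom to Eq.~\eqref{eq:conditional-einstein-action}. First, Proposition~\ref{prop:long-range-tt-response}, with the hypotheses of Proposition~\ref{prop:tt-atom-extraction} and the independent matching of Theorem~\ref{thm:tt-fredholm-spectral-bridge}, turns the atom $Z_0>0$ into a massless pole $Z_0/Q^2$ in $C_2^{\rm nl}$. This pole gives the power-law tail Eq.~\eqref{eq:four-dimensional-massless-green} for physical TT sources. Restricting to a response outside the baseline class of Theorem~\ref{thm:null-shell-exclusion} matters here: for the complete covariant stress tensor, $Z_0=0$ by Eq.~\eqref{eq:baseline-matrix-atom-zero} and the premise would be empty.

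Second, I would use the helicity and tensor-structure conditions of Section~\ref{sec:helicity-gate} to identify the atom, in the Lorentzian continuation, with a positive-norm massless representation of helicities $\pm2$. This step must also exclude any extra massless scalar residue, so that the numerator between generic conserved sources is Eq.~\eqref{eq:graviton-conserved-source-numerator} rather than the $-1/3$ rank-five projector. Mixed-amplitude factorisation, Eq.~\eqref{eq:mixed-amplitude-factorisation}, then promotes the pole in $\langle TT\rangle$ to a pole in matter amplitudes with positive residue. Fixing the normalisation Eq.~\eqref{eq:physical-graviton-coupling} defines $G_N$ through Eq.~\eqref{eq:newton-from-mixed-residue}, and this combination is invariant under rescaling of the interpolating field.

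Third, the particle interpretation needs the Weinberg--Witten step. By Theorem~\ref{thm:ww-obstruction} and Corollary~\ref{cor:ww-baseline-branch}, an ordinary helicity-two particle is excluded unless one of that theorem's hypotheses fails. The assumed demonstrated failure of an applicable hypothesis removes the obstruction, so asymptotic helicity-two states are permitted. Weinberg's soft theorem~\cite{Weinberg1965}, applied under the assumed infrared scattering framework and universal factorisation, then forces the couplings to be proportional to four-momentum and hence universal. Finally, the local two-derivative, ghost-free, single-field consistent-deformation hypotheses invoke the uniqueness results~\cite{Deser1970,Boulanger2000}: the nonlinear completion is Einstein gravity coupled to $S_{\rm matter}$. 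Higher-derivative terms $\mathcal I_i^{(4+)}$ and $\Lambda$ remain unconstrained, with $\Lambda$ fixed only by the Minkowski tadpole condition.

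I expect the main obstacle to be the second step, the passage from the Euclidean scalar atom to a genuine helicity-two representation with the Einstein numerator. Theorem~\ref{thm:matrix-tt-wightman-measure} and Proposition~\ref{prop:scalar-tt-boundary} only control a positive scalar measure on the TT quotient, and the rank-five projector may not be evaluated at $p^2=0$. I would therefore take the helicity condition literally as a hypothesis on the Lorentzian null-shell measure of the independently constructed response. I would then verify compatibility with the TT contraction only on traceless sources, where the two numerators agree, and leave the trace sector to the stated Ward-identity and scalar-sector assumptions.
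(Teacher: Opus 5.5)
Your proposal is correct and follows essentially the same route as the paper, whose argument is the chain assembled in Sections~\ref{sec:atom-long-range}--\ref{sec:einstein-gate}: the long-range response of Proposition~\ref{prop:long-range-tt-response}, the helicity/tensor-structure and mixed-factorisation conditions, the Weinberg--Witten obstruction lifted by a failed hypothesis, Weinberg's soft argument for universality, and the two-derivative uniqueness results \cite{Deser1970,Boulanger2000}, with $\Lambda$ and the $c_i$ left unfixed. One small precision: a failed Weinberg--Witten hypothesis only removes the exclusion and does not by itself supply asymptotic helicity-two states, so the infrared scattering framework must be assumed separately before the soft theorem is applied, as you in fact do.
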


\section{Conclusions}
\label{sec:numerical-protocol}

\label{sec:conclusions}

We have related the long-range part of a sourced tensor response to
its spectral weight and matter vertex. At an isolated critical space,
the propagation slope and source projections determine the residue.
A positive spectral measure extends the extraction to continuous
thresholds. Channel elimination preserves the response when the source
maps and direct terms are reduced together; the approximation estimates
then control the displacement of the critical root and its residue.

For isolated stationary matter, conservation fixes the leading source
by the total energy. Combined with the reduced response, this gives the
leading mixed residue and, under the stated spectral bounds, the
potential and force tails. A gravitational particle interpretation
further depends on helicity, tensor structure and the consistency of
the coupling. These conditions concern the full matter exchange.

The compact spectral calculations give explicit internal endpoints,
source derivatives and tensor overlaps. The matrix and finite-volume
examples illustrate the residue estimates and the extraction of
concentrated spectral weight. Applications to Euclidean and functional
calculations are described in Sections~\ref{sec:lattice-test} and
\ref{sec:frg-kernel-test}. For the covariant stress-tensor realisation,
the null-shell classification gives a longitudinal measure and hence
zero TT atomic weight. Together, these results specify how a sourced
spectral calculation can establish a long-range response and determine
its coupling to matter.

\appendix
\section{Gaussian averaging and spectral atoms}
\label{app:gaussian}

This appendix records the analytic facts about Gaussian averaging used in
the main text. They concern a change of resolution of an already defined
correlator, with the functional flow treated separately in Appendix~\ref{app:composite-frg}.

\subsection{Characteristic-semigroup proof in arbitrary dimension}

Let $\varphi_\tau(q)=\int e^{\iu q\cdot x}\,\dd\mu_\tau(x)$. The convolution
law gives
\begin{equation}
 \varphi_{\tau_1+\tau_2}(q)
 =\varphi_{\tau_1}(q)\varphi_{\tau_2}(q),
 \qquad \varphi_0(q)=1.
 \label{eq:app-characteristic-semigroup}
\end{equation}
Orthogonal invariance, in particular reflection, makes $\varphi_\tau$ real.
For every integer $n$,
$\varphi_\tau(q)=\varphi_{\tau/(2n)}(q)^{2n}\geq0$. It cannot vanish at a
finite value of $\tau$: otherwise the semigroup law would give zeros
arbitrarily close to $\tau=0$, contrary to weak continuity. Hence
\begin{equation}
 \varphi_\tau(q)=e^{-\tau\psi(q)}
 \label{eq:app-characteristic-exponent}
\end{equation}
with a continuous nonnegative $\psi$.

Square-root self-similarity also gives
$\varphi_\tau(q)=\varphi_1(\sqrt\tau q)$, so
$\psi(rq)=r^2\psi(q)$ for $r>0$. Orthogonal invariance implies that $\psi$
depends only on $|q|$, and quadratic homogeneity then yields
$\psi(q)=D|q|^2$. The covariance is
\begin{equation}
 \int x_i x_j\,\dd\mu_\tau(x)=2D\tau\,\delta_{ij},
 \label{eq:app-gaussian-covariance}
\end{equation}
which fixes $D$ once the scale convention is chosen. For $\tau>0$, Fourier inversion gives
\begin{equation}
 \dd\mu_\tau(x)
 =\frac{\dd^dx}{(4\pi D\tau)^{d/2}}
  \exp\left[-\frac{|x|^2}{4D\tau}\right].
 \label{eq:app-gaussian-density}
\end{equation}
This completes the detailed proof of
Proposition~\ref{prop:gaussian-uniqueness}.

\subsection{Injectivity and stability of inversion}

On $L^2(\mathbb R^d)$, $\cH_\tau$ is unitarily equivalent under Fourier
transform to multiplication by $e^{-\tau Q^2}$. This multiplier is strictly
positive for finite $\tau$, hence
\begin{equation}
 \Ker\cH_\tau=\{0\}.
 \label{eq:app-heat-injective}
\end{equation}
The range is infinite dimensional. For $\tau>0$, its formal inverse has
multiplier $e^{\tau Q^2}$ and is unbounded: a sequence supported on annuli
$n<|Q|<n+1$ is amplified by at least $e^{\tau n^2}$. The same injectivity
holds on tempered distributions because multiplication by the smooth,
nowhere-vanishing Gaussian can be inverted locally and tested on compact
Fourier support.

Thus positive-time averaging retains exact injectivity while making
high-momentum reconstruction unstable at finite precision. At $\tau=0$
the map and its inverse are the identity.

\subsection{Meromorphic germs and contact terms}

Let $F(z)$ be meromorphic near $z=0$ and let $w(z)$ be analytic there with
$w(0)=1$. If
\begin{equation}
 F(z)=\sum_{j=-m}^{\infty}a_jz^j,
 \qquad a_{-m}\neq0,
 \label{eq:app-laurent-form-factor}
\end{equation}
then $wF$ has the same pole order and leading coefficient. In particular,
for $m=1$, its residue remains $a_{-1}$. If $F$ is regular, so is $wF$.
Taking $w(z)=e^{-\tau z}$ proves
Proposition~\ref{prop:gaussian-atom-invariance} independently of any spectral
representation.

Multiplication turns the contact polynomial into the entire function
$e^{-\tau Q^2}P_2(Q^2)$, which is generally nonpolynomial. Contact
subtraction therefore precedes external averaging. The multiplier also
need not preserve the Stieltjes class, so the positive measure in the
criterion is that of the unfiltered response.

\section{The TT quotient, contact terms, and spectral measure}
\label{app:tt-spectral}

This appendix gives the operator-level details behind
Proposition~\ref{prop:tt-representative-independence} and the dispersion
formula~\eqref{eq:subtracted-tt-dispersion}, including the construction used
in Theorem~\ref{thm:matrix-tt-wightman-measure}.

\subsection{Renormalisation with a metric source}

Multiple stress-tensor insertions are renormalised by allowing local
counterterms in the background source. In four dimensions their purely
geometric part has the schematic form
\begin{equation}
 S_{\rm ct}[g]
 =\int\dd^4x\sqrt g\,
 \bigl(c_0+c_1R+c_2R^2+c_3R_{\mu\nu}R^{\mu\nu}
       +c_4R_{\mu\nu\rho\sigma}R^{\mu\nu\rho\sigma}+\cdots\bigr).
 \label{eq:metric-source-counterterms}
\end{equation}
Two metric derivatives evaluated at $g_{\mu\nu}=\delta_{\mu\nu}$ give
tensor polynomials in $Q$. They contribute to $P_2(Q^2)$ but have no
discontinuity at $s>0$ and no $1/Q^2$ term. Matter-dependent local source
counterterms behave in the same way after all operator mixing has been
renormalised.

If diffeomorphism invariance of the source functional is anomaly free, its
Ward identity fixes the nonlocal correlator to be transverse. A trace anomaly
can contribute to the trace channel and to local terms; it does not turn a
nonnegative TT continuum into an atom. The conservation assumption excludes an uncancelled
diffeomorphism anomaly.

For operators with the same exact quantum numbers, write
\begin{equation}
 \cO_{R,i}=Z_{ij}\cO_{B,j},
 \qquad h_{B,i}=Z_{ji}h_{R,j}.
 \label{eq:app-operator-mixing}
\end{equation}
The physical space is obtained only after the mixing problem has been closed.
BRST-exact insertions vanish between physical cohomology classes. EOM
insertions are the renormalised field-redefinition operators
$\mathcal E^R_F$ of Eq.~\eqref{eq:stress-representative-change}, including
their local Jacobian subtractions; their contact identities hold at
separated points. At exceptional
kinematics the quotient is formed first at nonexceptional momentum and the
limit is taken only after renormalisation; this avoids using an infrared
singularity to justify an algebraic decoupling
~\cite{CollinsScalise1994,JoglekarLee1976,Barnich2000}.

\subsection{Improvements on the TT source space}

For the scalar improvement
\begin{equation}
 I_{\mu\nu}[X]
 =(\partial_\mu\partial_\nu-\eta_{\mu\nu}\Box)X,
 \label{eq:app-scalar-improvement}
\end{equation}
its momentum-space contraction with a TT source is
\begin{equation}
 J^{\mu\nu}(-p)I_{\mu\nu}[X](p)
 =[-J^{\mu\nu}p_\mu p_\nu+p^2J^\mu{}_{\mu}]X(p)=0.
 \label{eq:app-improvement-contraction}
\end{equation}
This identity is valid before $p^2\to0$ and therefore survives the massless
limit. It also shows why the TT statement is stronger than a claim about a
particular scalar form factor: improvements can reorganise trace and contact
pieces even though they vanish on the physical TT source quotient.

Under a finite nonsingular change of physical operator basis
$\cO'=M\cO$, Wightman matrices and atom residues transform by congruence,
\begin{equation}
 \dd\rho'_2=M\,\dd\rho_2\,M^\dagger,
 \qquad Z'_0=MZ_0M^\dagger.
 \label{eq:app-residue-congruence}
\end{equation}
Thus positivity and rank are invariant. A singular transformation or a
finite truncation that discards part of $M\cO$ does not have this guarantee.

\subsection{Spectral resolution on physical sources}

Let $E(\dd p)$ be the joint projection-valued measure of translations. For
a finite family of TT sources $J_i$, define
\begin{equation}
 \dd M_{ij}(p)
 :=\langle\Omega|\Theta(J_i)^\dagger E(\dd p)
                    \Theta(J_j)|\Omega\rangle.
 \label{eq:app-translation-spectral-measure}
\end{equation}
For every $c_i\in\mathbb C$ and nonnegative Borel function $f$,
\begin{equation}
 \sum_{i,j}\bar c_i\int f(p)\,\dd M_{ij}(p)c_j\geq0.
 \label{eq:app-matrix-wightman-positivity}
\end{equation}
The spectral condition restricts the support to the closed forward cone.
This matrix is the source-smeared image of one common tensor-valued tempered
measure; it is not itself the source-independent scalar spectral measure.
To obtain the latter, restrict first to the open timelike cone. The map
$p\mapsto s=p^2$ decomposes that cone into the Lorentz orbits $H_s^+$.
After the orbit measure $\dd\Omega_s$ is fixed, uniqueness of disintegration
leaves a matrix coefficient transforming under the stabiliser $SO(3)$.
Conservation removes time components in the rest frame, and
\begin{equation}
 \operatorname{Sym}^2(\mathbb R^3)
 =\operatorname{Sym}^2_0(\mathbb R^3)\oplus\mathbb R.
 \label{eq:app-little-group-decomposition}
\end{equation}
The two summands are inequivalent irreducible representations. Schur's
lemma therefore makes the measure scalar on each block and forbids a mixed
block. These two positive coefficients are precisely the common measures
$\dd\rho_2$ and $\dd\rho_0$ in
Eq.~\eqref{eq:body-covariant-spin-disintegration}. Their uniqueness is
relative to the chosen normalisation of $\dd\Omega_s$, not to a
choice of source.

For a family of TT sources, the spin-zero block vanishes and the remaining
matrix pushforward is
$\dd\rho^J_{ab}(s)=q^{(2)}_{ab}(s)\dd\rho_2(s)$ on $(0,\infty)$, as in
Eq.~\eqref{eq:body-source-factorisation}. Changing the test family therefore changes $q^{(2)}$ while preserving
the common scalar measure. Polarisation of diagonal quadratic forms recovers
all off-diagonal entries.

The boundary contribution requires a separate orbit calculation,
given below. For the complete baseline tensor it is longitudinal and
vanishes on conserved sources. Positivity and local finiteness of its
$00,00$ component near any nonzero null momentum yield
Eq.~\eqref{eq:physical-tensor-ir-integrability}. Thus assigning zero mass
to the endpoint extends $\rho_2$ to a positive locally finite measure.

To reconstruct the time-ordered distribution, continue each timelike
shell using the polynomial numerator built from
$\eta_{\mu\nu}-p_\mu p_\nu/s$ at fixed $s>0$. Two polynomial extensions
that agree on the shell differ by a factor $p^2-s$, which cancels the
propagator and leaves local tensor terms. The infrared inverse moment
above controls the $s^{-2}$ coefficients; UV divergences are removed by
the fixed local subtractions. External TT contraction of this
reconstruction kills the spin-zero and null-shell pieces and
selects the timelike spin-two coefficient. A possible origin measure is
removed on traceless test sources before any momentum projection.
After Wick rotation this gives
\begin{equation}
 C_2(Q^2)=P_2(Q^2)
 +(Q_*^2-Q^2)^{n_{\rm sub}}
 \int_0^\infty
 \frac{\dd\rho_2(s)}{(s+Q^2)(s+Q_*^2)^{n_{\rm sub}}}.
 \label{eq:app-tt-dispersion-repeated}
\end{equation}
The discontinuity of the nonlocal Lorentzian boundary value recovers
$\dd\rho_2$. By contrast, the commutator contains the positive-energy
measure minus its negative-energy reflection. This explains simultaneously
why the commutator is appropriate for retarded response and why its full
Fourier transform is not the positive object used in the atom criterion.

\subsection{The little group on the null shell}
\label{sec:app-null-shell}

Here the complete tensor measure is positive before restriction to the TT
source space. A covariant measure on the homogeneous orbit $H_0^+$ has a
covariant matrix density relative to $\dd\Omega_0$. To see the regularity,
average its Lorentz translates with a smooth compactly supported Haar
weight of integral one, compensating each translate by the finite tensor
representation. Covariance makes the average equal to the original
measure. Group convolution smooths a distributional section on a
homogeneous orbit, so the original measure already has a smooth density.
This argument concerns one orbit; it does not assert a density in the
timelike mass variable $s$, which may have atoms.

At $p=\ell$, write this density as $H=VV^\dagger$ with $V$ full column
rank. Its stabiliser obeys $D(g)VV^\dagger D(g)^\dagger=VV^\dagger$.
The ranges agree, hence $D(g)V=VU(g)$; multiplying by a left inverse of
$V$ shows $U(g)U(g)^\dagger=I$. For null rotations the tensor matrix
$D(g)$ is unipotent. Its restriction to this finite positive range is
also unipotent and unitary, and is therefore the identity. This argument
uses the finite rank of the tensor density, not a bound on the number of
massless states in the full theory.

Using the tetrad of Theorem~\ref{thm:null-shell-exclusion}, expand a
symmetric contravariant tensor as
\begin{align}
 u={}&A\ell\ell+Bnn+C(\ell n+n\ell)
       +\sum_i D_i(\ell e_i+e_i\ell)\notag\\
    &+\sum_i E_i(ne_i+e_i n)+\sum_{i,j}F_{ij}e_i e_j,
       \qquad F_{ij}=F_{ji}.
 \label{eq:app-null-tensor-expansion}
\end{align}
Apply $N_k\ell=0$, $N_kn=e_k$, $N_ke_i=\delta_{ki}\ell$ to both indices.
The coefficients of $ne_k$, $e_ke_i$, $\ell\ell$, and $\ell e_i+e_i\ell$
in $N_ku=0$ give, respectively,
\begin{equation}
 B=0,\qquad E_i=0,\qquad D_i=0,\qquad F_{ij}=-C\delta_{ij}.
 \label{eq:app-null-fixed-tensors}
\end{equation}
Thus $u=A\ell\ell+C\eta$, since
$\eta=\ell n+n\ell-\sum_i e_i e_i$. Conservation gives
$\ell_\mu u^{\mu\nu}=C\ell^\nu=0$. Every column of $V$ is therefore
proportional to $\ell\ell$, proving Eq.~\eqref{eq:actual-null-shell-measure}.
This direct positive-measure argument is consistent with the
ordinary-particle result for symmetric covariant tensors in
ref.~\cite{Weinberg2020Massless}; it does not import an asymptotic-particle
assumption into the spectral theorem.

The measure supported on $H_0^+$ is independent of the limiting
massive source factors $q^{(2)}(s)$. Its direct calculation above gives
zero TT weight, in agreement with the scalar extension. For a response
defined only on a source quotient, the corresponding conclusion
depends on its own positivity and transformation law.

\subsection{Infrared limit as a quadratic-form statement}

For several physical operators, write $\mathrm d\boldsymbol\rho_2$ and
$\boldsymbol C_2$ for the matrix-valued invariant measure and response.
These operator indices differ from the test-source indices of
Eq.~\eqref{eq:app-translation-spectral-measure}. The atom extraction is first established on
each quadratic form $v^\dagger\boldsymbol C_2v$. With
\begin{equation}
 \dd\nu_v(s)
 =\frac{v^\dagger\dd\boldsymbol\rho_{2,c}(s)v}{(s+Q_*^2)^{n_{\rm sub}}},
 \label{eq:app-quadratic-continuum-measure}
\end{equation}
the proof of Proposition~\ref{prop:tt-atom-extraction} applies verbatim.
Polarisation of the resulting quadratic-form limit gives the matrix limit.
This order is important: positivity is a property of the complete physical
matrix, not necessarily of every scalar coefficient written in an arbitrary
non-Hermitian operator basis.

Local integrability at the origin is essential to the
dominated-convergence argument. Responses lacking it require a
different infrared analysis, for example a Tauberian or
limiting-absorption estimate. For the covariant tensor considered here,
Eq.~\eqref{eq:physical-tensor-ir-integrability} supplies the required
integrability.

\section{Composite-source flow and connected response}
\label{app:composite-frg}

We derive the two identities used in
Section~\ref{sec:background-source-frg}. The derivation also fixes the sign
convention relating the source Hessians of $W_k$ and $\Gamma_k$.

\subsection{Source derivatives of the Wetterich equation}

Condense the symmetric tensor source and its spacetime argument into an
index $i$, and define
\begin{equation}
 A_i:=\frac{\delta\Gamma_k^{(2,0)}}{\delta h_i}
      =\Gamma_{k,i}^{(2,1)},
 \qquad
 A_{ij}:=\frac{\delta^2\Gamma_k^{(2,0)}}
                 {\delta h_i\delta h_j}
      =\Gamma_{k,ij}^{(2,2)}.
 \label{eq:app-source-vertices}
\end{equation}
For a source-independent regulator,
\begin{equation}
 \frac{\delta G_k}{\delta h_i}=-G_kA_iG_k.
 \label{eq:app-source-propagator-derivative}
\end{equation}
One source derivative of Eq.~\eqref{eq:paper1-wetterich} gives
\begin{equation}
 \partial_t\Gamma_{k,i}^{(0,1)}
 =-\frac12\STr[G_kA_iG_k\partial_tR_k].
 \label{eq:app-one-source-flow}
\end{equation}
Differentiating again and using
Eq.~\eqref{eq:app-source-propagator-derivative} gives
\begin{align}
 \partial_t\Gamma_{k,ij}^{(0,2)}
 =\frac12\STr\bigl[&G_kA_jG_kA_iG_k\partial_tR_k
 +G_kA_iG_kA_jG_k\partial_tR_k
 \notag\\[-1mm]
 &-G_kA_{ij}G_k\partial_tR_k\bigr],
 \label{eq:app-two-source-flow}
\end{align}
which is Eq.~\eqref{eq:paper1-composite-two-source-flow}. Superfield signs
are contained in the supertrace and in the ordering of the vertices. The
terms involving $A_iA_j$ describe two single insertions; the $A_{ij}$ term
contains the renormalised coincident double insertion. Omitting it generally
spoils both the contact structure and the Ward identity.

Equation~\eqref{eq:app-two-source-flow} generates an infinite hierarchy.
For example, the flows of $A_i$ and $A_{ij}$ contain higher fundamental-field
vertices. A field expansion, momentum approximation, operator-mixing basis and
projection define a finite closure and hence the approximate response
whose convergence is studied
in Appendix~\ref{app:fredholm}.

\subsection{Legendre-transform Schur identity}

Write
\begin{equation}
 \Gamma_k[\varphi,h]
 =j\cdot\varphi-W_k[j,h]-\Delta S_k[\varphi],
 \qquad \varphi=W_{k,j}.
 \label{eq:app-legendre-transform}
\end{equation}
At fixed $\varphi$,
\begin{equation}
 \Gamma_{k,h}=-W_{k,h},
 \qquad
 W_{k,jj}=G_k=(\Gamma_{k,\varphi\varphi}+R_k)^{-1}.
 \label{eq:app-first-source-legendre}
\end{equation}
Differentiating $\varphi=W_{k,j}$ at fixed $\varphi$ gives
\begin{equation}
 \left.\frac{\delta j}{\delta h}\right|_\varphi
 =-W_{k,jj}^{-1}W_{k,jh}.
 \label{eq:app-J-source-response}
\end{equation}
Using this in the second derivative of
Eq.~\eqref{eq:app-first-source-legendre} yields
\begin{equation}
 \Gamma_{k,hh}
 =-W_{k,hh}+W_{k,hj}W_{k,jj}^{-1}W_{k,jh}.
 \label{eq:app-source-hessian-intermediate}
\end{equation}
Moreover,
\begin{equation}
 \Gamma_{k,h\varphi}=-W_{k,hj}W_{k,jj}^{-1},
 \qquad
 \Gamma_{k,\varphi h}=-W_{k,jj}^{-1}W_{k,jh}.
 \label{eq:app-mixed-hessians}
\end{equation}
Combining the last two equations gives
\begin{equation}
 W_{k,hh}
 =-\Gamma_{k,hh}
  +\Gamma_{k,h\varphi}G_k\Gamma_{k,\varphi h},
 \label{eq:app-source-schur-final}
\end{equation}
as stated in Eq.~\eqref{eq:source-schur-identity}. The second term accounts
for propagation through the fundamental-field sector. Vanishing mixed
source--field Hessians suffice to remove it; the converse requires a
strictly positive propagator on the coupled sector.

\subsection{Resolvent representation of the source response}

At vanishing fundamental source, the effective average field satisfies
\begin{equation}
 \mathcal E_k(\varphi,h)
 =\Gamma_{k,\varphi}+R_k\varphi=0.
 \label{eq:app-stationarity-equation}
\end{equation}
Choose an analytic invertible preconditioner $\mathcal A_{0,k}$ and rewrite
this equation as the fixed-point problem
\begin{equation}
 \varphi=\mathfrak F_k(\varphi;h),
 \qquad
 \mathfrak F_k
 =\varphi-\mathcal A_{0,k}^{-1}\mathcal E_k.
 \label{eq:app-preconditioned-fixed-point}
\end{equation}
Linearising in a tensor-source variation gives the tuple in
Eq.~\eqref{eq:canonical-response-tuple}. In particular,
\begin{align}
 (I-\mathcal B_k)^{-1}\mathsf S_k
 &=-\mathcal A_k^{-1}\Gamma_{k,\varphi h},
 \notag\\
 \mathsf R_k(I-\mathcal B_k)^{-1}\mathsf S_k
 &=\Gamma_{k,h\varphi}G_k\Gamma_{k,\varphi h}.
 \label{eq:app-preconditioner-cancellation}
\end{align}
Thus changing $\mathcal A_{0,k}$ changes the kernel coordinates and its
conditioning, but not the complete sourced response. Compactness and the numerical error estimates in
Section~\ref{sec:atom-kernel} depend on this choice of representation.

For an enlarged exact response coordinate $Y$, apply the same construction
to the full system $\mathcal E_k(Y,h)=0$ and the one-point readout
$\mathcal O_k(Y,h)$. The chain rule gives
\begin{equation}
 \frac{\dd\mathcal O_k}{\dd h}
 =\left.\partial_h\mathcal O_k\right|_Y
 +D_Y\mathcal O_k
  [I-\mathcal B_k]^{-1}
  \bigl(-\mathcal A_{0,k}^{-1}D_h\mathcal E_k\bigr),
 \label{eq:app-extended-response-resolvent}
\end{equation}
which is the abstract direct--resolvent factorisation used in
Eq.~\eqref{eq:tt-response-resolvent}. It equals $W_{k,hh}$ only when the
equations and readout are the exact ones derived from the same source
functional. A finite projected hierarchy satisfies
Eq.~\eqref{eq:app-extended-response-resolvent} internally, but its equality
to the exact correlator depends on convergence of the closure.

\subsection{Operator mixing and TT projection}

For a renormalised operator block $\cO_{R,i}=Z_{ij}\cO_{B,j}$, the sources
transform contragrediently. Differentiating the functional flow therefore
produces a matrix response. A TT projector acting on the external Lorentz
indices does not by itself close the internal operator block. The block must
include every operator that mixes with the source insertion, together with
BRST and EOM representatives needed to determine the counterterms. Only
after renormalisation is the physical quotient taken.

Let the complete response variables be $X=(X_P,X_{\overline P})$, where
$P$ is the physical TT block and $\overline P=I-P$ its complement.
Linearisation gives
\begin{equation}
 \begin{pmatrix}\delta X_P\\ \delta X_{\overline P}\end{pmatrix}
 =
 \begin{pmatrix}
  \cB_{PP}&\cB_{P\overline P}\\
  \cB_{\overline PP}&\cB_{\overline P\overline P}
 \end{pmatrix}
 \begin{pmatrix}\delta X_P\\ \delta X_{\overline P}\end{pmatrix}
 +\begin{pmatrix}\mathsf S_P\\ \mathsf S_{\overline P}\end{pmatrix}\delta J.
 \label{eq:app-response-block-system}
\end{equation}
Write the full measured response as
$\delta\mathcal O=C_{\mathrm{dir}}\delta J+\mathsf R_P\delta X_P
+\mathsf R_{\overline P}\delta X_{\overline P}$.
Assume $D:=I_{\overline P}-\cB_{\overline P\overline P}$ is invertible.
Together with $\cB_{\mathrm{eff}}$ in Eq.~\eqref{eq:tt-feshbach-kernel},
the reduced source, readout, and direct term are
\begin{align}
 \mathsf S_{\mathrm{eff}}
 &:=\mathsf S_P+\cB_{P\overline P}D^{-1}\mathsf S_{\overline P},
 \notag\\
 \mathsf R_{\mathrm{eff}}
 &:=\mathsf R_P+\mathsf R_{\overline P}D^{-1}\cB_{\overline PP},
 \notag\\
 C_{\mathrm{dir,eff}}
 &:=C_{\mathrm{dir}}+\mathsf R_{\overline P}D^{-1}\mathsf S_{\overline P}.
 \label{eq:app-complete-feshbach-tuple}
\end{align}
No vanishing complementary source or readout is assumed. Wherever the
remaining response inverse exists, the full correlator is
\begin{align}
 C
 &=C_{\mathrm{dir}}+\mathsf R(I-\cB)^{-1}\mathsf S
 \notag\\
 &=C_{\mathrm{dir,eff}}
   +\mathsf R_{\mathrm{eff}}
    (I_P-\cB_{\mathrm{eff}})^{-1}\mathsf S_{\mathrm{eff}}.
 \label{eq:app-complete-reduced-response}
\end{align}
To verify the identities, solve the second block equation as
$\delta X_{\overline P}=D^{-1}\cB_{\overline PP}\delta X_P
+D^{-1}\mathsf S_{\overline P}\delta J$ and substitute it into the
first to obtain
$(I_P-\cB_{\mathrm{eff}})\delta X_P=\mathsf S_{\mathrm{eff}}\delta J$.
Substitution into $\delta\mathcal O$ gives the reduced readout and direct
term in Eq.~\eqref{eq:app-complete-feshbach-tuple}, and hence
Eq.~\eqref{eq:app-complete-reduced-response}. If the original maps are
analytic and $D$ remains invertible, all four reduced maps are analytic.
The critical continuation used in this paper invokes the separate Fredholm
hypotheses of Section~\ref{sec:atom-kernel} for the remaining inverse.

\subsection{The reduced operator and its propagation derivative}

For the linearised equations, it is also useful to eliminate
blocks before introducing a preconditioner. Write
\begin{equation}
 L=\begin{pmatrix}A&B\\ C&D\end{pmatrix},\qquad
 L\begin{pmatrix}x_P\\x_Q\end{pmatrix}
 =\begin{pmatrix}S_P\\S_Q\end{pmatrix}h,\qquad
 \mathcal O=C_{\rm dir}h+R_Px_P+R_Qx_Q.
 \label{eq:native-schur-blocks}
\end{equation}
If $D$ is invertible, direct substitution gives
\begin{align}
 L_{\rm eff}&=A-BD^{-1}C,&
 S_{\rm eff}&=S_P-BD^{-1}S_Q,\notag\\
 R_{\rm eff}&=R_P-R_QD^{-1}C,&
 C_{\rm dir,eff}&=C_{\rm dir}+R_QD^{-1}S_Q.
 \label{eq:native-schur-tuple}
\end{align}
In particular the signs differ from the iteration-kernel entries in
Eq.~\eqref{eq:app-complete-feshbach-tuple} because the off-diagonal
blocks of $I-\mathcal B$ are negative. Both conventions give the same
complete response. Differentiation on an analytic invertibility domain
gives the full propagation derivative
\begin{equation}
 L_{\rm eff}'=A'-B'D^{-1}C
       +BD^{-1}D'D^{-1}C-BD^{-1}C'.
 \label{eq:native-schur-slope}
\end{equation}
This follows from $(D^{-1})'=-D^{-1}D'D^{-1}$. Source and readout
derivatives likewise include the complementary terms in
Eq.~\eqref{eq:native-schur-tuple}; only retaining $A'$ changes the
propagation slope whenever the eliminated channels vary.

\subsection{Endpoint and analytic continuation}

In this auxiliary FRG representation, $k$ regulates the construction of
the endpoint functional. Its unregulated physical spectral statement
requires regulator removal, renormalisation, and the $k\to0$ limit.
This requirement does not remove the native Gaussian spectral weight
or the physical finite correlation endpoint. On the Euclidean axis the endpoint
form factor is projected at $Q^2>0$. Its continuation uses
\begin{equation}
 z=Q^2\quad\longrightarrow\quad z=-p^2-\iu0,
 \label{eq:app-continuation-convention}
\end{equation}
and the discontinuity of the nonlocal boundary value is compared with the
Wightman measure. Continuing a finite set of noisy Euclidean data is an
inverse problem; the resulting uncertainty is separate from regulator and
truncation errors. A pole seen only at an intermediate value of $k$ is not an
endpoint particle state.

\section{Fredholm pole, Schur reduction, and truncation control}
\label{app:fredholm}

We collect the operator-theoretic steps behind
Theorem~\ref{thm:tt-kernel-certificate}. The argument applies to the
specific TT response space and source--readout maps defined in the main
text; compactness and analyticity are hypotheses to be verified in each
physical implementation. Analytic spectral perturbation theory is treated
in ref.~\cite[Ch.~VII]{Kato1995}; the response residue is derived below.

\subsection{Laurent expansion at a simple eigenvalue}

Let $\cB(z)$ be analytic and compact on a neighbourhood of the origin, and
suppose $\lambda(z)$ is an isolated algebraically simple eigenvalue with
$\lambda(0)=1$. Choose analytic right and left eigenvectors satisfying
\begin{equation}
 \cB(z)|r(z)\rangle=\lambda(z)|r(z)\rangle,
 \qquad
 \langle\ell(z)|\cB(z)=\lambda(z)\langle\ell(z)|,
 \qquad
 \langle\ell(z)|r(z)\rangle=1.
 \label{eq:app-left-right-eigenvectors}
\end{equation}
The Riesz projector is
\begin{equation}
 \Pi(z)=|r(z)\rangle\langle\ell(z)|
 =\frac1{2\pi\iu}\oint_\gamma
  (\zeta-\cB(z))^{-1}\,\dd\zeta,
 \label{eq:app-riesz-projector}
\end{equation}
where $\gamma$ encloses no other spectrum. Analytic Fredholm theory gives
\begin{equation}
 [I-\cB(z)]^{-1}
 =\frac{\Pi(z)}{1-\lambda(z)}+\mathcal R(z),
 \label{eq:app-resolvent-decomposition}
\end{equation}
with $\mathcal R$ analytic near zero. If $\lambda'(0)\neq0$, then
\begin{equation}
 [I-\cB(z)]^{-1}
 =-\frac{|r(0)\rangle\langle\ell(0)|}
         {\lambda'(0)z}+\order{1}.
 \label{eq:app-simple-resolvent-pole}
\end{equation}
Insertion between $\mathsf R_{\TT}$ and $\mathsf S_{\TT}$ proves
Eqs.~\eqref{eq:tt-overlap-conditions} and
\eqref{eq:tt-algebraic-residue}. If either overlap vanishes, the operator
resolvent is singular but the measured TT correlator is regular in that
direction.

If $1\notin\spec\cB(0)$, continuity of the resolvent gives a neighbourhood
on which $I-\cB(z)$ is invertible. Hence no pole at the origin is generated
by this response family. This conclusion does not exclude a singularity
from a channel omitted from $\cB$.

\subsection{Schur--Feshbach identity}

With $P+\overline P=I$, write
\begin{equation}
 I-\cB=
 \begin{pmatrix}
 I_P-\cB_{PP}&-\cB_{P\overline P}\\
 -\cB_{\overline PP}&I_{\overline P}-\cB_{\overline P\overline P}
 \end{pmatrix}.
 \label{eq:app-block-resolvent}
\end{equation}
If $I_{\overline P}-\cB_{\overline P\overline P}$ is invertible, block Gaussian elimination gives
\begin{equation}
 P(I-\cB)^{-1}P
 =\bigl[I_P-\cB_{PP}
 -\cB_{P\overline P}(I_{\overline P}-\cB_{\overline P\overline P})^{-1}\cB_{\overline PP}\bigr]^{-1}.
 \label{eq:app-feshbach-resolvent}
\end{equation}
The bracket is $I_P-\cB_{\TT}^{\rm eff}$ with
Eq.~\eqref{eq:tt-feshbach-kernel}. Thus a complement can shift a critical
eigenvalue even when the external source and readout are purely TT.

\subsection{Existence and displacement of the coarse solution}

\begin{proposition}[Displacement of the coarse solution]
\label{prop:app-actual-branch-certificate}
Fix $z$, a transported reference point $\overline Y_*$, and the coarse
stationarity residual $H_\tau(Y)=Y-\mathfrak F_\tau(Y)$. Suppose
$H_\tau$ is continuously differentiable on a neighbourhood of the closed ball
$\overline B(\overline Y_*,r)$ in a Banach space. Let $A$ be a bounded
isomorphism on that space and suppose verified bounds satisfy
\begin{equation}
 \begin{gathered}
 \|A^{-1}H_\tau(\overline Y_*)\|\leq b,
 \qquad
 \sup_{Y\in\overline B(\overline Y_*,r)}
       \|I-A^{-1}D_YH_\tau(Y)\|\leq q<1,\\
 b+qr\leq r.
 \end{gathered}
 \label{eq:app-branch-contraction-bounds}
\end{equation}
Then there is a unique coarse solution in that ball and
\begin{equation}
 \|Y_{\tau,*}-\overline Y_*\|\leq\frac{b}{1-q}.
 \label{eq:app-certified-branch-displacement}
\end{equation}
If these hypotheses hold uniformly on a complex parameter domain with
analytic data, the resulting solution branch is analytic there. Alternatively,
the exact identity $H_\tau(\overline Y_*(z);z)=0$ already certifies the
transported branch, without requiring an invertible stationarity Jacobian.
\end{proposition}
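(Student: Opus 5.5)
The plan is the Newton--Kantorovich contraction argument, applied to the preconditioned map
\[
 \Phi(Y):=Y-A^{-1}H_\tau(Y)
\]
on the closed ball $\overline B:=\overline B(\overline Y_*,r)$. The fixed points of $\Phi$ are exactly the zeros of $H_\tau$, because $A$ is a bounded isomorphism. This is the step that makes the preconditioner harmless.

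First I show that $\Phi$ is a contraction on $\overline B$. Its derivative is $D\Phi(Y)=I-A^{-1}D_YH_\tau(Y)$, and this has norm at most $q$ on the ball by Eq.~\eqref{eq:app-branch-contraction-bounds}. The ball is convex and $H_\tau$ is $C^1$ on a neighbourhood of it. So the mean-value inequality, in its integral form along segments, gives
\[
 \|\Phi(Y_1)-\Phi(Y_2)\|\leq q\|Y_1-Y_2\| \qquad (Y_1,Y_2\in\overline B).
\]

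Second I check self-mapping. For $Y\in\overline B$,
\[
 \|\Phi(Y)-\overline Y_*\|\leq\|\Phi(Y)-\Phi(\overline Y_*)\|+\|\Phi(\overline Y_*)-\overline Y_*\|\leq qr+b\leq r,
\]
since $\Phi(\overline Y_*)-\overline Y_*=-A^{-1}H_\tau(\overline Y_*)$. The closed ball is complete, so Banach's fixed-point theorem gives a unique fixed point $Y_{\tau,*}\in\overline B$. This is the unique coarse solution in the ball.

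For the displacement bound I write
\[
 \|Y_{\tau,*}-\overline Y_*\|\leq\|\Phi(Y_{\tau,*})-\Phi(\overline Y_*)\|+\|\Phi(\overline Y_*)-\overline Y_*\|\leq q\|Y_{\tau,*}-\overline Y_*\|+b .
\]
Rearranging gives Eq.~\eqref{eq:app-certified-branch-displacement}.

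The analytic dependence is the step needing the most care, and I expect it to be the main obstacle. The bounds hold uniformly on a complex domain, and there the data $H_\tau(\cdot;z)$ and $A(z)$ are analytic; I would let $A$ depend analytically on $z$ or take it fixed. Both options give the same argument.

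My first approach uses Picard iterates. Starting from the analytic map $Y_0(z)=\overline Y_*(z)$, each iterate $Y_{n+1}(z)=\Phi_z(Y_n(z))$ is analytic, because it is a composition of analytic maps that stays inside the ball. The iterates converge uniformly with geometric rate $q^n b/(1-q)$. Locally uniform limits of Banach-valued holomorphic maps are holomorphic, so the limit branch is analytic.

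As a cross-check, $I-D\Phi=A^{-1}D_YH_\tau$ is invertible by the Neumann series, since $\|D\Phi\|\leq q<1$. The analytic implicit-function theorem then gives analyticity locally, and uniqueness in the ball glues the local branches together. If the centre $\overline Y_*$ moves with $z$, I must confirm that the ball hypotheses are uniform along the moving centre; this is where I would be careful.

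Finally, the alternative statement needs no argument. If $H_\tau(\overline Y_*(z);z)=0$, then $\overline Y_*(z)$ is itself a coarse solution by definition, by Eq.~\eqref{eq:paper1-transported-branch-residual}. No Jacobian inversion or contraction is used there, so it also applies at a critical point.
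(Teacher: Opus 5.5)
Your proposal is correct and matches the paper's proof essentially step for step. The paper likewise shows that $T(Y)=Y-A^{-1}H_\tau(Y)$ is a $q$-contraction that maps $\overline B(\overline Y_*,r)$ into itself because $b+qr\leq r$, and it obtains the displacement bound from $\delta_Y\leq b+q\delta_Y$. It also derives analyticity from uniformly convergent analytic fixed-point iterations and settles the exact-identity case by direct substitution, so your implicit-function cross-check is an extra the argument does not need.
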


\begin{proof}
The map $T(Y)=Y-A^{-1}H_\tau(Y)$ is $q$-Lipschitz on the ball.
The estimate
$\|T(Y)-\overline Y_*\|\leq b+q\|Y-\overline Y_*\|$
shows that it maps the ball into itself. The contraction theorem gives the
unique fixed point, and its displacement satisfies
$\delta_Y\leq b+q\delta_Y$. Uniformly convergent analytic fixed-point
iterations give analytic dependence on $z$ under the uniform hypotheses.
The last statement is direct substitution into the stationarity equation.
\end{proof}

The contraction test does not apply at a point where $D_YH_\tau$ has a
nonzero null vector: then $I-A^{-1}D_YH_\tau$ has eigenvalue one for every
invertible $A$. A critical branch therefore requires exact
reference-branch matching or a separate existence argument, such as an
augmented system with appropriate control and solvability conditions.
The derivative closure defect estimates Jacobians after that branch
has been established. A Lipschitz bound on
$D_Y\mathfrak F_\tau$ and Eq.~\eqref{eq:app-certified-branch-displacement}
give the second term in Eq.~\eqref{eq:paper1-kernel-defect-bound}.

\subsection{Finite truncations and Riesz stability}

Let $\cX$ be the exact response Banach space. Let $\mathcal Y$ be a fixed
Hilbert space of source amplitudes, with its dual identified by its inner
product; the scalar criterion uses $\mathcal Y=\mathbb C$.
The Riesz projector $\Pi(z)$ acts on $\cX$, whereas the tensor projector
$\Pi^{(2)}(Q)$ acts on external Lorentz indices. For each finite space $\cX_N$, use the maps in
Eq.~\eqref{eq:declared-tt-truncation} and lift all finite objects to the same
comparison spaces,
\begin{equation}
 \widetilde{\cB}_N(z):=\iota_N\cB_N(z)P_N,
 \qquad
 \widetilde{\mathsf S}_N(z):=\iota_N\mathsf S_N(z),
 \qquad
 \widetilde{\mathsf R}_N(z):=\mathsf R_N(z)P_N.
 \label{eq:app-lifted-truncation-maps}
\end{equation}
Thus $\widetilde{\cB}_N\in\mathcal L(\cX)$,
$\widetilde{\mathsf S}_N\in\mathcal L(\mathcal Y,\cX)$, and
$\widetilde{\mathsf R}_N\in\mathcal L(\cX,\mathcal Y)$. On a compact set $K$ in
the analytic $z$-domain define the common-space kernel error
\begin{equation}
 \epsilon_{\cB,N}
 :=\sup_{z\in K}
 \|\widetilde{\cB}_N(z)-\cB(z)\|_{\mathcal L(\cX)}.
 \label{eq:app-operator-norm-convergence}
\end{equation}
The convergence hypothesis is $\epsilon_{\cB,N}\to0$; writing matrices of
different dimensions without the maps in
Eq.~\eqref{eq:app-lifted-truncation-maps} does not define this norm.
Let $\gamma$ enclose the simple exact eigenvalue uniformly on $K$, exclude
zero and all other exact spectrum, and let
$\Pi_N$ be the contour projector of $\widetilde{\cB}_N$ on the same contour.
Set
\begin{equation}
 M_\gamma
 :=\sup_{z\in K,\,\zeta\in\gamma}
   \|(\zeta-\cB(z))^{-1}\|,
 \qquad L_\gamma:=\operatorname{length}(\gamma).
 \label{eq:app-resolvent-constants}
\end{equation}
If $M_\gamma\epsilon_{\cB,N}<1$, the resolvent identity and a Neumann series give
\begin{equation}
 \|\Pi_N(z)-\Pi(z)\|
 \leq\frac{L_\gamma}{2\pi}
 \frac{M_\gamma^2\epsilon_{\cB,N}}
      {1-M_\gamma\epsilon_{\cB,N}}.
 \label{eq:app-riesz-stability-bound}
\end{equation}
The same Neumann bound holds along
$\cB+t(\widetilde{\cB}_N-\cB)$, $0\leq t\leq1$, so the contour
projectors have constant rank one. Consequently, an isolated simple
eigenspace is stable under operator-norm convergent truncations.
Convergence of eigenvectors alone is not a
coordinate-independent criterion; the Riesz projection is.

Residue convergence also involves the following errors on the same spaces:
\begin{align}
 \epsilon_{\partial\cB,N}
 &:=\sup_{z\in K}
 \|\partial_z\widetilde{\cB}_N(z)-\partial_z\cB(z)\|_{\mathcal L(\cX)},
 \notag\\
 \epsilon_{\mathsf S,N}
 &:=\sup_{z\in K}
 \|\widetilde{\mathsf S}_N(z)-\mathsf S(z)\|_{\mathcal L(\mathcal Y,\cX)},
 \notag\\
 \epsilon_{\mathsf R,N}
 &:=\sup_{z\in K}
 \|\widetilde{\mathsf R}_N(z)-\mathsf R(z)\|_{\mathcal L(\cX,\mathcal Y)},
 \notag\\
 \epsilon_{P_2,N}
 &:=\sup_{z\in K}|P_{2,N}(z)-P_2(z)|.
 \label{eq:app-residue-comparison-errors}
\end{align}
All four errors must tend to zero for this comparison. Comparing the full
correlator also requires
$\sup_K\|C_{2,\mathrm{reg},N}-C_{2,\mathrm{reg}}\|\to0$ on the same
source--readout space. Analyticity of the direct terms suffices for the
residue statements below, but does not by itself bound the regular part.
A subtraction polynomial has no pole residue; its error is needed to
compare the regular response. A truncated critical point generally lies at
$z_N\neq0$, so a derivative check only at zero does not control its
residue. The following proposition includes that displacement. For nonnormal
kernels, $M_\gamma$ can be large
even when the spectral gap appears comfortable; pseudospectral sensitivity
must therefore be included in the error budget
~\cite{Chatelin1983,TrefethenEmbree2005}.

\begin{proposition}[Stability of the critical root and its residue]
\label{prop:app-root-residue-certificate}
Suppose the preceding analytic families and rank-one contour projectors
are defined on a neighbourhood of the closed disk
$D_r=\{z:|z|\leq r\}\subset K$. Let $\lambda(0)=1$, set
$\alpha=|\lambda'(0)|>0$, and assume
\begin{equation}
 m_r:=\inf_{z\in D_r}
 \left|\frac{1-\lambda(z)}{z}\right|>0,
 \label{eq:app-critical-root-lower-bound}
\end{equation}
where the quotient at zero is its analytic value $-\lambda'(0)$.
Assume the uniform bound
$\sup_{D_r}|\lambda_N-\lambda|\leq e_{\lambda,N}<m_rr$.
Then $1-\lambda_N$ has exactly one zero $z_N$ in the disk, counted with
multiplicity, and
\begin{equation}
 |z_N|\leq d_{z,N}:=e_{\lambda,N}/m_r<r.
 \label{eq:app-certified-critical-root}
\end{equation}
For example, admissible choices from the common-space operator estimates are
\begin{align}
 e_{\lambda,N}
 &:=\epsilon_{\cB,N}(\Pi_*+d_{\Pi,N})
       +2B_*d_{\Pi,N},\\
 e_{\lambda',N}
 &:=\epsilon_{\partial\cB,N}(\Pi_*+d_{\Pi,N})
       +2B'_*d_{\Pi,N},
 \label{eq:app-eigenvalue-and-slope-errors}
\end{align}
where $d_{\Pi,N}$ is the right side of
Eq.~\eqref{eq:app-riesz-stability-bound}, and
$\Pi_*:=\sup_{D_r}\|\Pi\|$,
$B_*:=\sup_{D_r}\|\cB\|$,
$B'_*:=\sup_{D_r}\|\partial_z\cB\|$.
The second line bounds $\sup_{D_r}|\lambda_N'-\lambda'|$.

Define the numerators
$\mathcal N=\mathsf R\Pi\mathsf S$ and
$\mathcal N_N=\widetilde{\mathsf R}_N\Pi_N
\widetilde{\mathsf S}_N$, and let
$R_*:=\sup_{D_r}\|\mathsf R\|$,
$S_{N,*}:=\sup_{D_r}\|\widetilde{\mathsf S}_N\|$,
$\Pi_{N,*}:=\sup_{D_r}\|\Pi_N\|$.
A same-point numerator error and its cross-root extension are
\begin{align}
 e_{\mathrm{num},N}
 &:=\epsilon_{\mathsf R,N}\Pi_{N,*}S_{N,*}
   +R_*d_{\Pi,N}S_{N,*}+R_*\Pi_*\epsilon_{\mathsf S,N},\\
 D_{\mathrm{num},N}
 &:=e_{\mathrm{num},N}+L_{\mathrm{num},N}d_{z,N},\\
 D_{\mathrm{slope},N}
 &:=e_{\lambda',N}+L_{\lambda',N}d_{z,N},
 \label{eq:app-cross-root-errors}
\end{align}
with verified constants
$L_{\mathrm{num},N}\geq\sup_{D_r}\|\mathcal N_N'\|$ and
$L_{\lambda',N}\geq\sup_{D_r}|\lambda_N''|$.
Assume $D_{\mathrm{slope},N}<\alpha$. Then
\begin{equation}
 |\lambda_N'(z_N)|\geq\alpha-D_{\mathrm{slope},N}>0.
 \label{eq:app-uniform-transversality}
\end{equation}
If the complementary resolvents and direct terms are analytic on $D_r$,
the residues at the respective roots satisfy
\begin{equation}
 \begin{gathered}
 Z=-\frac{\mathcal N(0)}{\lambda'(0)},
 \qquad
 Z_N=-\frac{\mathcal N_N(z_N)}{\lambda_N'(z_N)},\\
 \|Z_N-Z\|
 \leq\frac{D_{\mathrm{num},N}}{\alpha-D_{\mathrm{slope},N}}
 +\frac{\|\mathcal N(0)\|D_{\mathrm{slope},N}}
 {\alpha(\alpha-D_{\mathrm{slope},N})}.
 \end{gathered}
 \label{eq:app-certified-cross-root-residue}
\end{equation}
For a visible scalar pole, a lower bound on $|\mathcal N(0)|$ exceeding
$D_{\mathrm{num},N}$ additionally certifies visibility at $z_N$.
\end{proposition}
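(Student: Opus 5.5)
The plan has three parts: locate and count the truncated root with Rouché's theorem, control the truncated slope at that root with a first-order Taylor estimate, and then compare the two residue quotients by splitting their difference. Throughout, $\lambda$ and $\lambda_N$ are the eigenvalues carried by the rank-one contour projectors. Because $\Pi$ and $\Pi_N$ have constant rank one on $D_r$ and depend analytically on $z$, we can write $\lambda=\operatorname{Tr}(\cB\Pi)$ and $\lambda_N=\operatorname{Tr}(\widetilde\cB_N\Pi_N)$. Both are therefore analytic on a neighbourhood of $D_r$. This is the only input needed from the Riesz stability bound \eqref{eq:app-riesz-stability-bound}.

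For the root, compare $f=1-\lambda$ with $f_N=1-\lambda_N$ on the circle $|z|=r$. By \eqref{eq:app-critical-root-lower-bound}, $|f(z)|\geq m_r r$ there. The uniform hypothesis gives $|f_N-f|\leq e_{\lambda,N}<m_r r\leq|f|$. Rouché's theorem then shows that $f_N$ and $f$ have the same number of zeros in the disk. Since $f(z)=z\cdot(f(z)/z)$ and the quotient does not vanish on $D_r$, $f$ has exactly one zero, at the origin. This gives the unique zero $z_N$.

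For the displacement, evaluate at $z_N$: we have $|f(z_N)|=|f(z_N)-f_N(z_N)|\leq e_{\lambda,N}$. Writing $|f(z_N)|\geq m_r|z_N|$ then gives $|z_N|\leq e_{\lambda,N}/m_r=d_{z,N}$, and $d_{z,N}<r$ by the hypothesis on $e_{\lambda,N}$.

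The admissible choices for $e_{\lambda,N}$ and $e_{\lambda',N}$ come from splitting
\[
\widetilde\cB_N\Pi_N-\cB\Pi=(\widetilde\cB_N-\cB)\Pi_N+\cB(\Pi_N-\Pi).
\]
We use $\|\Pi_N\|\leq\Pi_*+d_{\Pi,N}$. The factor $2$ accounts for passing from the operator difference to the eigenvalue difference. One way to obtain it is to write $\lambda=\langle\ell|\cB|r\rangle$ with normalised rank-one projectors and compare; another is to bound the trace of a rank-at-most-two difference. I would check which of these gives the constant cleanly. The slope line follows the same way after differentiating in $z$, and the corresponding derivative bound for $\Pi_N-\Pi$ should follow by Cauchy estimates.

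For transversality, write
\[
|\lambda_N'(z_N)-\lambda'(0)|\leq|\lambda_N'(z_N)-\lambda_N'(0)|+|\lambda_N'(0)-\lambda'(0)|\leq L_{\lambda',N}d_{z,N}+e_{\lambda',N}=D_{\mathrm{slope},N}.
\]
This gives \eqref{eq:app-uniform-transversality}. The residues then come from the decomposition \eqref{eq:app-resolvent-decomposition} applied to each family. Complementary resolvents and direct terms are analytic by assumption, so only $\mathcal N/(1-\lambda)$ carries a pole. The residue at a simple zero of $1-\lambda_N$ is $-\mathcal N_N(z_N)/\lambda_N'(z_N)$, and similarly at the origin for the exact family.

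For the residue difference, set $a=\mathcal N_N(z_N)$, $b=\mathcal N(0)$, $\alpha_N=\lambda_N'(z_N)$ and $\alpha_0=\lambda'(0)$, and split
\[
\frac{a}{\alpha_N}-\frac{b}{\alpha_0}=\frac{a-b}{\alpha_N}+b\,\frac{\alpha_0-\alpha_N}{\alpha_N\alpha_0}.
\]
The numerator difference satisfies $\|a-b\|\leq\|\mathcal N_N(z_N)-\mathcal N_N(0)\|+\|\mathcal N_N(0)-\mathcal N(0)\|\leq L_{\mathrm{num},N}d_{z,N}+e_{\mathrm{num},N}$. Here $e_{\mathrm{num},N}$ comes from the three-term telescoping
\[
\widetilde{\mathsf R}_N\Pi_N\widetilde{\mathsf S}_N-\mathsf R\Pi\mathsf S=(\widetilde{\mathsf R}_N-\mathsf R)\Pi_N\widetilde{\mathsf S}_N+\mathsf R(\Pi_N-\Pi)\widetilde{\mathsf S}_N+\mathsf R\Pi(\widetilde{\mathsf S}_N-\mathsf S).
\]
Combining these with $|\alpha_N|\geq\alpha-D_{\mathrm{slope},N}$ yields \eqref{eq:app-certified-cross-root-residue}.

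Visibility follows because $\|\mathcal N_N(z_N)\|\geq|\mathcal N(0)|-D_{\mathrm{num},N}>0$. The main obstacle I expect is bookkeeping rather than analysis. Specifically, it is justifying that $\lambda_N$ is analytic on all of $D_r$ and is the eigenvalue with rank-one projector, which is where the constant-rank homotopy argument is needed. A second point to watch is that the residue of the lifted finite system coincides with that of the unlifted one, which uses $P_N\iota_N=I$.
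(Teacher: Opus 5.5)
Your outline follows the paper's proof step for step, with one exception. The shared steps are:
\begin{itemize}
\item Rouch\'e on $|z|=r$.
\item The displacement bound $m_r|z_N|\le|1-\lambda(z_N)|\le e_{\lambda,N}$.
\item The splitting $\widetilde\cB_N\Pi_N-\cB\Pi=(\widetilde\cB_N-\cB)\Pi_N+\cB(\Pi_N-\Pi)$ into a rank-$\le1$ and a rank-$\le2$ piece, bounded with $|\operatorname{tr}T|\le\operatorname{rank}(T)\|T\|$. This is the route the paper uses for the factor $2$, and it is the one that works. Comparing $\langle\ell|\cB|r\rangle$ directly would need separate control of the eigenvectors, which $d_{\Pi,N}$ does not provide.
\item The three-factor telescoping for $e_{\mathrm{num},N}$.
\item Integration of $\mathcal N_N'$ and $\lambda_N''$ along $[0,z_N]$, the reverse triangle inequality, and the quotient splitting.
\end{itemize}

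The gap is in the slope line. You propose to differentiate $\operatorname{tr}(\cB\Pi)$ and control $\Pi_N'-\Pi'$ by Cauchy estimates. That route produces extra terms such as $B_*\|\Pi_N'-\Pi'\|$ and $\epsilon_{\cB,N}\|\Pi_N'\|$, which are absent from the stated $e_{\lambda',N}$. It also needs control of $\Pi_N-\Pi$ on a disk strictly larger than $D_r$, which is not assumed. So it would certify a different constant, not the one the proposition asserts is admissible.

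The missing observation is the identity $\lambda'=\operatorname{tr}(\partial_z\cB\,\Pi)$, which the paper uses. Differentiating $\Pi^2=\Pi$ gives $\Pi'=\Pi\Pi'+\Pi'\Pi$, and hence $\Pi\Pi'\Pi=0$. Combined with $\cB\Pi=\Pi\cB=\lambda\Pi$ and trace cyclicity for finite-rank operators, this gives $\operatorname{tr}(\cB\Pi')=2\lambda\operatorname{tr}(\Pi\Pi'\Pi)=0$. The same holds for $\widetilde\cB_N$ and $\Pi_N$. Therefore
\[
\lambda_N'-\lambda'=\operatorname{tr}\bigl[(\partial_z\widetilde\cB_N-\partial_z\cB)\Pi_N\bigr]+\operatorname{tr}\bigl[\partial_z\cB\,(\Pi_N-\Pi)\bigr],
\]
and the same rank-one/rank-two bound yields exactly $\epsilon_{\partial\cB,N}(\Pi_*+d_{\Pi,N})+2B'_*d_{\Pi,N}$, with no projector derivative.

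The rest of your argument (transversality, residue formula, residue bound, visibility) uses only that $e_{\lambda',N}$ bounds $\sup_{D_r}|\lambda_N'-\lambda'|$. Once this identity replaces the Cauchy-estimate step, it goes through as written.
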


\begin{proof}
The function $1-\lambda$ has exactly one zero in $D_r$, a simple zero at
the origin. On its boundary,
$|1-\lambda(z)|\geq m_rr>e_{\lambda,N}$, so Rouch\'e's theorem
gives exactly one zero of $1-\lambda_N$. At that zero,
$m_r|z_N|\leq|1-\lambda(z_N)|\leq e_{\lambda,N}$.
For the eigenvalue bounds use
$\lambda=\operatorname{tr}(\cB\Pi)$ and
$\lambda'=\operatorname{tr}(\cB'\Pi)$, and their truncated analogues.
The differences split into one term of rank at most one and one of rank
at most two. For a finite-rank operator $T$,
$|\operatorname{tr}T|\leq\operatorname{rank}(T)\|T\|$, which proves
Eq.~\eqref{eq:app-eigenvalue-and-slope-errors}.
Telescoping the three factors in $\mathcal N_N-\mathcal N$ gives
$e_{\mathrm{num},N}$. Integrating $\mathcal N_N'$ and $\lambda_N''$
along the segment from zero to $z_N$ then yields
\begin{equation}
 \begin{gathered}
 \|\mathcal N_N(z_N)-\mathcal N(0)\|\leq D_{\mathrm{num},N},
 \\
 |\lambda_N'(z_N)-\lambda'(0)|\leq D_{\mathrm{slope},N}.
 \end{gathered}
 \label{eq:app-across-root-comparison}
\end{equation}
The lower slope bound follows from the reverse triangle inequality.
Subtracting the two quotient residues and using this lower bound proves
Eq.~\eqref{eq:app-certified-cross-root-residue}.
\end{proof}

Cauchy's estimates provide derivative bounds from uniform convergence
on a strictly larger analytic disk. Under a real-analytic symmetry, the
unique root in the symmetric disk is real; otherwise $z_N$ may be
complex. The proposition compares the finite and exact residues.
Their physical spectral interpretation is supplied by
Theorem~\ref{thm:tt-fredholm-spectral-bridge}.

\begin{proposition}[Telescoping operator-error bound]
\label{prop:app-telescoping-error-budget}
For a fixed order of the four approximations, let
$\cB_N^{[j]}(z)\in\mathcal L(\cX)$, $j=0,\ldots,4$, with
$\cB_N^{[0]}=\cB$ and $\cB_N^{[4]}=\widetilde{\cB}_N$. Let the successive
links represent, in order, vertex truncation, momentum
discretisation, response-complement approximation, and closure. Define
\begin{equation}
 \delta_{j,N}
 :=\sup_{z\in K}
 \|\cB_N^{[j]}(z)-\cB_N^{[j-1]}(z)\|_{\mathcal L(\cX)},
 \qquad j=1,\ldots,4.
 \label{eq:app-link-error-definitions}
\end{equation}
Then
\begin{equation}
 \epsilon_{\cB,N}
 \leq \delta_{1,N}+\delta_{2,N}+\delta_{3,N}+\delta_{4,N}.
 \label{eq:app-total-kernel-error}
\end{equation}
For a last link implemented on the finite response space $\cX_N$, let
$\mathcal C_{\tau,N}$ be boundedly invertible on that space. Define the
native response operators and their common-space lifts by
\begin{equation}
 \begin{aligned}
 B_{3,N}&:=\mathcal C_{\tau,N}\cB_N^{\rm base}
                    \mathcal C_{\tau,N}^{-1},
 &B_{4,N}&:=\cB_{\tau,N}(Y_{\tau,N,*}),\\
 \cB_N^{[3]}&:=\iota_N B_{3,N}P_N,
 &\cB_N^{[4]}&:=\iota_N B_{4,N}P_N.
 \end{aligned}
 \label{eq:app-closure-link-lifts}
\end{equation}
At the transported reference point set
$DE_{\tau,N}=\mathcal C_{\tau,N}\cB_N^{\rm base}
-\cB_{\tau,N}(\overline Y_{N,*})\mathcal C_{\tau,N}$.
Let $Y_{\tau,N,*}$ be a solution branch with displacement bound
$\delta_{Y,N}\geq\|Y_{\tau,N,*}-\overline Y_{N,*}\|_{\cX_N}$.
Let $L_{\tau,N}$ bound the state-variable Lipschitz constant of
$\cB_{\tau,N}$ in $\mathcal L(\cX_N)$ with respect to this native
state norm. Then
\begin{equation}
 \delta_{4,N}
 \leq\|\iota_N\|\,\|P_N\|\sup_{z\in K}\bigl(
 \|DE_{\tau,N}(z)\|_{\mathcal L(\cX_N)}
 \|\mathcal C_{\tau,N}^{-1}(z)\|_{\mathcal L(\cX_N)}
 +L_{\tau,N}(z)\delta_{Y,N}(z)\bigr).
 \label{eq:app-closure-link-bound}
\end{equation}
The prefactor uses the embedding and projection norms between $\cX_N$
and $\cX$. It cannot be omitted unless their product is at most one or
the entire closure error has already been bounded after lifting in
$\mathcal L(\cX)$.
Consequently, let $M_\gamma\leq\overline M_\gamma$ and
$\delta_{j,N}\leq\overline\delta_{j,N}$ be explicitly verified upper bounds,
and put $\overline\epsilon_N=\sum_j\overline\delta_{j,N}$. If
$\overline M_\gamma\overline\epsilon_N<1$, then the computable estimate is
\begin{equation}
 \|\Pi_N(z)-\Pi(z)\|
 \leq\frac{L_\gamma}{2\pi}
 \frac{\overline M_\gamma^2\overline\epsilon_N}
      {1-\overline M_\gamma\overline\epsilon_N}.
 \label{eq:app-computable-riesz-bound}
\end{equation}
\end{proposition}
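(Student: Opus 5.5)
The proof has three parts: the telescoping inequality, the closure-link bound, and the computable Riesz estimate.

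\textbf{Telescoping inequality.} For each fixed $z\in K$ write
\[
 \widetilde{\cB}_N-\cB=\sum_{j=1}^4\bigl(\cB_N^{[j]}-\cB_N^{[j-1]}\bigr),
\]
using $\cB_N^{[0]}=\cB$ and $\cB_N^{[4]}=\widetilde{\cB}_N$. Apply the triangle inequality in $\mathcal L(\cX)$. Then bound each summand by its supremum over $K$; a supremum of a sum is at most the sum of the suprema. This gives Eq.~\eqref{eq:app-total-kernel-error}. No property of the individual links is needed, only that every $\cB_N^{[j]}$ is already lifted to the common space. This is exactly why the lifts in Eq.~\eqref{eq:app-closure-link-lifts} are stated explicitly.

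\textbf{Closure link.} By Eq.~\eqref{eq:app-closure-link-lifts},
\[
 \cB_N^{[4]}-\cB_N^{[3]}=\iota_N(B_{4,N}-B_{3,N})P_N,
\]
so its norm is at most $\|\iota_N\|\,\|P_N\|\,\|B_{4,N}-B_{3,N}\|_{\mathcal L(\cX_N)}$. Add and subtract $\cB_{\tau,N}(\overline Y_{N,*})$:
\[
 B_{4,N}-B_{3,N}
 =\bigl[\cB_{\tau,N}(Y_{\tau,N,*})-\cB_{\tau,N}(\overline Y_{N,*})\bigr]
 +\bigl[\cB_{\tau,N}(\overline Y_{N,*})-\mathcal C_{\tau,N}\cB_N^{\rm base}\mathcal C_{\tau,N}^{-1}\bigr].
\]
The first bracket is at most $L_{\tau,N}\delta_{Y,N}$ by the Lipschitz hypothesis. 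For the second, the definition of $DE_{\tau,N}$ gives
\[
 \cB_{\tau,N}(\overline Y_{N,*})-\mathcal C_{\tau,N}\cB_N^{\rm base}\mathcal C_{\tau,N}^{-1}
 =-DE_{\tau,N}\,\mathcal C_{\tau,N}^{-1},
\]
obtained by right-multiplying $DE_{\tau,N}$ by $\mathcal C_{\tau,N}^{-1}$. Its norm is therefore at most $\|DE_{\tau,N}\|\,\|\mathcal C_{\tau,N}^{-1}\|$. This is the finite-space analogue of Eq.~\eqref{eq:paper1-kernel-defect-bound}. Taking the supremum over $z\in K$ yields Eq.~\eqref{eq:app-closure-link-bound}.

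\textbf{Computable Riesz estimate.} This step rests on Eq.~\eqref{eq:app-riesz-stability-bound}. Since $\epsilon_{\cB,N}\le\overline\epsilon_N$ and $M_\gamma\le\overline M_\gamma$, we have $M_\gamma\epsilon_{\cB,N}\le\overline M_\gamma\overline\epsilon_N<1$. For each $\zeta\in\gamma$ and $z\in K$, write
\[
 \zeta-\widetilde{\cB}_N=(\zeta-\cB)\bigl[I-(\zeta-\cB)^{-1}(\widetilde{\cB}_N-\cB)\bigr].
\]
A Neumann series then gives invertibility with $\|(\zeta-\widetilde{\cB}_N)^{-1}\|\le \overline M_\gamma/(1-\overline M_\gamma\overline\epsilon_N)$. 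The second resolvent identity bounds the difference of resolvents by $\overline M_\gamma^2\overline\epsilon_N/(1-\overline M_\gamma\overline\epsilon_N)$. Integrating over $\gamma$ with the factor $1/2\pi$ and length $L_\gamma$ gives Eq.~\eqref{eq:app-computable-riesz-bound}. The key point is that the map $(M,\epsilon)\mapsto M^2\epsilon/(1-M\epsilon)$ is increasing in both arguments on $M\epsilon<1$. That monotonicity is what justifies replacing the true constants by their verified upper bounds.

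\textbf{Main obstacle.} The delicate point is the closure link: the Lipschitz and defect estimates live in the native norm of $\cX_N$, while the kernel error is measured in $\mathcal L(\cX)$. I would check carefully that the transported point and the solution branch are compared in that same native state norm. I would also check that the prefactor $\|\iota_N\|\,\|P_N\|$ really is the only conversion between the two norms.
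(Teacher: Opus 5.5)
Your proposal is correct and follows the paper's proof. The paper uses the same three steps: the telescoping identity with the triangle inequality in $\mathcal L(\cX)$; the add-and-subtract of $\cB_{\tau,N}(\overline Y_{N,*})$ in $\mathcal L(\cX_N)$, giving the $DE_{\tau,N}\,\mathcal C_{\tau,N}^{-1}$ term plus the Lipschitz displacement term, before lifting through $\|\iota_N\|\,\|P_N\|$; and Eq.~\eqref{eq:app-riesz-stability-bound} with the verified upper constants. Your explicit Neumann-series rederivation, or equivalently the monotonicity of $M^2\epsilon/(1-M\epsilon)$, simply spells out the last step, which the paper states as a direct consequence of that bound.
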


\begin{proof}
The identity
$\widetilde{\cB}_N-\cB
=\sum_{j=1}^4(\cB_N^{[j]}-\cB_N^{[j-1]})$
holds in $\mathcal L(\cX)$. Taking the supremum norm and applying the
triangle inequality proves Eq.~\eqref{eq:app-total-kernel-error}. The closure
bound first controls $B_{4,N}-B_{3,N}$ in $\mathcal L(\cX_N)$ by
differentiating the closure defect, composing with
$\mathcal C_{\tau,N}^{-1}$, and adding the actual-branch displacement
term as in Eq.~\eqref{eq:paper1-kernel-defect-bound}. Equation~
\eqref{eq:app-closure-link-lifts} and
$\|\iota_N(B_{4,N}-B_{3,N})P_N\|_{\mathcal L(\cX)}
\leq\|\iota_N\|\,\|P_N\|\,
\|B_{4,N}-B_{3,N}\|_{\mathcal L(\cX_N)}$
give the common-space bound. Exact branch matching removes the displacement
term. The final statement follows from
Eq.~\eqref{eq:app-riesz-stability-bound}.
\end{proof}

Applying the same decomposition to $\partial_z\cB$, $\mathsf S$ and
$\mathsf R$, including their solution-branch dependence, supplies the
uniform estimates needed in
Proposition~\ref{prop:app-root-residue-certificate}. A quantitative
application requires the intermediate operators and upper bounds
defined above; regulator or basis variation alone does not bound
these differences. Thus the estimates give a conditional error
analysis whose constants depend on the chosen model and approximation.
Appendix~\ref{app:benchmarks} illustrates the root and residue shifts
in an exactly solvable finite response.

\subsection{Spectral accumulation at zero}

If a continuum begins at $s=0$, the continued response generally has a
nonmeromorphic threshold singularity or boundary behaviour; its detailed
form need not be a branch point. There need be no neighbourhood $U$ on which
$\cB(z)$, $C_{2,\rm reg}(z)$, and the complementary resolvent are all
analytic. The decomposition
Eq.~\eqref{eq:app-resolvent-decomposition} then does not separate the atom
from the threshold.

An atom can coexist with the threshold. For example, a measure may have
the form
\begin{equation}
 \dd\rho_2(s)=Z_0\delta_0(\dd s)
 +f(s)\theta(s)\dd s,
 \label{eq:app-atom-at-threshold}
\end{equation}
with both terms nonzero. Proposition~\ref{prop:tt-atom-extraction} can still
identify $Z_0$ when the continuum satisfies its integrability hypotheses,
although the isolated Fredholm theorem does not apply. A kernel
analysis at the threshold instead requires additional resolvent
control, for example in weighted spaces or through a
limiting-absorption principle.

\section{Spectral examples, compact sources, and finite approximations}
\label{app:benchmarks}

This appendix gives explicit calculations for the spectral criterion.
Free fields and reference measures illustrate atom extraction; finite
response matrices exhibit source visibility, Schur reduction and
critical multiplicity. The finite-volume examples establish the role
of cumulative spectral weight. We then evaluate the compact spectral
sum and derive the scalar and one-form matter vertices used in the
main text. The final two subsections relate the convergence estimates
to Euclidean and functional response calculations.

\subsection{Free scalar stress tensor}

For a real scalar of mass $m$, take
\begin{equation}
 T_{\mu\nu}
 =\partial_\mu\phi\partial_\nu\phi
 -\frac12\eta_{\mu\nu}
  [\partial_\alpha\phi\partial^\alpha\phi-m^2\phi^2]
 +\xi(\eta_{\mu\nu}\Box-\partial_\mu\partial_\nu)\phi^2.
 \label{eq:benchmark-scalar-stress}
\end{equation}
The operator is bilinear, so its first non-vacuum matrix element creates two
particles. Let $q=p_1+p_2$ and let $\varepsilon^{\mu\nu}$ be transverse and
traceless with respect to $q$. Up to the common state-normalisation factor,
\begin{equation}
 \varepsilon^{\mu\nu}
 \langle0|T_{\mu\nu}(0)|p_1p_2\rangle
 =2\varepsilon^{\mu\nu}p_{1\mu}p_{2\nu}.
 \label{eq:benchmark-scalar-tt-matrix-element}
\end{equation}
All terms proportional to $\eta_{\mu\nu}$ or $q_\mu q_\nu$ vanish. In
particular, the result is independent of the improvement coefficient $\xi$.

Write $\dd\rho_2(s)=\rho_2^{\rm scalar}(s)\dd s$ for the absolutely
continuous measure in this example. Its density is a two-body phase-space integral,
\begin{equation}
 \rho_2^{\rm scalar}(s)
 \propto\int\dd\Phi_2(s;m,m)
 \sum_{\lambda}
 \left|\varepsilon_{(\lambda)}^{\mu\nu}
 p_{1\mu}p_{2\nu}\right|^2.
 \label{eq:benchmark-scalar-phase-space}
\end{equation}
It is nonnegative, has support only for $s\geq4m^2$, and contains no delta
function at zero. Angular momentum two supplies the threshold behaviour
\begin{equation}
 \rho_2^{\rm scalar}(s)
 =\mathcal N_s s^2
  \left(1-\frac{4m^2}{s}\right)^{5/2}
  \theta(s-4m^2),
 \qquad \mathcal N_s>0,
 \label{eq:benchmark-scalar-density}
\end{equation}
With one-particle normalisation
$\langle p|p'\rangle=2E_p(2\pi)^3\delta^3(\boldsymbol p-\boldsymbol p')$
and the orbit measure of Section~\ref{sec:tt-wightman-measure},
$\mathcal N_s=1/(960\pi^2)$.
Indeed, for a unit traceless rest-frame polarisation,
$\langle|\varepsilon_{ij}n_i n_j|^2\rangle_{S^2}=2/15$.
The squared matrix element contributes four times the fourth power of the
relative momentum. Multiplying by the identical-particle factor $1/2!$
and integrated phase space $\beta/(8\pi)$, where
$\beta=(1-4m^2/s)^{1/2}$, and dividing the Wightman Fourier density by
$2\pi$ gives the stated normalisation. At $m=0$ the density is
proportional to $s^2\theta(s)$: the continuum reaches zero, but
$Z_0=0$.

Free fermion and gauge-field stress tensors give the same structural test.
They are bilinear, their Wightman sums begin in two-particle sectors, and in
the massless four-dimensional limit their TT densities are positive
constants times $s^2$. The constants reproduce the corresponding stress
two-point normalisations~\cite{OsbornPetkou1994}. Their zero-threshold continua lie outside the isolated analytic
Fredholm setting, while the direct measure calculation gives $Z_0=0$.

\subsection{A positive atom plus continuum}

Consider the reference measure
\begin{equation}
 \dd\rho_{\rm ref}(s)
 =Z_{\rm ref}\delta_0(\dd s)
 +A s^2e^{-s/M^2}\dd s,
 \qquad Z_{\rm ref},A,M^2>0.
 \label{eq:benchmark-reference-measure}
\end{equation}
Its unsubtracted Stieltjes transform is UV convergent and equals
\begin{equation}
 C_{\rm ref}(Q^2)
 =\frac{Z_{\rm ref}}{Q^2}
 +A\int_0^\infty\dd s\,
   \frac{s^2e^{-s/M^2}}{s+Q^2}.
 \label{eq:benchmark-reference-correlator}
\end{equation}
The continuum integral obeys the hypotheses of
Proposition~\ref{prop:tt-atom-extraction}, hence
\begin{equation}
 \lim_{Q^2\downarrow0}Q^2C_{\rm ref}(Q^2)=Z_{\rm ref}.
 \label{eq:benchmark-reference-limit}
\end{equation}
The continuum starts at zero, so this is an atom at threshold rather than an
isolated analytic pole. Multiplication by $e^{-\tau Q^2}$ leaves the same
limit for every finite $\tau$. The example checks that the direct diagnostic
does not confuse threshold support with the atomic weight.
This positive scalar reference is not the spectral measure of the complete
baseline stress tensor: Theorem~\ref{thm:null-shell-exclusion} excludes
that interpretation when $Z_{\rm ref}>0$. The same qualification applies
to the positive-atom synthetic sequences below.

\subsection{Rank-one response and source visibility}

In one dimension, take
\begin{equation}
 \cB(z)=1-\alpha z+\order{z^2},
 \qquad \alpha>0,
 \qquad \mathsf S(0)=s_0,
 \qquad \mathsf R(0)=r_0.
 \label{eq:benchmark-rank-one-kernel}
\end{equation}
Direct inversion gives
\begin{equation}
 C(z)=C_{\rm reg}(z)+\frac{r_0s_0}{\alpha z}+\order{1},
 \qquad Z_{\rm alg}=\frac{r_0s_0}{\alpha}.
 \label{eq:benchmark-rank-one-residue}
\end{equation}
The pole vanishes if either overlap is zero; a negative product is
incompatible with a positive scalar atom.

\subsection{Schur reduction of a nonnormal two-channel response}

Let $P=\diag(1,0)$ retain the first channel and consider
\begin{equation}
 \cB(z)=
 \begin{pmatrix}
  5/4-z&1/2\\
  -1/4&1/2
 \end{pmatrix},
 \qquad
 \mathsf S=\binom10,
 \qquad
 \mathsf R=(1\ \ 0).
 \label{eq:benchmark-two-channel-kernel}
\end{equation}
The kernel is nonnormal. Discarding the second channel before inversion gives
\begin{equation}
 [1-\cB_{PP}(z)]^{-1}=\frac1{z-1/4},
 \label{eq:benchmark-direct-projection-pole}
\end{equation}
and therefore gives a spurious critical point at $z=1/4$. Keeping the
complement gives instead
\begin{equation}
 \cB_{\rm eff}(z)
 =\cB_{PP}(z)
  +\cB_{P\overline P}(z)[1-\cB_{\overline P\overline P}(z)]^{-1}\cB_{\overline PP}(z)
 =1-z,
 \label{eq:benchmark-feshbach-effective-kernel}
\end{equation}
so that the sourced response is exactly
\begin{equation}
 \mathsf R[I-\cB(z)]^{-1}\mathsf S=\frac1z.
 \label{eq:benchmark-feshbach-corrected-pole}
\end{equation}

At $z=0$ the critical eigenvalue is one and the other eigenvalue is $3/4$.
A normalised right--left pair is
\begin{equation}
 |r\rangle=\binom{2}{-1},
 \qquad
 \langle\ell|=(1\ \ 1),
 \qquad
 \langle\ell|r\rangle=1.
 \label{eq:benchmark-nonnormal-eigenvectors}
\end{equation}
Since $\lambda'(0)=\langle\ell|\cB'(0)|r\rangle=-2$,
$\mathsf R|r\rangle=2$, and $\langle\ell|\mathsf S=1$,
Eq.~\eqref{eq:tt-algebraic-residue} gives $Z_{\rm alg}=1$, in agreement
with Eq.~\eqref{eq:benchmark-feshbach-corrected-pole}.

The motion of the critical root under a controlled complement error is also
exactly visible. Replacing the lower-left entry by $-1/4+\delta$ defines
$\cB_\delta$ with
\begin{equation}
 \|\cB_\delta-\cB\|_2=|\delta|,
 \qquad
 \mathsf R[I-\cB_\delta(z)]^{-1}\mathsf S
 =\frac1{z-\delta}.
 \label{eq:benchmark-root-shift}
\end{equation}
Thus an operator-norm error $|\delta|$ moves the critical root by precisely
$\delta$. The choice $\delta=1/4$ reproduces the direct-projection location.
To check the full source reduction as well, retain the same kernel but
choose $\mathsf S=(1,2)^{\mathsf T}$, $\mathsf R=(3,4)$ and
$C_{\rm dir}=5$. Equation~\eqref{eq:app-complete-feshbach-tuple} gives
$\mathsf S_{\rm eff}=3$, $\mathsf R_{\rm eff}=1$, and
$C_{\rm dir,eff}=21$. Direct inversion and reduction both yield
$C(z)=21+3/z$. This variant tests complementary source and readout terms
as well as the kernel shift.

\subsection{Atomic weight in the infinite-volume limit}

For a finite-volume regularisation with discrete spectrum, write
\begin{equation}
 \dd\rho_{2,L}(s)=\sum_n Z_n(L)\delta_{s_n(L)}(\dd s).
 \label{eq:benchmark-finite-volume-measure}
\end{equation}
A sequence of low-lying levels can mimic $Z_0/Q^2$ over a restricted
momentum range. The general test must track cumulative spectral weight,
since an atom can form from many levels whose individual weights vanish.

\begin{proposition}[Cumulative finite-volume atom criterion]
\label{prop:finite-volume-cumulative-atom}
Let $\rho_{2,L}$ be positive Radon measures on $[0,\infty)$ converging
locally vaguely to a locally finite measure $\rho_2$: their integrals
converge against every continuous compactly supported test function.
For each $\epsilon>0$ with $\rho_2(\{\epsilon\})=0$,
\begin{equation}
 \lim_{L\to\infty}\rho_{2,L}([0,\epsilon])
 =\rho_2([0,\epsilon]).
 \label{eq:finite-volume-cumulative-convergence}
\end{equation}
Consequently, along any decreasing sequence of such continuity radii,
\begin{equation}
 Z_0=\rho_2(\{0\})
 =\lim_{\epsilon\downarrow0}\lim_{L\to\infty}
   \rho_{2,L}([0,\epsilon])
 =\lim_{\epsilon\downarrow0}\liminf_{L\to\infty}
   \rho_{2,L}([0,\epsilon]).
 \label{eq:finite-volume-cumulative-atom}
\end{equation}
These statements require only local finiteness, not a finite total UV
weight. If a UV weight is used to obtain finite measures, it must be a
fixed continuous positive function $w(s)$ with $w(0)=1$; applying the
criterion to $w\rho_{2,L}$ then recovers the same atom.
\end{proposition}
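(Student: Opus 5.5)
The plan is the standard portmanteau argument for vague convergence, carried out on the half-line $[0,\infty)$ where $0$ is a boundary point. Fix a continuity radius $\epsilon>0$ with $\rho_2(\{\epsilon\})=0$. For $\delta>0$ take the continuous compactly supported functions $f_\delta^-$, equal to one on $[0,\epsilon-\delta]$ and zero on $[\epsilon,\infty)$, and $f_\delta^+$, equal to one on $[0,\epsilon]$ and zero on $[\epsilon+\delta,\infty)$, both linear in between. Since $0$ lies in the domain, the indicator of $[0,\epsilon]$ needs no cutoff at the left end, and these functions are genuinely continuous on $[0,\infty)$. Then
\[
 \int f_\delta^-\,\dd\rho_{2,L}\leq\rho_{2,L}([0,\epsilon])\leq\int f_\delta^+\,\dd\rho_{2,L}.
\]
Passing to $L\to\infty$ with vague convergence bounds the $\liminf$ and $\limsup$ by $\int f_\delta^\mp\,\dd\rho_2$. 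Letting $\delta\downarrow0$, the upper bound decreases to $\rho_2([0,\epsilon])$ and the lower bound increases to $\rho_2([0,\epsilon))$, by monotone convergence and local finiteness. These agree because $\rho_2(\{\epsilon\})=0$, which proves Eq.~\eqref{eq:finite-volume-cumulative-convergence}.

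For Eq.~\eqref{eq:finite-volume-cumulative-atom}, recall that a locally finite measure has at most countably many atoms. Continuity radii are therefore dense, and a decreasing sequence $\epsilon_k\downarrow0$ of them exists. Continuity from above of the finite measure $\rho_2$ restricted to $[0,1]$ gives $\rho_2([0,\epsilon_k])\to\rho_2(\{0\})=Z_0$. The inner limit exists by the first part and so equals the inner $\liminf$, which gives both displayed expressions. The $\liminf$ form is also independent of the chosen sequence. Since $\epsilon\mapsto\liminf_L\rho_{2,L}([0,\epsilon])$ is monotone, one can sandwich any $\epsilon$ between continuity radii and use the first part at those radii.

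For the UV-weight remark, note that $w\rho_{2,L}\to w\rho_2$ vaguely: for $f\in C_c$, $fw\in C_c$ because $w$ is continuous. Both measures are positive because $w>0$. Applying the above to $w\rho_2$ gives atom $w(0)\rho_2(\{0\})=Z_0$. The continuity radii of $w\rho_2$ and $\rho_2$ coincide, since $w>0$.

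The only delicate point is the boundary at $0$. One must check that the test functions are allowed to equal one at $s=0$, which holds since convergence is tested on $C_c([0,\infty))$, not on $C_c((0,\infty))$. Otherwise mass escaping to $0$ from the left would not be an issue, but mass sitting exactly at the endpoint would be lost. Everything else is routine.
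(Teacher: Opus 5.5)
Your proposal is correct and follows essentially the same route as the paper: you sandwich the indicator of $[0,\epsilon]$ between continuous compactly supported functions that differ only near $\epsilon$, using that $0$ is not a boundary point in the relative topology of $[0,\infty)$. You then use continuity from above on a finite interval, the countability of atoms, and the fact that multiplication by $w$ preserves local vague convergence; your extra remark that the $\liminf$ limit does not depend on the chosen continuity radii, by monotonicity in $\epsilon$, is a correct small strengthening that the paper does not state.
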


\begin{proof}
In the relative topology of $[0,\infty)$ the compact interval
$[0,\epsilon]$ has boundary $\{\epsilon\}$, not $\{0,\epsilon\}$.
Continuous compactly supported functions that bound its indicator from
below and above can differ only in a shrinking neighbourhood of
$\epsilon$. Local vague convergence gives convergence of their integrals;
local finiteness and $\rho_2(\{\epsilon\})=0$ make the limiting gap
vanish. This proves Eq.~\eqref{eq:finite-volume-cumulative-convergence}.
Continuity from above for the finite measure on $[0,\epsilon_1]$ gives
$\rho_2([0,\epsilon])\downarrow\rho_2(\{0\})$.
Continuity radii tending to zero exist because a locally finite measure
has at most countably many atoms. Multiplication by $w$ preserves local
vague convergence and gives $(w\rho_2)(\{0\})=\rho_2(\{0\})$.
\end{proof}

Under these convergence assumptions, tracking a single line with
\begin{equation}
 s_0(L)\longrightarrow0,
 \qquad Z_0(L)\longrightarrow Z_*>0
 \label{eq:benchmark-finite-volume-atom}
\end{equation}
is sufficient to imply $\rho_2(\{0\})\geq Z_*$: the line eventually
belongs to every fixed $[0,\epsilon]$, after which
Eq.~\eqref{eq:finite-volume-cumulative-atom} applies. It is not necessary,
and equality requires controlling all remaining cumulative weight.
For example,
\begin{equation}
 \dd\rho_L^{(D)}(s)
 =\frac1L\sum_{n=1}^{L}\delta_{n/L^2}(\dd s)
 \ \longrightarrow\ \delta_0(\dd s),
 \qquad \max_n Z_n^{(D)}(L)=\frac1L\longrightarrow0.
 \label{eq:benchmark-concentrating-atom}
\end{equation}
Indeed, every support point lies in $[0,1/L]$, so integration against any
continuous test function tends to its value at zero. The limiting atom has
unit weight despite the absence of any persistent individual line.

In a controlled approximation to a nonatomic continuum one may verify
\begin{align}
 \max_{s_n(L)<\epsilon}Z_n(L)\longrightarrow0
 \quad\text{with}\quad
 \sum_{s_n(L)\leq\epsilon}Z_n(L)
 &\longrightarrow\rho_{2,c}([0,\epsilon]),
 \notag\\
 &\rho_{2,c}([0,\epsilon])\downarrow0.
 \label{eq:benchmark-continuum-volume-scaling}
\end{align}
Only the cumulative limits, taken at continuity radii as above, exclude an
atom. Vanishing maximum individual weight alone does not. A fit at one $L$
cannot establish the required infinite-volume limit.

\subsection{Four finite-volume spectral sequences}

The order of limits can be examined explicitly for the following
positive measures. In dimensionless units set
\begin{equation}
 s_n=\frac nL,
 \qquad
 w_n(L)=\frac1L s_n^2e^{-s_n},
 \qquad n\geq1,
 \label{eq:benchmark-synthetic-continuum-grid}
\end{equation}
and define three positive finite-volume measures,
\begin{align}
 \dd\rho_L^{(A)}
 &=\delta_{L^{-2}}(\dd s)
   +\sum_{n\geq1}w_n(L)\delta_{s_n}(\dd s),
 &\text{stable atom plus threshold},\notag\\
 \dd\rho_L^{(B)}
 &=\sum_{n\geq1}w_n(L)\delta_{s_n}(\dd s),
 &\text{threshold only},\notag\\
 \dd\rho_L^{(C)}
 &=L^{-1}\delta_{L^{-2}}(\dd s)
   +\sum_{n\geq1}w_n(L)\delta_{s_n}(\dd s),
 &\text{finite-volume pseudo-atom}.
 \label{eq:benchmark-synthetic-measures}
\end{align}
The concentrating measure $\rho_L^{(D)}$ in
Eq.~\eqref{eq:benchmark-concentrating-atom} supplies a fourth test.
For the fixed neighbourhood $0<s<1$, the largest continuum weight occurs at
$n=L-1$ and equals
$(1-1/L)^2e^{-(1-1/L)}/L$. The individual-weight diagnostic is listed in
Table~\ref{tab:finite-volume-weights}.

\begin{table}[htbp]
\centering
\begin{tabular}{@{}rrrrr@{}}
\toprule
$L$ & $\max Z_n^{(A)}$ & $\max Z_n^{(B)}$ & $\max Z_n^{(C)}$
& $\max Z_n^{(D)}$\\
\midrule
$8$  & $1.000000$ & $0.039895$ & $0.125000$ & $0.125000$\\
$16$ & $1.000000$ & $0.021512$ & $0.062500$ & $0.062500$\\
$32$ & $1.000000$ & $0.011131$ & $0.031250$ & $0.031250$\\
\bottomrule
\end{tabular}
\caption{Largest individual spectral weight in $0<s<1$ for the four
synthetic sequences. The identical maxima for $C$ and $D$ do not
distinguish their limiting atomic weights, which are zero and one.}
\label{tab:finite-volume-weights}
\end{table}

The largest weight in sequence $A$ remains one, whereas those in the
other three sequences vanish as $L^{-1}$. Sequences $C$ and $D$ have identical maxima but different limiting
atomic weights. For any fixed $\epsilon>0$, the limiting cumulative weights in
$A$, $B$, and $C$ are respectively
$1+\int_0^\epsilon s^2e^{-s}\dd s$,
$\int_0^\epsilon s^2e^{-s}\dd s$, and
$\int_0^\epsilon s^2e^{-s}\dd s$; for $D$ the weight is eventually
exactly one. Proposition~\ref{prop:finite-volume-cumulative-atom}
therefore assigns atom weights $(1,0,0,1)$.
For the residue estimator
\begin{equation}
 R_L^{(X)}(Q^2)
 :=Q^2\int_0^\infty\frac{\dd\rho_L^{(X)}(s)}{s+Q^2},
 \qquad X=A,B,C,D,
 \label{eq:benchmark-synthetic-residue-estimator}
\end{equation}
taking the infinite-volume limit first gives
\begin{align}
 \lim_{Q^2\downarrow0}\lim_{L\to\infty}R_L^{(A)}(Q^2)
 &=\lim_{Q^2\downarrow0}\lim_{L\to\infty}R_L^{(D)}(Q^2)=1,
 \notag\\
 \lim_{Q^2\downarrow0}\lim_{L\to\infty}R_L^{(B)}(Q^2)
 &=\lim_{Q^2\downarrow0}\lim_{L\to\infty}R_L^{(C)}(Q^2)=0.
 \label{eq:benchmark-synthetic-double-limits}
\end{align}
At every fixed $L$, taking $Q^2\downarrow0$ first instead gives zero in all
four cases. The examples distinguish a threshold continuum, a low
finite-volume level that would imitate a pole over a restricted momentum
window, and collective concentration into a genuine atom. Their explicit
tail control also justifies the Stieltjes-transform limits; local vague
convergence alone would not control a noncompact transform kernel.

\subsection{The finite compact spectral branch}
\label{sec:app-compact-spectral-branch}

The frequency cutoff $R=5$ in
Eq.~\eqref{eq:native-tt-coefficient} has nonempty support only when
$\widehat\lambda<5\pi$. For the three-generation chiral assignment,
the scalar, one-form and spinor eigenvalues are respectively
$l(l+2)$ with even $l$, $(l+1)^2$ with odd $l$, and
$(n+3/2)^2$ with even right-handed and odd left-handed $n$.
Their multiplicities are $(l+1)^2$, $2l(l+2)$ and $(n+1)(n+2)$.
Thus the finite sum reduces exactly to the eight terms in
Table~\ref{tab:compact-spectral-rows}.

\begin{table}[htbp]
\centering
\begin{tabular}{@{}lrrl@{}}
\toprule
Carrier and degree & $\widehat\lambda$ & $d_jw_j$ & $\widehat m_j^2$\\
\midrule
scalar, $l=0$ & $0$ & $4$ & $3/4$\\
scalar, $l=2$ & $8$ & $36$ & $3/4$\\
colour vectors, $l=1$ & $4$ & $192$ & $3/2+\vartheta^2/6$\\
weak vectors, $l=1$ & $4$ & $72$ & $1+\vartheta^2/6$\\
abelian vector, $l=1$ & $4$ & $24$ & $\vartheta^2/6$\\
right-handed spinors, $n=0$ & $9/4$ & $168$ & $3\vartheta^2/8$\\
right-handed spinors, $n=2$ & $49/4$ & $1008$ & $3\vartheta^2/8$\\
left-handed spinors, $n=1$ & $25/4$ & $576$ & $3\vartheta^2/8$\\
\bottomrule
\end{tabular}
\caption{Nonzero contributions to the compact spectral sum with frequency cutoff $R=5$.
Eigenvalues and mass shifts are in the dimensionless endpoint units;
$d_jw_j$ includes modal multiplicity, matter content and channel weight.}
\label{tab:compact-spectral-rows}
\end{table}

The weights include four scalar components, eight colour vectors,
three weak vectors, one abelian vector, twenty-one right-handed and
twenty-four left-handed Weyl components. The channel weights are those
of Section~\ref{sec:native-tt-spectral-branch}. Higher modes have empty frequency intervals at this cutoff, so the
eight rows exhaust the retained sum. Tensor-carrier and ghost terms
have zero weight in the prescription of
Section~\ref{sec:native-tt-spectral-branch}.

For these entries all denominators are positive near $\vartheta=1/50$,
including the scalar constant mode because of its $3/4$ shift.
Consequently the endpoint formula and its source derivatives hold on
an open interval around that value. The finite evaluation and the analytic source derivatives therefore
refer to the same smooth branch.

For comparison, consider the sum without modal or frequency cutoffs. When its modal
weights have polynomial growth and $\widehat m_j^2\geq0$, the bound
$0\leq K_{\rm TT}\leq1$ proves absolute convergence. For a retained
set $\mathcal J_N$, write $A_\infty$ for the complete sum over the
same untruncated spectral towers and $A_{N,R}$ for the finite sum.
Then
\begin{align}
0\leq A_\infty-A_{N,R}\leq\frac1{64\pi^2}
\bigg[&
\sum_{j\in\mathcal J_N}d_jw_j e^{-\widehat\lambda_j/\pi}
 \operatorname{erfc}\!\left(\frac{B_j}{\sqrt\pi}\right)\notag\\
&+\sum_{j\notin\mathcal J_N}d_jw_j e^{-\widehat\lambda_j/\pi}\bigg].
\label{eq:app-compact-gaussian-tail}
\end{align}
Indeed the omitted frequency integral is bounded by
$e^{-\widehat\lambda_j/\pi}\int_{B_j}^\infty e^{-u^2/\pi}\dd u$,
and the integral is $\pi\operatorname{erfc}(B_j/\sqrt\pi)/2$.
For omitted modes use $B_j=0$. Positivity permits termwise summation.
The corresponding root error is exactly
$(A_\infty-A_{N,R})/\kappa_{\rm TT}$. It quantifies the difference between the finite sum and the complete
Gaussian readout.

\subsection{Local scalar energy under geometric compression}
\label{sec:app-local-scalar-energy}

Use $\Delta_g=\operatorname{div}_g\nabla$ in this subsection. In dimension
three the volume and inverse metric of $g_\omega=e^{2\omega}g$ give
\begin{align}
 \Delta_{g_\omega}f
 &=e^{-2\omega}(\Delta_gf+\nabla\omega\cdot\nabla f),\notag\\
 \mathcal R_{g_\omega}
 &=e^{-2\omega}(\mathcal R_g-4\Delta_g\omega
                                  -2|\nabla\omega|^2).
 \label{eq:app-scalar-conformal-change}
\end{align}
The first identity follows from the divergence formula; the second
follows by substituting
$\Gamma^k_{ij}(g_\omega)-\Gamma^k_{ij}(g)
=\delta_i^k\partial_j\omega+\delta_j^k\partial_i\omega
-g_{ij}\partial^k\omega$ into the Ricci contraction. Applying the
product rule to $\mathcal A_g(e^{\omega/2}f)$ then proves
\[
 \mathcal A_{g_\omega}=e^{-5\omega/2}\mathcal A_ge^{\omega/2}.
\]
For compression, $\omega=-\eta$, and the unitary volume transport is
$U_\eta=e^{-3\eta/2}$. This proves
Eq.~\eqref{eq:native-local-conformal-congruence} exactly, including
all curvature and derivative terms of the local source.

At the round reference $\mathcal A_g\geq3/(4L^2)$ on its form domain
$H^1(M)$. Multiplication by a smooth $e^\eta$ is an isomorphism of
this domain, so the transported forms remain closed and satisfy
\[
 \langle u,e^\eta\mathcal A_ge^\eta u\rangle
 \geq\frac3{4L^2}e^{2\min\eta}\|u\|^2.
\]
The positive square root is consequently well defined for the whole
smooth compression family. Differentiation at zero gives
$\dot{\mathcal A}=\eta\mathcal A+\mathcal A\eta$.
For smooth reference eigenvectors, differentiating the square root
may be written as the convergent resolvent pairing
\begin{equation}
 \langle u_i,\dot H u_j\rangle
 =\frac1\pi\int_0^\infty
 \frac{t^{1/2}\langle u_i,\dot{\mathcal A}u_j\rangle}
 {(M_i^2+t)(M_j^2+t)}\,\dd t
 =\frac{\langle u_i,\dot{\mathcal A}u_j\rangle}{M_i+M_j}.
 \label{eq:app-scalar-square-root-derivative}
\end{equation}
The common elliptic form domain and its positive lower bound justify
the derivative; equivalently, the finite matrix pairings solve
$H\dot H+\dot H H=\dot{\mathcal A}$. Substituting
$\dot{\mathcal A}_{ij}=(M_i^2+M_j^2)\eta_{ij}$ gives
Eq.~\eqref{eq:native-local-square-root-source}. In particular it remains
regular for equal eigenvalues. The theorem is stated for finite-energy
readouts with finitely supported occupations, so no interchange with
an unbounded occupation sum is required.

For distinct energies the off-diagonal coefficient is
\begin{equation}
 \frac{M_i^2+M_j^2}{M_i+M_j}
 =\frac{M_i+M_j}{2}
       +\frac{(M_i-M_j)^2}{2(M_i+M_j)}.
 \label{eq:app-coherent-energy-source}
\end{equation}
Thus the local positive density in
Eq.~\eqref{eq:native-local-scalar-charge} uses stationarity of the
occupation matrix. Coherent transitions are still determined by the
same source, but need not have that diagonal density representation.
For the constant mode the density is
$\sqrt3/(2LV_3)>0$. Its local compression charge is present even though
its TT vertex below vanishes.

The same construction has a covariant scalar version. In a unitary
frame for a generated Hermitian carrier, write $D_B=\dd+B$ and
\begin{equation}
 \mathcal A_{g,B}=D_B^*D_B+\mathcal R_g/8.
 \label{eq:app-covariant-scalar-energy}
\end{equation}
The connection is held fixed as a one-form under the metric source.
Since $D_B(fu)=\dd f\,u+fD_Bu$, the calculation in
Eq.~\eqref{eq:app-scalar-conformal-change} applies with covariant
derivatives and gives
$U_\eta\mathcal A_{g_\eta,B}U_\eta^{-1}
=e^\eta\mathcal A_{g,B}e^\eta$. At the round reference
$D_B^*D_B\geq0$ gives the same strict lower bound. All linear and
quadratic connection terms are retained in this identity. The local
energy density is consequently $\sum_i n_iM_i\|u_i(x)\|^2$ for any
unitary carrier with this scalar operator, with the same source
normalisation. It includes the change of matter energies in the
connection background, rather than adding free masses afterwards.

There is also a determined matter source for a variation $\delta B=a$.
At fixed geometry, differentiating the same operator and square root
gives
\begin{equation}
 \delta_BE_N[a]
 =\sum_i\frac{n_i}{M_i}\operatorname{Re}
     \int_M\langle D_Bu_i,a u_i\rangle\,\dd v_g .
 \label{eq:app-scalar-matter-current}
\end{equation}
Indeed $\delta\mathcal A=D_B^*a+a^*D_B$, and its diagonal
square-root derivative is divided by $2M_i$. This is the matter-current
contribution to a sourced connection equation. Infinitesimal unitary
frame changes give an operator commutator; its trace with a stationary
$N$ vanishes. Integration against a smooth gauge parameter therefore
gives the weak covariant conservation of this current on the compact
manifold. A source that also changes $B$ includes
Eq.~\eqref{eq:app-scalar-matter-current} in addition to its compression
charge. The connection's own energy and update are not fixed by the
matter operator alone.

\subsection{An explicit matter-visible compact TT space}
\label{sec:app-compact-tt-visibility}

Let $E_a$ be the left-invariant orthonormal frame with
$[E_a,E_b]=(2/L)\epsilon_{abc}E_c$ and coframe $\theta^a$.
The round connection is $\nabla_{E_a}E_b=L^{-1}\epsilon_{abc}E_c$.
For constant $q=q^{\mathsf T}$ with $\operatorname{tr}q=0$, set
$h_q^+=V_3^{-1/2}q_{ab}\theta^a\theta^b$. Its trace is zero and
\[
 \nabla^a(h_q^+)_{ab}
 =-\frac1{L\sqrt{V_3}}\sum_{a,c}\epsilon_{abc}q_{ac}=0.
\]
Write $(J_a)_{bc}=\epsilon_{bac}$ for the three infinitesimal rotation
matrices. On these constant tensors the rough Laplacian is
$-L^{-2}\sum_a[J_a,[J_a,q]]$. Direct multiplication gives
\begin{equation}
 -\sum_a[J_a,[J_a,q]]=6q-2\operatorname{tr}(q)I_3=6q,
 \qquad \Delta_{TT}h_q^+=\frac6{L^2}h_q^+.
 \label{eq:app-homogeneous-tt-eigenvalue}
\end{equation}
The tensors descend through the central quotient. Repeating the
construction in the right-invariant frame gives $h_q^-$. Their
inner products within each family are $\operatorname{tr}(qr)$.
Between families the Haar average is proportional to
$\operatorname{tr}(q)\operatorname{tr}(r)$ and vanishes. Choosing a
Frobenius-orthonormal basis of trace-free symmetric matrices therefore
gives ten orthonormal TT tensors. This constructs the spaces needed
here without identifying them with a massless helicity space.

For $R(U)=\operatorname{Ad}(U)$, normalized Haar integration gives
\begin{equation}
 \int_M R_{ib}R_{je}\,\dd v_g
 =\frac{V_3}{3}\delta_{ij}\delta_{be},\qquad
 E_aR=\frac2L R J_a.
 \label{eq:app-adjoint-modes}
\end{equation}
The Haar identity follows by invariance and the trace; the derivative
follows from right multiplication by $\exp(2tJ_a/L)$. Thus
$u_{ib}=(3/V_3)^{1/2}R_{ib}$ is the nine-dimensional scalar shell with
eigenvalue $8/L^2$. Substitution in
Eq.~\eqref{eq:native-geometric-vertex} uses
\[
 \sum_{a,d,c}q_{ad}(J_a)_{cb}(J_d)_{ce}
 =\operatorname{tr}(q)\delta_{be}-q_{be}=-q_{be}
\]
and proves $H[h_q^+]=4I_3\otimes q/(L^2\sqrt{V_3})$.
Left multiplication proves the second vertex, with the factors
interchanged. On a round Einstein metric the scalar-curvature
variation is
$D\mathcal R[h]=-\langle\operatorname{Ric},h\rangle
+\nabla^a\nabla^b h_{ab}-\Delta\operatorname{tr}h$,
which vanishes on both TT families. Hence the same vertices apply to
$\mathcal A_g$, including its curvature shift.
The square-root derivative on this equal-energy shell divides the
vertex by $2M_2$, proving Eq.~\eqref{eq:native-compact-tt-vertices}.

For a real occupation matrix $N$, the energy is $M_2\operatorname{Tr}N$.
Its source derivative is the trace of $N$ with the energy vertex, so
partial traces give Eq.~\eqref{eq:native-compact-tt-source}. Arbitrary
small perturbations of $I_9/9$ by $q\otimes I_3$ and $I_3\otimes r$
remain positive and vary the two trace-free partial traces
independently. The response therefore has rank ten on these source
directions. It is the rank of the map from occupations to compact tensor sources.

For real unit vectors $a,b$, a pure product occupation has source
$\alpha_2((aa^{\mathsf T})_0,(bb^{\mathsf T})_0)$.
Two such occupations obey
\begin{equation}
 \langle j_{\rm hom}(a,b),j_{\rm hom}(a',b')\rangle
 =\alpha_2^2\bigl[(a\cdot a')^2+(b\cdot b')^2-2/3\bigr].
 \label{eq:app-compact-mixed-source-sign}
\end{equation}
They have the same positive energy, but this contraction can have
either sign, while an isotropic occupation has zero source. The static
exchange calculation therefore combines these tensor vertices with the
compression density and constraint blocks. Its spatial dependence is
determined by the reduced response at external momentum.

\subsection{Critical multiplicity and the complete slope}
\label{sec:app-native-finite-rank-example}

For a separate algebraic benchmark, consider
\begin{equation}
 L(z)=\begin{pmatrix}2z&0&z\\0&3z&2z\\z&2z&1+z\end{pmatrix},
 \qquad
 S=\begin{pmatrix}1&1\\0&1\\1&0\end{pmatrix},\qquad R=S^*.
 \label{eq:app-native-finite-rank-example}
\end{equation}
Here $P=\operatorname{diag}(1,1,0)$,
$G=\operatorname{diag}(2,3)$ on its range, and
$\det L=z^2(6-5z)$. Eliminating the last component gives
\begin{equation}
 L_{\rm eff}(z)=z\begin{pmatrix}2&0\\0&3\end{pmatrix}
 -\frac{z^2}{1+z}\begin{pmatrix}1&2\\2&4\end{pmatrix},\qquad
 \operatorname*{Res}_{z=0}S^*L(z)^{-1}S
 =\begin{pmatrix}1/2&1/2\\1/2&5/6\end{pmatrix}.
 \label{eq:app-native-finite-rank-answer}
\end{equation}
The residue has determinant $1/6$, so it has rank two despite a common
pole. Replacing the source columns by $(1,0,1)^{\mathsf T}$ and
$(2,0,-1)^{\mathsf T}$ gives the rank-one residue
$\bigl(\begin{smallmatrix}1/2&1\\1&2\end{smallmatrix}\bigr)$.
It is the alignment of the projected sources, not the denominator,
that changes the rank.

The derivative terms in Eq.~\eqref{eq:native-schur-slope} are visible in
the two-channel example
\begin{equation}
 L(z)=\begin{pmatrix}1+2z&1+3z\\1+3z&1+5z\end{pmatrix},\qquad
 S=\begin{pmatrix}2\\1\end{pmatrix},\qquad
 S^*L(z)^{-1}S=\frac{1+10z}{z(1+z)}.
 \label{eq:app-native-schur-slope-example}
\end{equation}
The exact reduced slope is $2-3+5-3=1$, whereas the derivative of the
retained block alone is $2$. The full residue is $1$. These finite matrices illustrate how complementary channels and
source alignment affect the critical residue.

\subsection{Scale covariance of the compact response}
\label{sec:app-native-scale-identity}

The compact-spectrum implementation tracks the lowest selected TT mode
along $L(k)=c/k$. Its defining entries are
$p_{\rm int}^2=8/L^2$, $m_{\rm curv}^2=6/L^2$ and
$R_k=p_{\rm int}^2/(e^{p_{\rm int}^2/k^2}-1)$ in its analytic
exponential branch. For fixed $c>0$ put
\begin{equation}
 u=\frac8{c^2},\qquad v=\frac6{c^2},\qquad
 d(c)=u+v+\frac{u}{e^u-1}.
 \label{eq:app-native-scale-coefficient}
\end{equation}
Direct substitution, without fitting a slope, gives
\begin{equation}
 G_{\rm TT}(k)=\frac{1}{k^2d(c)},\qquad
 p_{\rm int}^2G_{\rm TT}(k)=\frac{u}{d(c)}.
 \label{eq:app-native-scale-identity}
\end{equation}
The identity is exact for the analytic exponential regulator.
Along this family the radius, curvature and internal momentum change
together. Comparison with a response at varying external momentum
therefore requires state, source and readout maps between the scales.

The separate mass-shift scan, including its fixed window and rescaled
vertex, is evaluated in Eq.~\eqref{eq:native-chi-spectral-ladder}.

The normalisation calculation likewise distinguishes the internal
long-root value $16/L^2$ from $p_{\rm TT}^2=8/L^2$ and the total
TT entry $14/L^2$. Its $Z_{\rm phys}=\lambda_{\rm long}/
(\lambda_{\rm long}+\sigma)$ is an internal normalisation factor.
With an observed anchor $M_P^2=(8\pi G_{\rm anchor})^{-1}$, the
expression $1/(8\pi Z_{\rm phys}M_P^2)$ equals
$G_{\rm anchor}/Z_{\rm phys}$. At $Z_{\rm phys}=1$ this reproduces the
anchor as an identity. The physical source-to-force normalisation is
instead determined by the complete mixed response in
Section~\ref{sec:native-static-exchange}.

\subsection{The long-root tensor source}
\label{sec:app-native-long-root-source}

Choose the right coframe convention
$\dd\theta_R^a=L^{-1}\epsilon_{abc}\theta_R^b\wedge\theta_R^c$,
$E_aR_{bi}=-(2/L)\epsilon_{abc}R_{ci}$.
For $a=(qRv)_b\theta_R^b$, symmetry and tracelessness of $q$ give
\begin{equation}
 \delta a=0,\qquad *\dd a=\frac4L a,\qquad
 \Delta_1 a=\frac{16}{L^2}a .
 \label{eq:app-long-root-curl}
\end{equation}
Indeed the derivative part of the curl is $2qRv/L$ and the coframe
part contributes the same amount. Left translation acts on $q$ by
orthogonal conjugation and right translation acts on $v$; the span is
therefore $(2,1)$. Haar orthogonality gives
$\int Rv(Rw)^T\dd v_g=(V_3/3)(v\cdot w)I_3$, so the factor
$(3/V_3)^{1/2}$ in the text makes this identification isometric.

To differentiate the Hodge operator, retain a fixed differential
one-form while varying the internal comparison metric by $h$.
At the round point put $B=*\dd a=4a/L$, and write
$N_g(a)=\|a\|_g^2$ and
$Q_g(a)=\|\dd a\|_g^2+\|\delta_ga\|_g^2$.
For trace-free $h$, inverse-metric differentiation gives
\begin{equation}
 \dot N_g(a)=-\int h^{ab}a_a a_b\,\dd v_g,\qquad
 \dot Q_g(a)=\int h^{ab}B_aB_b\,\dd v_g .
 \label{eq:app-long-root-form-source}
\end{equation}
The codifferential term has zero first derivative because $\delta a=0$.
In dimension three the two-form norm gives the plus sign in the
second identity. The source compressed to the degenerate eigenband
is the polarization of $\dot Q_g-(16/L^2)\dot N_g$.
Consequently, for $\|q\|_{\rm HS}=|v|=1$,
\begin{equation}
 \langle a_{q,v},H_{\rm lr}[h_t^-]a_{q,v}\rangle
 =\frac{32}{L^2\sqrt{V_3}}\operatorname{tr}(tq^2).
 \label{eq:app-long-root-pairing}
\end{equation}
Polarization and the same Haar identity prove
Eq.~\eqref{eq:native-long-root-tt-source} on the full fifteen-dimensional
band. This is the derivative after unitary comparison of the varying
one-form spaces; within a degenerate band the transport commutator
vanishes. The source is right-translation invariant and hence does not
mix this band with the opposite $(1,2)$ band at the same eigenvalue.
Its eigenvalues give the first-order splittings in the $(2,1)$ band.

For the unshifted spectral readout $\sqrt{\Delta_1}$, division by
$2\sqrt{16/L^2}$ gives the tensor source
$j_{\rm lr}(q,v)=4(q^2)_0/(L\sqrt{V_3})$ on these unit decomposable
states. A real trace-free $3\times3$ matrix satisfies
$\operatorname{tr}q^4=(\operatorname{tr}q^2)^2/2$, whence
\begin{equation}
 \|(q^2)_0\|_{\rm HS}^2=\frac16,\qquad
 \|j_{\rm lr}(q,v)\|_{\rm HS}=\frac4{\sqrt6L\sqrt{V_3}}>0.
 \label{eq:app-long-root-visible-source}
\end{equation}
On the other hand $\operatorname{Tr}_{\mathcal Q}\mathcal A_t=0$:
this trace is an invariant linear functional of a trace-free matrix.
Equal occupation of the complete band therefore has zero TT source.
The calculation provides an explicit bilinear coupling of the
one-form spectrum to an internal tensor source. Additional prescribed
mass shifts contribute their own derivatives; the Hodge value
$16/L^2$ has not been identified with the rough tensor value $6/L^2$
or the propagation entry in Eq.~\eqref{eq:app-native-scale-identity}.

The accompanying reproducibility material evaluates these finite
identities independently. Matrix and Haar calculations use exact
rational arithmetic where applicable; quadrature and finite
differences provide numerical comparisons for the spectral sums and
source derivatives. Implementation details and tolerances are given
with the code.

\subsection{Euclidean extraction in finite volume}
\label{sec:lattice-test}

The preceding examples illustrate two ways to determine the atomic
weight: directly from a Euclidean correlator, or from a response kernel
matched to the same positive measure. In a lattice realisation, let
$C_2(Q^2;a,L)$ be the connected, Ward-normalised tensor coefficient at
spacing $a$ and spatial extent $L$, with a fixed improvement convention.
The polynomial $P_2(Q^2;a,L)$ matches the symmetry-allowed contact terms
to the continuum subtraction. TT contraction is performed at nonzero
momentum, using conserved traceless sources or
Eq.~\eqref{eq:euclidean-tt-projector}. Restoration of the Ward identities
and rotational symmetry is part of the continuum limit. The continuum
and infinite-volume limits are taken in the scaling window
\begin{equation}
 a|Q|\ll1\ll L|Q|.
 \label{eq:lattice-scaling-window}
\end{equation}
The subsequent limit $Q^2\downarrow0$ uses the residue estimator
\begin{equation}
 R_2(Q^2;a,L)
 :=Q^2\,[C_2(Q^2;a,L)-P_2(Q^2;a,L)].
 \label{eq:lattice-residue-estimator}
\end{equation}
Its small-momentum limit is interpreted through the positive spectral
measure of the reflection-positive channel. Dependence on the fit
interval, subtraction order, discretisation, volume and tensor
representative contributes to the uncertainty in this extraction.
Proposition~\ref{prop:finite-volume-cumulative-atom} provides a
complementary characterisation: after the continuum and volume limits,
the weight of $[0,\epsilon]$ tends to $Z_0$ as $\epsilon$ decreases
through continuity radii. This includes atoms formed by collective
concentration of individually vanishing levels. A finite-momentum
plateau alone does not separate such an atom from a nearby continuum.

\subsection{Convergence of functional response calculations}
\label{sec:frg-kernel-test}

A functional calculation uses the source functional
\eqref{eq:paper1-composite-generating-functional} and the finite
approximation $\mathfrak T_N$ of
Eq.~\eqref{eq:declared-tt-truncation}. The stationarity equations and
readout determine its kernel, source maps and direct term through
Proposition~\ref{prop:composite-source-kernel-bridge}. Operator mixing,
the physical quotient and complementary channels enter before this
response is contracted to a scalar. The maps $\iota_N$ and $P_N$ place
all finite approximations on the common spaces of
Appendix~\ref{app:fredholm}.

On an analytic domain separated from thresholds, the critical Riesz
projection and both source overlaps determine the finite residue.
Convergence is governed by the operator and derivative bounds of
Eq.~\eqref{eq:app-residue-comparison-errors}, including the displacement
of the solution branch and the critical root. In particular,
Proposition~\ref{prop:app-root-residue-certificate} compares residues
at their respective roots; evaluating every derivative at zero would
omit this displacement. The telescoping estimates in
Proposition~\ref{prop:app-telescoping-error-budget} separate vertex,
momentum, complementary-channel and closure errors. Comparison of the
full correlator also includes its direct term and subtraction polynomial.

These estimates control the finite response. Identification of its
limit with a physical atom additionally uses endpoint convergence and
Theorem~\ref{thm:tt-fredholm-spectral-bridge}. A strictly positive lower
bound on the limiting residue establishes nonzero atomic weight.
Regulator and basis variation can reveal numerical sensitivity, whereas
the error estimates require uniform bounds in the stated norms.
Direct Minkowski-space spectral flows have been implemented for scalar
two- and four-point functions~\cite{Horak2023spectral}. For the tensor
channel, the reconstruction must reproduce the positive measure on the
physical source space. Agreement with a direct Euclidean extraction
then provides an independent comparison of reconstruction and truncation.
For the covariant stress tensor of
Theorem~\ref{thm:null-shell-exclusion}, the exact reference value is
$Z_0=0$. Matter exchange and the further gravitational interpretation
are treated in Section~\ref{sec:gravity-gates}.


\section*{Acknowledgments}
\label{sec:acknowledgments}

All numerical and symbolic calculations were carried out using standard
mathematical software. No original code is released. The author is
grateful to the reviewers for their constructive comments on this work.
The author thanks AI tools for language assistance. All scientific
content is the sole responsibility of the author.


\section*{Declarations}
\label{sec:declarations}

\textbf{Conflict of interest.} The author declares no competing interests.

\textbf{Data availability.} The synthetic spectral sequences and finite response models are defined
in Appendix~\ref{app:benchmarks}. No external numerical dataset is used.

\textbf{Funding.} No funding was received for this work.

\bibliographystyle{JHEP}
\bibliography{references}

@Article{Guo2026GeneralChannels,
  author        = {J. Guo},
  title         = {Coarse-Graining and the Classification of Long-Range Correlations in Quantum Field Theory},
  year          = {2026},
  eprint        = {2608.01989},
  archivePrefix = {arXiv},
  primaryClass  = {hep-th},
}

@Misc{Guo2026CompactSpectrumII,
  author = {J. Guo},
  title  = {The Spectrum of a Compact Internal Space {II}: Effective Couplings and Mass Scales},
  howpublished = {SSRN preprint},
  year   = {2026},
  doi    = {10.2139/ssrn.7351999},
  url    = {https://ssrn.com/abstract=7351999},
  note   = {August 21, 2026},
}

@Misc{Guo2026CompactSpectrumI,
  author = {J. Guo},
  title  = {The Spectrum of a Compact Internal Space {I}: Gauge Structure and Fermion Content},
  howpublished = {SSRN preprint},
  year   = {2026},
  doi    = {10.2139/ssrn.7352020},
  url    = {https://ssrn.com/abstract=7352020},
  note   = {August 21, 2026},
}

@Article{Kallen1952,
  author  = {G. K\"all\'en},
  title   = {On the Definition of the Renormalization Constants in Quantum Electrodynamics},
  journal = {Helv. Phys. Acta},
  year    = {1952},
  volume  = {25},
  pages   = {417--434},
  doi     = {10.5169/seals-112316},
}

@Book{Streater1964,
  author    = {R. F. Streater and A. S. Wightman},
  title     = {{PCT}, Spin and Statistics, and All That},
  publisher = {Benjamin},
  year      = {1964},
  address   = {New York},
}

@Book{Weinberg1995,
  author    = {S. Weinberg},
  title     = {The Quantum Theory of Fields, Vol.~1},
  publisher = {Cambridge University Press},
  year      = {1995},
  isbn      = {0-521-55001-7},
  address   = {Cambridge},
  doi       = {10.1017/CBO9781139644167},
}

@Article{Weinberg2020Massless,
  author        = {S. Weinberg},
  title         = {Massless Particles in Higher Dimensions},
  journal       = {Phys. Rev. D},
  volume        = {102},
  pages         = {095022},
  year          = {2020},
  doi           = {10.1103/PhysRevD.102.095022},
  eprint        = {2010.05823},
  archivePrefix = {arXiv},
  primaryClass  = {hep-th},
}

@Article{Fehre2023prl,
  author  = {J. Fehre and D. F. Litim and J. M. Pawlowski and M. Reichert},
  title   = {Lorentzian Quantum Gravity and the Graviton Spectral Function},
  journal = {Phys. Rev. Lett.},
  year    = {2023},
  volume  = {130},
  pages   = {081501},
  doi     = {10.1103/PhysRevLett.130.081501},
}

@Article{Sakharov1967,
  author  = {A. D. Sakharov},
  title   = {Vacuum Quantum Fluctuations in Curved Space and the Theory of Gravitation},
  journal = {Sov. Phys. Dokl.},
  year    = {1968},
  volume  = {12},
  pages   = {1040},
  note    = {[Dokl. Akad. Nauk SSSR 177, 70 (1967)]},
  url     = {https://www.math.nyu.edu/~tschinke/papers/Sakharov/sakharov-eqg.pdf},
}

@Article{Weinberg1965,
  author  = {S. Weinberg},
  title   = {Photons and Gravitons in {$S$}-Matrix Theory: Derivation of Charge Conservation and Equality of Gravitational and Inertial Mass},
  journal = {Phys. Rev.},
  year    = {1964},
  doi     = {10.1103/PhysRev.135.B1049},
  volume  = {135},
  pages   = {B1049--B1056},
}

@Article{WeinbergWitten1980,
  author  = {S. Weinberg and E. Witten},
  title   = {Limits on Massless Particles},
  journal = {Phys. Lett. B},
  year    = {1980},
  doi     = {10.1016/0370-2693(80)90212-9},
  volume  = {96},
  pages   = {59--62},
}

@Article{OsbornPetkou1994,
  author  = {H. Osborn and A. Petkou},
  title   = {Implications of Conformal Invariance in Field Theories for General Dimensions},
  journal = {Ann. Phys.},
  year    = {1994},
  doi     = {10.1006/aphy.1994.1045},
  volume  = {231},
  pages   = {311--362},
}

@Article{Visser2002,
  author  = {M. Visser},
  title   = {{Sakharov}'s Induced Gravity: A Modern Perspective},
  journal = {Mod. Phys. Lett. A},
  year    = {2002},
  doi          = {10.1142/S0217732302006886},
  volume  = {17},
  pages   = {977--992},
}

@Article{Amati1981,
  author  = {D. Amati and G. Veneziano},
  title   = {Metric from Matter},
  journal = {Phys. Lett. B},
  year    = {1981},
  doi          = {10.1016/0370-2693(81)90779-6},
  volume  = {105},
  pages   = {358--362},
}

@Article{Amati1982,
  author  = {D. Amati and G. Veneziano},
  title   = {A Unified Gauge and Gravity Theory with Only Matter Fields},
  journal = {Nucl. Phys. B},
  year    = {1982},
  doi          = {10.1016/0550-3213(82)90201-2},
  volume  = {204},
  pages   = {451--476},
}

@Article{Maitiniyazi2026emergence,
  author        = {Y. Maitiniyazi and M. Yamada},
  title         = {Emergence of Dynamical Tensor Fields in Composite Models of Gravity},
  journal       = {Phys. Rev. D},
  year          = {2026},
  volume        = {113},
  pages         = {084057},
  doi           = {10.1103/1y2m-xxt7},
}

@Article{Horak2023spectral,
  author        = {J. Horak and F. Ihssen and J. M. Pawlowski and J. Wessely and N. Wink},
  title         = {Scalar spectral functions from the spectral functional renormalization group},
  journal       = {Phys. Rev. D},
  volume        = {110},
  pages         = {056009},
  year          = {2024},
  doi           = {10.1103/PhysRevD.110.056009},
  eprint        = {2303.16719},
  archivePrefix = {arXiv},
  primaryClass  = {hep-th},
}

@Article{Jenkins2009,
  author  = {A. Jenkins},
  title   = {Constraints on Emergent Gravity},
  journal = {Int. J. Mod. Phys. D},
  year    = {2009},
  doi     = {10.1142/S0218271809015941},
  volume  = {18},
  pages   = {2249--2255},
}

@Article{Salpeter1951,
  author  = {E. E. Salpeter and H. A. Bethe},
  title   = {A Relativistic Equation for Bound-State Problems},
  journal = {Phys. Rev.},
  year    = {1951},
  doi     = {10.1103/PhysRev.84.1232},
  volume  = {84},
  pages   = {1232--1242},
}

@Article{Lehmann1954,
  author  = {H. Lehmann},
  title   = {{\"U}ber {Eigenschaften} von {Ausbreitungsfunktionen} und {Renormierungskonstanten} quantisierter {Felder}},
  journal = {Nuovo Cim.},
  volume  = {11},
  year    = {1954},
  pages   = {342--357},
  doi     = {10.1007/BF02783624},
}

@Article{OsterwalderSchrader1973,
  author  = {K. Osterwalder and R. Schrader},
  title   = {Axioms for {Euclidean} {Green}'s Functions},
  journal = {Commun. Math. Phys.},
  volume  = {31},
  year    = {1973},
  pages   = {83--112},
  doi     = {10.1007/BF01645738},
}

@Article{Wetterich1993,
  author  = {C. Wetterich},
  title   = {Exact Evolution Equation for the Effective Potential},
  journal = {Phys. Lett. B},
  volume  = {301},
  year    = {1993},
  pages   = {90--94},
  doi     = {10.1016/0370-2693(93)90726-X},
}

@Article{FloerchingerWetterich2009,
  author        = {S. Floerchinger and C. Wetterich},
  title         = {Exact Flow Equation for Composite Operators},
  journal       = {Phys. Lett. B},
  volume        = {680},
  year          = {2009},
  pages         = {371--376},
  doi           = {10.1016/j.physletb.2009.09.014},
  eprint        = {0905.0915},
  archivePrefix = {arXiv},
  primaryClass  = {hep-th},
}

@Article{CornwallJackiwTomboulis1974,
  author  = {J. M. Cornwall and R. Jackiw and E. Tomboulis},
  title   = {Effective Action for Composite Operators},
  journal = {Phys. Rev. D},
  volume  = {10},
  year    = {1974},
  pages   = {2428--2445},
  doi     = {10.1103/PhysRevD.10.2428},
}

@Article{CollinsScalise1994,
  author        = {J. C. Collins and R. J. Scalise},
  title         = {Renormalization of Composite Operators in {Yang--Mills} Theories Using a General Covariant Gauge},
  journal       = {Phys. Rev. D},
  volume        = {50},
  year          = {1994},
  pages         = {4117--4136},
  doi           = {10.1103/PhysRevD.50.4117},
  eprint        = {hep-ph/9403231},
  archivePrefix = {arXiv},
}

@Article{JoglekarLee1976,
  author  = {S. D. Joglekar and B. W. Lee},
  title   = {General Theory of Renormalization of Gauge Invariant Operators},
  journal = {Ann. Phys.},
  volume  = {97},
  year    = {1976},
  pages   = {160--215},
  doi     = {10.1016/0003-4916(76)90225-6},
}

@Article{Barnich2000,
  author        = {G. Barnich and F. Brandt and M. Henneaux},
  title         = {Local {BRST} Cohomology in Gauge Theories},
  journal       = {Phys. Rept.},
  volume        = {338},
  year          = {2000},
  pages         = {439--569},
  doi           = {10.1016/S0370-1573(00)00049-1},
  eprint        = {hep-th/0002245},
  archivePrefix = {arXiv},
}

@Book{Kato1995,
  author    = {T. Kato},
  title     = {Perturbation Theory for Linear Operators},
  publisher = {Springer},
  address   = {Berlin},
  year      = {1995},
  series    = {Classics in Mathematics},
  edition   = {2},
  doi       = {10.1007/978-3-642-66282-9},
}

@Book{Chatelin1983,
  author    = {F. Chatelin},
  title     = {Spectral Approximation of Linear Operators},
  publisher = {Academic Press},
  address   = {New York},
  year      = {1983},
}

@Book{TrefethenEmbree2005,
  author    = {L. N. Trefethen and M. Embree},
  title     = {Spectra and Pseudospectra: The Behavior of Nonnormal Matrices and Operators},
  publisher = {Princeton University Press},
  address   = {Princeton},
  year      = {2005},
  doi       = {10.1515/9780691213101},
}

@Article{Deser1970,
  author  = {S. Deser},
  title   = {Self-Interaction and Gauge Invariance},
  journal = {Gen. Rel. Grav.},
  volume  = {1},
  year    = {1970},
  pages   = {9--18},
  doi     = {10.1007/BF00759198},
}

@Article{Boulanger2000,
  author        = {N. Boulanger and T. Damour and L. Gualtieri and M. Henneaux},
  title         = {Inconsistency of Interacting, Multi-Graviton Theories},
  journal       = {Nucl. Phys. B},
  volume        = {597},
  year          = {2001},
  pages         = {127--171},
  doi           = {10.1016/S0550-3213(00)00718-5},
  eprint        = {hep-th/0007220},
  archivePrefix = {arXiv},
}

@Article{Porrati2008,
  author        = {M. Porrati},
  title         = {Universal Limits on Massless High-Spin Particles},
  journal       = {Phys. Rev. D},
  volume        = {78},
  year          = {2008},
  pages         = {065016},
  doi           = {10.1103/PhysRevD.78.065016},
  eprint        = {0804.4672},
  archivePrefix = {arXiv},
  primaryClass  = {hep-th},
}

@Article{Giulini2018Laue,
  author        = {Domenico Giulini},
  title         = {{Laue}'s Theorem Revisited: Energy-Momentum Tensors,
                   Symmetries, and the Habitat of Globally Conserved Quantities},
  journal       = {Int. J. Geom. Meth. Mod. Phys.},
  volume        = {15},
  year          = {2018},
  pages         = {1850182},
  doi           = {10.1142/S0219887818501827},
  eprint        = {1808.09320},
  archivePrefix = {arXiv},
  primaryClass  = {math-ph},
}

@Article{Porrati2012,
  author        = {M. Porrati},
  title         = {Old and New No-Go Theorems on Interacting Massless Particles in Flat Space},
  year          = {2012},
  eprint        = {1209.4876},
  archivePrefix = {arXiv},
  primaryClass  = {hep-th},
}

\end{document}